\newtheorem{theorem}{Theorem}
\newtheorem{corollary}{Corollary}
\newtheorem{definition}{Definition}
\newtheorem{observation}{Observation}
\newtheorem{lemma}{Lemma}
\newtheorem{proposition}{Proposition}
\newenvironment{proof}[1][Proof]{\noindent\textbf{#1.} }{\ \rule{0.5em}{0.5em}}
\newcommand{\bbbn}{\mathbb{N}}
\newcommand{\cc}{\mathcal{C}}
\newcommand{\cj}{\mathcal{J}}
\newcommand{\bs}{\backslash}
\newcommand{\ap}{\text{\emph{all paths}}}
\newcommand{\reach}{reach}
\newcommand{\ALG}{\text{ALG}}
\newcommand{\OPT}{\text{OPT}}
  \renewenvironment{thebibliography}[1]{
\begin{oldthebibliography}{#1}
\setlength{\parskip}{0.1ex} \setlength{\itemsep}{0.9ex}} {\end{oldthebibliography}}
\begin{document}

\title{\Large Temporal Network Optimization Subject to Connectivity Constraints\footnote{This work was supported in part 
by (i) the project ``Foundations of Dynamic Distributed Computing Systems'' (\textsf{FOCUS}) which is implemented under 
the ``ARISTEIA'' Action of the  Operational Programme ``Education and Lifelong Learning'' and is co-funded by the 
European Union (European Social Fund) and Greek National Resources, (ii) the FET EU IP project \textsf{MULTIPLEX} under 
contract no 317532, and (iii) the EPSRC Grants EP/P020372/1, EP/P02002X/1, and EP/K022660/1. A preliminary version of this work has appeared at ICALP 2013 \cite{MMCS13}.}}

\author{George B. Mertzios\thanks{Department of Computer Science, Durham University, UK. 
Email: \texttt{george.mertzios@durham.ac.uk}}
\and 
Othon Michail\thanks{Department of Computer Science, University of Liverpool, UK. 
Email: \texttt{Othon.Michail@liverpool.ac.uk}}
\and 
Paul G. Spirakis\thanks{Department of Computer Science, University of Liverpool, UK and Department of Computer Engineering and Informatics, University of Patras, Greece. 
Email: \texttt{P.Spirakis@liverpool.ac.uk}}}
\date{\vspace{-0.8cm}}

\maketitle

\begin{abstract}
In this work we consider \emph{temporal networks}, i.e.~networks defined by a \emph{labeling} $\lambda$ assigning to each edge of an \emph{underlying graph} $G$ a set of \emph{discrete} time-labels. The labels of an edge, which are natural numbers, indicate the discrete time moments at which the edge is available. We focus on \emph{path problems} of temporal networks. In particular, we consider \emph{time-respecting} paths, i.e.~paths whose edges are assigned by $\lambda$ a strictly increasing sequence of labels. We begin by giving two efficient algorithms for computing shortest time-respecting paths on a temporal network. We then prove that there is a \emph{natural analogue of Menger's theorem} holding for arbitrary temporal networks. Finally, we propose two \emph{cost minimization parameters} for temporal network design. One is the \emph{temporality} of $G$, in which the goal is to minimize the maximum number of labels of an edge, and the other is the \emph{temporal cost} of $G$, in which the goal is to minimize the total number of labels used. Optimization of these parameters is performed subject to some \emph{connectivity constraint}. We prove several lower and upper bounds for the temporality and the temporal cost of some very basic graph families such as rings, directed acyclic graphs, and trees.\newline

\noindent \textbf{Keywords:} Temporal network, graph labeling, Menger's theorem, optimization, temporal connectivity, hardness of approximation.
\end{abstract}

\section{Introduction}
\label{sec:intro}

A \emph{temporal} (or \emph{dynamic}) \emph{network} is, loosely speaking, a network that changes with time. This notion encloses a great variety of both modern and traditional networks such as information and communication networks, social networks, transportation networks, and several physical systems. In the literature of traditional communication networks, the network topology is rather static, i.e.~topology modifications are rare and they are mainly due to link failures and congestion. 
However, most modern communication networks such as mobile ad hoc, sensor, peer-to-peer, opportunistic, and delay-tolerant networks are inherently dynamic and it is often the case that this dynamicity is of a very high rate. In social networks, the topology usually represents the social connections between a group of individuals and it changes as the social relationships between the individuals are updated, or as existing individuals leave, or new individuals enter the group. In a transportation network, there is usually some fixed network of routes and a set of transportation units moving over these routes and dynamicity refers to the change of the positions of the transportation units in the network as time passes. Physical systems of interest may include several systems of interacting particles. 

In this work, embarking from the foundational work of Kempe \emph{et al.}~\cite{KKK00}, we consider \emph{discrete time}, that is, we consider networks in which changes occur at discrete moments in time, e.g.~days. This choice is not only a very natural abstraction of many real systems but also gives to the resulting models a purely combinatorial flavor. In particular, we consider those networks that can be described via an underlying graph $G$ and a labeling $\lambda$ assigning to each edge of $G$ a (possibly empty) set of discrete labels. Note that this is a generalization of the single-label-per-edge model used in~\cite{KKK00}, as we allow many time-labels to appear on an edge. These labels are drawn from the natural numbers and indicate the discrete moments in time at which the corresponding connection is available. For example, in the case of a communication network, availability of a communication link at some time $t$ may mean that a communication protocol is allowed to transmit a data packet over that link at time $t$.

In this work, we initiate the study of the following fundamental network design problem: ``\emph{Given an underlying (di)graph $G$, assign labels to the edges of $G$ so that the resulting temporal graph $\lambda (G)$ minimizes some parameter while satisfying some connectivity property}''. 
In particular, we consider two cost optimization parameters for a given graph $G$. The first one, called \emph{temporality} of $G$, measures the maximum number of labels that an edge of $G$ has been assigned. The second one, called \emph{temporal cost} of $G$, measures the total number of labels that have been assigned to all edges of $G$  (i.e.~if $|\lambda(e)|$ denotes the number of labels assigned to edge $e$, we are interested in $\sum_{e\in E} |\lambda(e)|$). That is, if we interpret the number of assigned labels as a measure of \emph{cost}, the temporality (resp.~the temporal cost)\ of $G$ is a measure of the decentralized (resp.~centralized) cost of the network, where only the cost of individual edges (resp.~the total cost over all edges) is considered. Each of these two cost measures can be minimized subject to some particular connectivity property $\mathcal{P}$ that the temporal graph $\lambda (G)$ has to satisfy. In this work, we consider two very basic connectivity properties. The first one, that we call the \emph{all paths} property, requires the temporal graph to preserve every simple path of its underlying graph, where by ``preserve a path of $G$'' we mean in this work that the labeling should provide at least one strictly increasing sequence of labels on the edges of that path, in which case we also say that the path is \emph{time-respecting}. 

Before describing our second connectivity property let us give a simple illustration of temporality minimization. We are given a directed ring $u_1,u_2,\ldots,u_n$ and we want to determine the temporality of the ring subject to the all paths property. That is, we want to find a labeling $\lambda$ that preserves every simple path of the ring and at the same time minimizes the maximum number of labels of an edge. Looking at Figure~\ref{fig:ring}, it is immediate to observe that an increasing sequence of labels on the edges of path $P_1$ implies a decreasing pair of labels on edges $(u_{n-1},u_n)$ and $(u_1,u_2)$. On the other hand, path $P_2$ uses first $(u_{n-1},u_n)$ and then $(u_1,u_2)$ thus it requires an increasing pair of labels on these edges. It follows that in order to preserve both $P_1$ and $P_2$ we have to use a second label on at least one of these two edges, thus the temporality is at least 2. Next, consider the labeling that assigns to each edge $(u_i,u_{i+1})$ the labels $\{i, n+i\}$, where $1\leq i\leq n$ and $u_{n+1}=u_1$. It is not hard to see that this labeling preserves all simple paths of the ring. Since the maximum number of labels that it assigns to an edge is 2, we conclude that the temporality is also at most 2. In summary, the temporality of preserving all simple paths of a directed ring is 2.   

\begin{figure}[!hbtp]
\centering{
\includegraphics[width=0.4\textwidth]{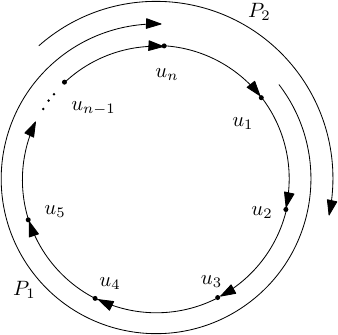}
}
\caption{Path $P_2$ forces a second label to appear on either $(u_{n-1},u_n)$ or $(u_1,u_2)$.} \label{fig:ring}
\end{figure}

The other connectivity property that we define, called the \emph{reach} property, requires the temporal graph to preserve a path from node $u$ to node $v$ whenever $v$ is reachable from $u$ in the underlying graph. Furthermore, the minimization of each of our two cost measures can be affected by some problem-specific constraints on the labels that we are allowed to use. We consider here one of the most natural constraints, namely an upper bound of the \emph{age} of the constructed labeling $\lambda$, where the age of a labeling $\lambda$ is defined to be equal to the maximum label of $\lambda$ minus its minimum label plus 1. Now the goal is to minimize the cost parameter, e.g.~the temporality, satisfy the connectivity property, e.g.~\emph{all paths}, and additionally guarantee that the age does not exceed some given natural $k$. Returning to the ring example, it is not hard to see, that if we additionally restrict the age to be at most $n-1$ then we can no longer preserve all paths of a ring using at most 2 labels per edge. In fact, we must now necessarily use the worst possible number of labels, i.e.~$n-1$ on every edge. 

Minimizing such parameters may be crucial as, in most real networks, making a connection available and maintaining its availability does not come for free. For example, in wireless sensor networks the cost of making edges available is directly related to the power consumption of keeping nodes awake, of broadcasting, of listening to the wireless channel, and of resolving the resulting communication collisions. The same holds for transportation networks where the goal is to achieve good connectivity properties with as few transportation units as possible. At the same time, such a study is important from a purely graph-theoretic perspective as it gives some first insight into the structure of specific families of temporal graphs. To make this clear, consider again the ring example. Proving that the temporality of preserving all paths of a ring is 2 at the same time proves the following. If a \emph{temporal ring} is defined as a ring in which all nodes can communicate clockwise to all other nodes via time-respecting paths then \emph{no temporal ring exists with fewer than $n+1$ labels}. This, though an easy one, is a structural result for temporal graphs. Finally, we believe that our results are a first step towards answering the following fundamental question: ``\emph{To what extent can algorithmic and structural results of graph theory be carried over to temporal graphs?}''. For example, is there an analogue of Menger's theorem for temporal graphs? One of the results of the present work is an affirmative answer to the latter question.

\subsection{Related Work}
\label{subsec:rw}

\noindent\textbf{Labeled Graphs.} Labeled graphs have been widely used in Computer Science and Mathematics, e.g.~in Graph Coloring \cite{MR02}. In our work, labels correspond to moments in time and the properties of labeled graphs that we consider are naturally \emph{temporal properties}. Note, however, that any property of a graph labeled from a discrete set of labels corresponds to some temporal property if interpreted appropriately. For example, a proper edge-coloring, i.e.~a coloring of the edges in which no two adjacent edges share a common color, corresponds to a temporal graph in which no two adjacent edges share a common label, i.e.~no two adjacent edges ever appear at the same time. Though we focus on properties with natural temporal meaning, our definitions are generic and do not exclude other, yet to be defined, properties that may prove important in future applications.  

\noindent\textbf{Single-label Temporal Graphs and Menger's Theorem.} The model of temporal graphs that we consider in this work is a direct extension of the single-label 
model studied in \cite{Be96} and \cite{KKK00} to allow for many labels per edge. The main result of \cite{Be96} was that in single-label 
networks the max-flow min-cut theorem holds with unit capacities for time-respecting paths. In \cite{KKK00}, Kempe \emph{et al.}, among 
other things, proved that a fundamental property of classical graphs does not carry over to 
their temporal counterparts. In particular, they proved that there is no analogue of Menger's theorem, at least in its original formulation, 
for arbitrary single-label temporal networks and that the computation of the number of node-disjoint $s$-$t$ time-respecting paths is NP-complete. 
\emph{Menger's theorem} 
states that the maximum number of node-disjoint $s$-$t$ paths is equal to the minimum number of nodes 
needed to separate $s$ from $t$ (see \cite{Bo98}). In this work, we go a step ahead showing that if one reformulates Menger's theorem 
in a way that takes time into account then a very natural temporal analogue of Menger's theorem is obtained. Both of the above papers, 
consider a path as \emph{time-respecting} if its edges have non-decreasing labels. In the present work, we depart from this assumption and 
consider a path as time-respecting if its edges have \emph{strictly increasing} labels. Our choice is very well motivated by recent work in 
dynamic communication networks. If it takes one time unit to transmit a data packet over 
a link then a packet can only be transmitted over paths with strictly increasing availability times. 

\noindent\textbf{Continuous Availabilities (Intervals).} Some authors have assumed that an edge may be available for a whole time-interval $[t_1,t_2]$ or several such intervals and not just for discrete moments as we assume here. This is a clearly natural assumption but the techniques used in those works are quite different from those needed in the discrete case \cite{XFJ03,FT98}. 

\noindent\textbf{Dynamic Distributed Networks.} In recent years, there is a growing interest in distributed computing systems that are inherently dynamic. This has been mainly driven by the advent of low-cost wireless communication devices and the development of efficient wireless communication protocols. Apart from the huge amount of work that has been devoted to applications, there is also a steadily growing concrete set of foundational work. A notable set of works has studied (distributed) computation in \emph{worst-case} dynamic networks in which the topology may change arbitrarily from round to round subject to some constraints that allow for bounded end-to-end communication~\cite{OW05,KLO10,MCS12b-journal,DPRS13}. Population protocols \cite{AADFP06} and variants \cite{MCS11-2} are collections of finite-state agents that move arbitrarily like a soup of particles and interact in pairs when they come close to each other. The goal is there for the population to compute (i.e.~agree on) something useful in the limit in such an adversarial setting. Another interesting direction assumes that the dynamicity of the network is a result of randomness. Here the interest is on determining ``good'' properties of the dynamic network that hold with high probability, such as small (temporal) diameter, and on designing protocols for distributed tasks \cite{CFTE08,AKL08}. For introductory texts on the above lines of research in dynamic distributed networks the reader is referred to \cite{CFQS12,MCS11,Sc02}.

\noindent\textbf{Distance Labeling.} A distance labeling of a graph $G$ is an assignment of unique labels to the vertices of $G$ so that the distance between any two vertices can be inferred from their labels alone. The goal is to minimize some parameter of the labeling and to provide a (hopefully fast) decoder algorithm for extracting a distance from two labels \cite{GPPR01,KKKP04}. There are several differences between a distance labeling and the time-labelings that we consider in this work. First of all, a distance labeling is being assigned on the vertices and not on the edges. Moreover, in distance labeling, one usually seeks the most compact set of labels (in binary length) that still guarantees efficient decoding. That is, the labeling parameter to be minimized is the binary length of an appropriate encoding, which is quite different from our cost parameters. Finally, the optimization constraint there is efficient decoding while in our case the constraints have to do with connectivity properties of the labeled graph. 

Also, we encourage the interested reader to see \cite{Mi16} for a recent introductory text on the recent algorithmic progress on temporal graphs.

\subsection{Contribution}
\label{subsec:con}

In \S \ref{sec:prel}, we formally define the model of temporal graphs under consideration and provide all further necessary definitions. The rest of the paper is partitioned into two parts. Part I focuses on journey problems for temporal graphs. In particular, in \S \ref{sec:journeys}, we give two efficient algorithms for computing shortest time-respecting paths. Then in \S \ref{sec:menger} we present an analogue of Menger's theorem which we prove valid for arbitrary temporal graphs. We apply our Menger's analogue to simplify the proof of a recent result on distributed token gathering. Part II studies the problem of designing a temporal graph optimizing some parameters while satisfying some connectivity constraints. Specifically, in \S \ref{min-cost-connectivity-sec} we formally define the temporality and temporal cost optimization metrics for temporal graphs. In \S \ref{basic-properties-cost-subsec}, we provide several upper and lower bounds for the temporality of some fundamental graph families such as rings, directed acyclic graphs (DAGs), and trees, as well as an interesting trade-off between the temporality and the age of rings. Furthermore, we provide in \S \ref{generic-method-subsec} a generic method for computing a lower bound of the temporality of an arbitrary graph w.r.t. the $\ap$ property, and we illustrate its usefulness in cliques, close-to-complete bipartite subgraphs, and planar graphs. In \S \ref{cost-computation-subsec}, we consider the temporal cost of a digraph $G$ w.r.t. the $\reach$ property, when additionally the age of the resulting labeling $\lambda (G)$ is restricted to be the smallest possible. We prove that this problem is hard to approximate, i.e.~there exists no PTAS unless P=NP. To prove our claim, we first prove (which may be of interest in its own right) that the Max-XOR($3$) problem is APX-hard via a PTAS reduction from Max-XOR. In the Max-XOR($3$) problem, we are given a $2$-CNF formula $\phi $, every literal of which appears in at most 3 clauses, and we want to compute the greatest number of clauses of $\phi $ that can be simultaneously XOR-satisfied. Then we provide a PTAS reduction from Max-XOR$(3)$ to our temporal cost minimization problem. On the positive side, we provide an $(r(G)/n)$-factor approximation algorithm for the latter problem, where $r(G)$ denotes the total number of reachabilities in $G$. Finally, in \S \ref{sec:conc} we conclude and give further research directions that are opened by our work.

\section{Preliminaries}
\label{sec:prel}

\subsection{A Model of Temporal Graphs}

Given a (di)graph $G=(V,E)$, \footnote{The reason that we do not consider only digraphs and then allow undirected graphs to result as their special case, is that in that way an undirected edge would formally consist of two antiparallel edges. This would allow those edges to be labeled differently, unless we introduced an additional constraint preventing it. We've chosen to avoid this by considering explicit undirected graphs (whenever required) with at most one bidirectional edge per pair of nodes.} a \emph{labeling} of $G$ is a mapping $\lambda:E\rightarrow 2^\bbbn$, that is, a labeling assigns to each edge of $G$ a (possibly empty) \footnote{The reader may be wondering whether it is pointless to allow the assignment of no labels to an edge $e$ of $G$, as it would have been equivalent to delete $e$ from $G$ in the first place. Even though this is true for temporal graphs provided as input, it isn't for temporal graphs that will be \emph{designed} by an algorithm based on an underlying graph. In the latter case, it is the algorithm's task to decide whether some of the provided edges need not be ever made available.} set of natural numbers, called \emph{labels}.

\begin{definition}
\label{temporal-graph-def}
Let $G=(V,E)$ be a (di)graph and $\lambda$ be a labeling of $G$. Then $\lambda (G)$ is the \emph{temporal graph} (or \emph{dynamic graph} \footnote{Even though both names are almost equally used in the literature, in this paper we have chosen to use the term ``temporal'' in order to avoid confusion of readers that are more familiar with the use of the term ``dynamic'' to refer to dynamically updated instances, with which usually an algorithm has to deal in an online way (including the rich literature of problems in which the algorithm has to maintain a graph property that is being disturbed by adversarial graph modifications).}) of $G$ with respect to $\lambda$. Furthermore, $G$ is the \emph{underlying graph} of $\lambda (G)$.
\end{definition}

We denote by $\lambda(E)$ the multiset of all labels assigned to the underlying graph by the labeling $\lambda$ and by $|\lambda|= |\lambda(E)|$ their cardinality (i.e.~$|\lambda|=\sum_{e\in E} |\lambda(e)|$). We also denote by $\lambda_{\min}= \min\{l\in \lambda(E)\}$ the minimum label and by $\lambda_{\max}= \max\{l\in \lambda(E)\}$ the maximum label assigned by $\lambda$. We define the \emph{age} of a temporal graph $\lambda(G)$ as $\alpha(\lambda)= \lambda_{\max}-\lambda_{\min}+1$. Note that in case $\lambda_{\min}=1$ then we have $\alpha(\lambda)=\lambda_{\max}$. For every graph $G$ we denote by $\mathcal{L}_{G}$ the set of all possible labelings $\lambda $ of $G$. Furthermore, for every $k\in \mathbb{N}$, 
we define $\mathcal{L}_{G,k}=\{\lambda \in \mathcal{L}_{G}:\alpha (\lambda )\leq k\}$.

\subsection{Further Definitions}

For every time $r\in\bbbn$, we define the $r$\emph{th instance of a temporal graph $\lambda(G)$} as the static graph $\lambda(G,r)= (V,E(r))$, where $E(r)= \{e\in E: r\in\lambda(e)\}$ is the (possibly empty) set of all edges of the underlying graph $G$ that are assigned label $r$ by labeling $\lambda$. A temporal graph $\lambda(G)$ may be also viewed as a \emph{sequence of static graphs} $(G_1,G_2,\ldots,G_{\alpha(\lambda)})$, where $G_i=\lambda(G,\lambda_{\min}+i-1)$ for all $1\leq i\leq\alpha(\lambda)$. 
Another, often convenient, representation of a temporal graph is the following.

\begin{definition}
The \emph{static expansion} \footnote{The notion of static expansion is related to the notion of \emph{time-expanded graphs} of temporal graphs such as periodic, or resulting from public transportation networks (cf. \cite{SWW00,MSWZ07}).} of a temporal graph $\lambda(G)$ is a \emph{static digraph} $H=(S,A)$, and in particular a DAG, defined as follows. If $V=\{u_1,u_2,\ldots,u_n\}$ then $S= \{u_{ij}: \lambda_{\min}-1\leq i\leq \lambda_{\max},1\leq j\leq n\}$ and $A=\{(u_{(i-1)j},u_{ij^\prime}):$ if $j=j^\prime$ or $(u_j,u_j^\prime)\in E(i)$ for some $\lambda_{\min}\leq i\leq \lambda_{\max}\}$. In words, we create $\alpha(\lambda)+1$ copies of $V$ representing the nodes over time (time-nodes) and add outgoing edges from time-nodes of one level only to time-nodes of the next level. In particular, we connect a time-node $u_{(i-1)j}$ to its own subsequent copy $u_{ij}$ and to every time node $u_{ij^\prime}$ s.t.~$(u_j,u_j^\prime)$ is an edge of $\lambda(G)$ at time $i$.  
\end{definition}

A \emph{journey} (or \emph{time-respecting path}) $J$ of a temporal graph $\lambda(G)$ is a path $(e_1,e_2,$ $\ldots,e_k)$ of the underlying graph $G=(V,E)$, where $e_i\in E$, together with labels $l_1<l_2<\ldots<l_k$ such that $l_i\in \lambda(e_i)$ for all $1\leq i\leq k$. 
In words, a journey is a path that uses strictly increasing edge-labels. If labeling $\lambda$ defines a journey on some path $P$ of $G$ then we also say that $\lambda $ \emph{preserves} $P$. 
A natural notation for a journey is $(e_1,l_1),(e_2,l_2),\ldots,(e_k,l_k)$. We call each $(e_i,l_i)$ a \emph{time-edge} as it corresponds to the availability of edge $e_i$ at some time $l_i$. We call $l_1$ the \emph{departure time} and $l_k$ the \emph{arrival time} of journey $J$ and denote them by $d(J)$ and $a(J)$, respectively. A $(u,v)$-journey $J$ is called \emph{foremost from time $t$} if $d(J)\geq t$ and $a(J)$ is minimized. Formally, let $\cj$ be the set of all $(u,v)$-journeys $J$ with $d(J)\geq t$. A $J\in \cj$ is foremost if $a(J)=\min_{J^\prime\in \cj}\{ a(J^\prime)\}$. A journey $J$ is called \emph{fastest} if $a(J)-d(J)+1$ is minimized. We call $a(J)-d(J)+1$ the \emph{duration} of the journey. A journey $J$ is called \emph{shortest} if $k$ is minimized, that is it minimizes the number of nodes visited (also called number of hops).

We say that a journey $J$ \emph{leaves from node $u$} (\emph{arrives at node $u$}, resp.) \emph{at time $t$} if $(u,v,t)$ ($(v,u,t)$, resp.) is a time-edge of $J$. Two journeys are called \emph{out-disjoint} (\emph{in-disjoint}, respectively) if they never leave from (arrive at, resp.) the same node at the same time.

Given a set $\cj$ of $(s,v)$-journeys we define their \emph{arrival time} as $a(\cj)=\max_{J\in \cj}$ $\{a(J)\}$. We say that a set $\cj$ of $(s,v)$-journeys satisfying some constraint $c$ (e.g.~containing at least $k$ journeys and/or containing only out-disjoint journeys) is \emph{foremost} if $a(\cj)$ is minimized over all sets of journeys satisfying the constraint.

If, in addition to the labeling $\lambda$, a positive weight $w(e)>0$ is assigned to every edge $e\in E$, then we call a temporal graph a \emph{weighted} temporal graph. In case of a weighted temporal graph, by ``shortest journey'' we mean a journey that minimizes the sum of the weights of its edges.

Throughout the text we denote by $n$ the number of nodes and by $m$ and $m_t$ the number of edges of graphs and temporal graphs, respectively. In case of a temporal graph, by ``number of edges'' we mean ``number of time-edges'', i.e.~$m_t=|\lambda|$. By $d(G)$ we denote the diameter of a (di)graph $G$, that is the length of the longest shortest path between any two nodes of $G$. By $\delta_u$ we denote the degree of a node $u\in V(G)$ (in case of an undirected graph $G$). 

\part*{Part I}

\section{Journey Problems}
\label{sec:journeys}

\subsection{Foremost Journeys}

We are given (in its full ``offline'' description) a \emph{temporal graph} $\lambda(G)$, where $G=(V,E)$, a distinguished source node $s\in V$, and a time $\lambda_{\min}\leq t_{start}\leq \lambda_{\max}$ and we are asked for all $w\in V\bs\{s\}$ to compute a foremost $(s,w)$-journey from time $t_{start}$.

\begin{algorithm}[t!]
\caption{FJ} \label{alg:fj}
\begin{algorithmic} [1]
    \REQUIRE Temporal graph $\lambda(G)$ (full ``offline'' description), source node $s\in V$, and time $t_{start}$, where $\lambda_{\min}\leq t_{start}\leq \lambda_{\max}$. 
    The input is represented by an array $A_{v}$ with $\lambda_{\max}-\lambda_{\min}+1$ entries for every node $v$, 
    where the entry $A_{v}[t]$ stores a pointer to the linked list of the adjacent nodes of~$v$ at time step $t$.
    \ENSURE For all $v\in V\bs\{s\}$ a foremost $(s,v)$-journey from time $t_{start}$. In particular, outputs for every $v$ a pair $(p[v],a[v])$, 
where $p[v]$ is the predecessor node of $v$ on the journey and $a[v]$ is the arrival time of the journey at $v$ (the pair as a whole may be viewed as the predecessor time-node of $v$ on the journey).

    \medskip
    
     \STATE{$R\leftarrow\{s\}$, $t\leftarrow t_{start}$}
     \FOR{each $v\in V\bs\{s\}$}
          \STATE $p[v]\leftarrow\emptyset$
          \STATE $a[v]\leftarrow\infty$
     \ENDFOR
     
     \medskip
     
     \WHILE{$R\neq V$ and $t\neq \lambda_{\max}+1$}
	      \STATE{$C\leftarrow \emptyset$}
          \FOR{each $u\in R$}
               \FOR{each $(u,v)\in E(t)$}
                    \IF[that is, $v\notin R$]{$p[v]=\emptyset$}
                         \STATE{$p[v]\leftarrow u$}
                         \STATE{$a[v]\leftarrow t$}
                         \STATE{$C\leftarrow C\cup\{v\}$}
                    \ENDIF
               \ENDFOR
          \ENDFOR
          \STATE{$R\leftarrow R \cup C$}
          \STATE{$t++$}
    \ENDWHILE
\end{algorithmic}
\end{algorithm}

\begin{theorem}
Algorithm \ref{alg:fj} correctly computes for all $w\in V\bs\{s\}$ a foremost $(s,w)$-journey from time $t_{start}$. 
The running time of the algorithm is $O(n\lambda_{\max}+m_t)$.
\end{theorem}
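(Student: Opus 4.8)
The plan is to establish correctness through a loop invariant parametrised by the current value of the counter $t$, and then to read off the running time by counting the iterations and the total work per iteration. (One may alternatively view the algorithm as breadth-first search on the static expansion $H$ of $\lambda(G)$, started from the copy of $s$ at the appropriate level, but the direct argument below is self-contained.) Concretely, I would prove by induction on $t$ that the following holds at the top of every \textbf{while}-iteration: $R$ is precisely the set of nodes $v$ for which there exists an $(s,v)$-journey $J$ with $d(J)\geq t_{start}$ and $a(J)\leq t-1$; and for each such $v\neq s$ the stored value $a[v]$ equals the minimum of $a(J)$ over all such journeys — i.e.\ the foremost arrival time from $t_{start}$ — while the chain of predecessor pointers $p[\cdot]$ from $v$ back to $s$ spells out a journey attaining this value. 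The base case $t=t_{start}$ is immediate: the only node reachable by a journey arriving no later than $t_{start}-1$ is $s$ itself via the empty journey, and indeed $R=\{s\}$.

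For the inductive step, assume the invariant at counter value $t$. The delicate point is the order of traversal in ``\textbf{for each} $u\in R$'': this loop must range over the value of $R$ frozen at the start of the iteration. The reason is that a node added to $R$ during iteration $t$ is reached with arrival time exactly $t$, so appending to its journey an edge of $E(t)$ would break the strict increase of labels; equivalently, the penultimate node $u$ of any $(s,v)$-journey that arrives at time $t$ must itself arrive by time $t-1$, hence already lies in $R$ by the inductive hypothesis. Granting this, I would check two directions. First, any $v$ that the algorithm inserts during iteration $t$ admits a journey arriving at time $t$: take the journey to its recorded predecessor $u$ (which arrives by $t-1$, by induction) and append the time-edge $(u,v,t)$; since $v\notin R$ at the start of the iteration, the invariant says no journey reaches $v$ before time $t$, so $t$ is its foremost arrival time and the recorded chain is a foremost journey. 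Conversely, if $v$ has foremost arrival time exactly $t$, then a foremost journey to $v$ ends with a time-edge $(u,v,t)$ whose prefix reaches $u$ by time $t-1$; hence $u\in R$ and $(u,v)\in E(t)$, so $v$ is inserted (with some valid predecessor) during iteration $t$. This re-establishes the invariant at counter value $t+1$. The \textbf{while}-loop halts when $R=V$ or when $t$ exceeds $\lambda_{\max}$; in either case every node possessing an $(s,v)$-journey from $t_{start}$ has by then been assigned its foremost arrival time together with a realising predecessor chain, and every other node still has $a[v]=\infty$, $p[v]=\emptyset$, matching the output specification.

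For the running time: initialisation is $O(n)$; the \textbf{while}-loop runs at most $\lambda_{\max}-t_{start}+1\leq\lambda_{\max}$ times; each iteration does $O(n)$ bookkeeping (the test $R\neq V$ and the loop over $R$) plus a scan of the available edges incident to reached nodes. Using $\sum_{t}|E(t)|=\sum_{e\in E}|\lambda(e)|=m_t$, the edge scanning summed over the whole run is $O(m_t)$, and adding the per-iteration overhead over the $O(\lambda_{\max})$ iterations yields a bound comfortably below the stated (deliberately loose) $O(n\lambda_{\max}^{3}+m_t)$.

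The main obstacle is not an isolated hard step but getting the invariant stated sharply enough that the greedy ``BFS over time'' optimality argument is airtight — in particular, reconciling the strict-increase requirement on labels with the in-iteration traversal order of $R$, and verifying that the predecessor pointers encode genuine strictly time-increasing journeys rather than merely correct arrival times.
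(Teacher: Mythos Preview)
Your correctness argument is essentially the same inductive approach as the paper's, though stated more carefully: the paper gives a terse induction on $t$ showing that any time-edge $(u,v,t)$ with $u\in R$, $v\notin R$ yields a foremost journey to $v$, arguing by contradiction that an earlier arrival would have placed $v$ in $R$ already. Your explicit loop invariant and the observation about freezing $R$ at the start of each iteration (to respect strict increase of labels) are genuine refinements the paper omits.

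Where you diverge is the running-time analysis. You derive $O(n\lambda_{\max}+m_t)$ by assuming direct access to $E(t)$ and charging each time-edge once. The paper instead assumes a data structure where, for each $u\in R$, locating the time-$t$ edge list requires a scan, so that summing the lookup cost over all rounds contributes $O\!\left(n\sum_{t=1}^{\lambda_{\max}}t\right)=O(n\lambda_{\max}^{2})$; this is the origin of the high-order term in the stated bound. Your tighter estimate is of course still within the claimed $O(n\lambda_{\max}^{3}+m_t)$, but it is worth being explicit that the slack comes from the paper's particular access-cost assumption rather than from your analysis being loose.
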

\begin{proof}
Assume that at the end of round $t-1$ all nodes in $R$ have been reached by foremost journeys from $s$. Let $(u,v,t)$ be a time-edge s.t.~$u\in R$ and $v\notin R$ and let $f(s,u)$ denote the foremost journey from $s$ to $u$. We claim that $J=f(s,u),(u,v,t)$ is a foremost journey from $s$ to $v$. Recall that we denote the arrival time of $J$ by $a(J)$. To see that our claim holds assume that there is some other journey $J^\prime$ s.t.~$a(J^\prime)<a(J)$. So there must be some time-edge $(w,z,t^\prime)$ for $w\in R$, $z\notin R$ and $t^\prime<t$. However, this contradicts the fact that $z\notin R$ as the algorithm should have added it in $R$ at time $t^\prime$. The proof follows by induction on $t$ beginning from $t=t_{start}$ at which time $R=\{s\}$ ($s$~has trivially been reached by a foremost journey from itself so the claim holds for the base case).

We now prove that the time complexity of the algorithm is $O(n\lambda_{\max}+m_t)$. 
In the worst-case, the last node may be inserted at step $\lambda_{\max}$, so the while loop is executed $O(\lambda_{\max})$ times. 
In each execution of the while loop, the algorithm visits the $O(n)$ nodes of the current set $R$ in the worst-case 
(e.g.~when all nodes but one have been added into $R$ from the first step). 
For each such node $v$ and for each time $\lambda_{\min} \leq t\leq \lambda_{\max}$ the algorithm first 
locates the entry $A_{v}[t]$ in the array $A_{v}$ in constant time and then it visits the whole linked list of the adjacent nodes of $v$ at time step $t$. 
All these operations can be performed in $O(n\lambda_{\max}+m_t)$ time in total.
\qquad
\end{proof}

\subsection{Shortest Journeys with Weights}

\begin{theorem}
\label{shortest-time-path-thm}Let $\lambda(G)$, where $G=(V,E)$, be a weighted temporal graph
with $n$ vertices and $m$ edges. Assume also that $|\lambda(e)|=1$ for all $e\in E$, i.e.~there is a single label on each edge (this implies also that $m_t=m$). Let $s,t\in V$. Then, we can compute a shortest journey $J$ between $s$ and~$t$ in $\lambda(G)$ (or report that no such journey exists) in $O(m\log
m+\sum_{v\in V}\delta_{v}^{2})=O(n^{3})$ time, where $\delta_{v}$ is the degree of~$v$
in $\lambda(G)$.
\end{theorem}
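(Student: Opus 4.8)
The plan is to reduce the problem to a single‑source shortest‑path computation in an auxiliary static weighted digraph $H$ and then apply Dijkstra's algorithm. The crucial observation is that, since every edge of $G$ carries exactly one label, the relevant state of a partial journey is captured entirely by its last traversed edge: a journey that has just crossed edge $e=\{u,v\}$ and arrived at $v$ may be extended along any edge incident to $v$ whose unique label exceeds $\lambda(e)$, and any such extension is again a journey.

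First I would build $H$ as follows. Its vertex set contains one vertex $e_{u\to v}$ for each of the two oriented traversals of each edge $e=\{u,v\}\in E$ (so $2m$ vertices in total), together with a dummy source $\sigma$ and a dummy sink $\tau$. For every node $v\in V$ and every ordered pair of edges $e,e'$ both incident to $v$ with $\lambda(e)<\lambda(e')$, I add an arc from $e_{\,\cdot\to v}$ to $e'_{\,v\to\cdot}$ of weight $w(e')$; I add an arc $\sigma\to e_{\,s\to\cdot}$ of weight $w(e)$ for every $e$ incident to $s$; and an arc $e_{\,\cdot\to t}\to\tau$ of weight $0$ for every $e$ incident to $t$. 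By construction, directed $\sigma$–$\tau$ walks in $H$ correspond, weight for weight, to time‑respecting walks from $s$ to $t$ in $\lambda(G)$: moving only toward strictly larger labels enforces exactly the strictly‑increasing condition, and the arc weights along the walk sum to the total weight of the edges used. Because all weights are strictly positive, a time‑respecting walk from $s$ to $t$ that revisits some node $v$ can be shortened — splice out the sub‑walk that leaves $v$ and returns to $v$; the label of the first retained outgoing edge is still larger than the arrival label at $v$, so the result is again time‑respecting — and this never increases the weight. Hence the minimum weight of a $\sigma$–$\tau$ path in $H$ equals the weight of a shortest journey from $s$ to $t$ (realised by a simple path), and $\tau$ is reachable from $\sigma$ in $H$ if and only if an $(s,t)$‑journey exists; the journey itself is read off from Dijkstra's predecessor pointers. (For directed $G$ the construction is identical with one vertex per edge.)

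For the running time: $H$ has $O(m)$ vertices, and every arc of $H$ that joins two edge‑traversals is charged to the unique node $v\in V$ shared by the two edges, so the number of such arcs is at most $\sum_{v\in V}\delta_v^{2}$; adding the $\delta_s+\delta_t$ arcs at $\sigma$ and $\tau$ keeps the total at $O(\sum_{v\in V}\delta_v^{2})$, and $H$ can be built in time proportional to its size by comparing labels of incident edges pairwise. Dijkstra's algorithm from $\sigma$ with a binary heap then runs in $O(|A(H)|+|V(H)|\log|V(H)|)=O(\sum_{v\in V}\delta_v^{2}+m\log m)$. Finally $\sum_{v\in V}\delta_v^{2}\le n(n-1)^2=O(n^3)$ and $m\log m\le\binom{n}{2}\log\binom{n}{2}=O(n^{2}\log n)$, which gives the stated $O(n^3)$ bound.

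The main thing to get right is the equivalence between journeys in $\lambda(G)$ and $\sigma$–$\tau$ paths in $H$: one has to verify that the strictly‑increasing‑label requirement is captured precisely in both directions, and that working with walks in $H$ rather than with simple paths does not change the optimum — which is exactly the point where positivity of the weights is used. The remaining work — the size estimate for $H$ and the complexity of Dijkstra — is routine.
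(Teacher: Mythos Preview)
Your reduction is essentially the paper's: build a line-graph-type digraph whose vertices are edges of $G$, encode the strictly-increasing-label rule as arcs weighted by the target edge, attach a source and sink, and run Dijkstra. The paper uses \emph{unoriented} edges as $H$-vertices (with auxiliary edges $s_0s$ and $tt_0$ playing the role of your $\sigma,\tau$), so it must argue separately that a shortest $H$-path never has three consecutive edges sharing a common vertex of $G$; your oriented-edge version makes the walk correspondence immediate and then disposes of non-simple walks by the splicing argument, which is a clean alternative.

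One slip to fix: the bound $O(|A(H)|+|V(H)|\log|V(H)|)$ you quote is the \emph{Fibonacci-heap} running time for Dijkstra, not the binary-heap bound $O\bigl((|A(H)|+|V(H)|)\log|V(H)|\bigr)$. With a binary heap you would only obtain $O\bigl((\sum_v\delta_v^{2})\log m\bigr)=O(n^{3}\log n)$; to reach the stated $O(m\log m+\sum_v\delta_v^{2})=O(n^{3})$ you need Fibonacci heaps, exactly as the paper does.
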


\begin{proof}
First, we may assume without loss of generality that $\lambda(G)$ is a connected
graph, and thus $m\geq n-1$. For the purposes of the proof we construct from 
$\lambda(G)$ a weighted directed graph $H$ with two specific vertices $s^{\prime
},t^{\prime }$, such that there exists a journey $J$ in $\lambda(G)$
between $s$ and $t$ if and only if there is a directed path $P$ in 
$H$ from $s^{\prime }$ to $t^{\prime }$. Furthermore, if such paths exist,
then the weight of the shortest journey $J$ of $\lambda(G)$ between $s$
and $t$ equals the weight of the shortest directed path $P$ of $%
H$ from $s^{\prime }$ to $t^{\prime }$.

First consider the (undirected) graph $G^{\prime }$ that we obtain when we
add two vertices $s_{0}$ and $t_{0}$ to $\lambda(G)$ and the edges $s_{0}s$ and $tt_{0}$. Assign to these two new edges the weight zero and assign to them the
time labels $\lambda (s_{0}s)=0$ and $\lambda (tt_{0})=\lambda_{\max}+1$. 
Then, clearly there exists a time-respecting path between $s$ and $t
$ in $\lambda(G)$ if and only if there exists a time-respecting path between $s_{0}$
and $t_{0}$ in $G^{\prime }$, while the weights of these two paths coincide.
For simplicity of the presentation, denote in the following by $V$ and $E$
the vertex and edge sets of $G^{\prime }$, respectively. Then we construct $%
H=(V_{H},E_{H})$ from $G^{\prime }=(V,E)$ as follows. Let $V_{H}=E$.
Furthermore, for every vertex $v\in V$, denote by $M(v)=\{vu:u\in N(v)\}$
the set of all incident edges to $v$ in $G^{\prime }$. For every pair $%
e_{1},e_{2}\in M(v)$ for some $v\in V$, add the arc $\widehat{e_{1}e_{2}}$
to $E_{H}$ if and only if $\lambda (e_{1})< \lambda (e_{2})$. In this case, we
assign to the arc $\widehat{e_{1}e_{2}}$ of $E_{H}$ the weight $w_{H}(%
\widehat{e_{1}e_{2}})=w(e_{2})$.

Suppose first that $G^{\prime }$ has a journey between $s_{0}$
and $t_{0}$. Let $J=(u_{0},u_{1},\ldots ,u_{k})$, where $u_{0}=s_{0}$ and $%
u_{k}=t_{0}$, be the shortest among them with respect to the weight function 
$w$ of $G^{\prime }$. Then, by the definition of $G^{\prime }$, $s_{0}s$ and 
$tt_{0}$ are the first and the last edges of $J$. Furthermore, by the
definition of a time-respecting path, $\lambda (u_{i-1}u_{i})< \lambda
(u_{i}u_{i+1})$ for every $i=1,2,\ldots ,k-1$. Therefore, by the above
construction of $H$, there exists the directed path $Q=(e_{0},e_{1},\ldots
,e_{k-1})$ in $H$, where $e_{i}=u_{i}u_{i+1}$ for every $i=0,1,\ldots ,k-1$.
Note that $e_{0}=s_{0}s$ and that $e_{k-1}=tt_{0}$. Furthermore, in the
weight function $w_{H}$ of $H$, $w_{H}(\widehat{e_{i}e_{i+1}})=w(e_{i+1})$
for every $i=0,1,\ldots ,k-2$. Note that $w_{H}(\widehat{e_{k-2}e_{k-1}}%
)=w(e_{k-1})=w(u_{k-1}u_{k})$, i.e.~$w_{H}(\widehat{e_{k-2}e_{k-1}}%
)=w(tt_{0})=0$. Thus, the total weight $w(J)$ of $J$ in $G^{\prime }$ equals
the total weight $w_{H}(Q)$ of $Q$ in $H$.

Let now $s_{H}=s_{0}s$ and $t_{H}=tt_{0}$. Suppose now that $H$ has a path 
between $s_{H}$ and $t_{H}$. Let $Q=(e_{0},e_{1},\ldots
,e_{k})$, where $e_{0}=s_{H}$ and $e_{k}=t_{H}$, be the shortest among them
with respect to the weight function $w_{H}$ of $H$. Since $Q$ is a directed
path between $s_{H}$ and $t_{H}$, $\lambda (e_{i})< \lambda (e_{i+1})$ for
every $i=0,1,\ldots ,k-1$ by the construction of $H$. Furthermore, the edges 
$e_{i}$ and $e_{i+1}$ of $G^{\prime }$ are incident for every $i=0,1,\ldots
,k-1$. Denote now by $p_{i}$ the common vertex of the edges $e_{i}$ and $%
e_{i+1}$ in $G^{\prime }$ for every $i=0,1,\ldots ,k-1$. We will prove that $%
p_{i}\neq p_{i+1}$ for every $i=0,1,\ldots ,k-2$. Suppose otherwise that $%
p_{i}=p_{i+1}$ for some $0\leq i\leq k-2$. Then the edges $e_{i}$, $e_{i+1}$, 
and $e_{i+2}$ of $G^{\prime }$ are as it is shown in Figure~\ref{shortest-respecting-forbidden-fig}, where $e_{i}=ad$, $e_{i+1}=bd$, 
$e_{i+2}=cd$, and $d=p_{i}=p_{i+1}$ is the common point of the edges $e_{i}$, 
$e_{i+1}$, and $e_{i+2}$. However, since $\lambda (e_{i})< \lambda (e_{i+1})$
and $\lambda (e_{i+1})< \lambda (e_{i+2})$, it follows that $\lambda (e_{i})<
\lambda (e_{i+2})$, and thus there exists the arc $\widehat{e_{i}e_{i+2}}$ in
the directed graph $H$. Furthermore $w_{H}(\widehat{e_{i}e_{i+2}})=w_{H}(%
\widehat{e_{i+1}e_{i+2}})=w(e_{i+2})$, and thus $w_{H}(\widehat{e_{i}e_{i+1}}%
)+w_{H}(\widehat{e_{i+1}e_{i+2}})>w_{H}(\widehat{e_{i}e_{i+2}})$. Therefore
there exists in $H$ the strictly shorter directed path $Q^{\prime
}=(e_{0},e_{1},\ldots ,e_{i},e_{i+2}\ldots ,e_{k})$ between $e_{0}=s_{H}$
and $e_{k}=t_{H}$. This is a contradiction, since $Q$ is the shortest
directed path between $s_{H}$ and $t_{H}$. Therefore $p_{i}\neq p_{i+1}$ for
every $i=0,1,\ldots ,k-2$. Thus, we can denote now $e_{i}=p_{i-1}p_{i}$ for
every $i=1,2,\ldots ,k$, where $p_{0}=s_{0}$ and $p_{k}=t_{0}$. That is, $%
J=(p_{0},p_{1},\ldots ,p_{k+1})$ is a walk in $G^{\prime }$ between $%
p_{0}=s_{0}$ and $p_{k}=t_{0}$.

\begin{figure}[h]
\centering
\includegraphics[scale=0.7]{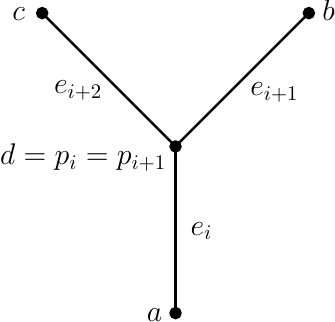}
\caption{A forbidden configuration.}
\label{shortest-respecting-forbidden-fig}
\end{figure}

Since $Q$ is a simple directed path, it follows that every edge of $J$
appears exactly once in $J$, and thus $J$ is a path of $G^{\prime }$. Now we
will prove that $J$ is actually a simple path of $G^{\prime }$. Suppose
otherwise that $p_{i}=p_{j}$ for some $0\leq i<j\leq k+1$. If $p_{j}=p_{k}$,
i.e.~$p_{j}=t_{0}$, then the subpath $(p_{0},p_{1},\ldots ,p_{i})$ of $J$
implies a strictly shorter directed path $Q^{\prime }$ than $Q$ between $%
s_{H}$ and $t_{H}$ in $H$, which is a contradiction. Therefore $p_{j}\neq
p_{k}$. Then, since $\lambda (p_{i-1}p_{i})< \lambda (p_{i}p_{i+1})$ for every $%
i=0,1,\ldots ,k-1$ by the construction of the directed graph $H$, it follows
in particular that $\lambda (p_{i-1}p_{i})< \lambda (p_{j}p_{j+1})$, and thus $%
\widehat{e_{i}e_{j+1}}$ is an arc in the directed graph $H$. Thus the path $%
(p_{0},p_{1},\ldots ,p_{i},p_{j+1},\ldots ,p_{k})$ of $G^{\prime }$ implies
a strictly shorter directed path $Q^{\prime }$ than $Q$ between $s_{H}$ and $%
t_{H}$ in $H$, which is again a contradiction. Therefore $p_{i}\neq p_{j}$
for every $0\leq i<j\leq k+1$ in $J$, and thus $J$ is a simple path in $%
G^{\prime }$ between $p_{0}=s_{0}$ and $p_{k}=t_{0}$. Finally, it is easy to
check that the weight $w(J)$ of $J$ in $G^{\prime }$ equals the weight $%
w_{H}(Q)$ of $Q$ in $H$.

Summarizing, there exists a journey $J$ in $G^{\prime }$
between $s_{0}$ and $t_{0}$ if and only if there is a directed path $Q$ in $%
H $ from $s_{H}$ to $t_{H}$. Furthermore, if such paths exist, then the
weight of the shortest journey $J$ of $G^{\prime }$ between $%
s_{0}$ and $t_{0}$ equals the weight of the shortest directed path $Q$ of 
$H$ from $s_{H}$ to $t_{H}$.

Moreover, the above proof immediately implies an efficient algorithm for
computing the graph $H$ from~$\lambda(G)$ (by first constructing the
auxiliary graph $G^{\prime }$ from $\lambda(G)$). This can be done in $O(\sum_{v\in
V}\delta_{v}^{2})$ time. Indeed, for every vertex $v$ of $G^{\prime }$ we add at
most $2{\binom{\delta_{v}}{2}}=\delta_{v}(\delta_{v}-1)$ arcs to $H$. That is, $|V_{H}|=m+2$
and $|E_{H}|\leq \sum_{v\in V(G^{\prime })}\delta_{v}(\delta_{v}-1)=O(\sum_{v\in
V}\delta_{v}^{2})$. After we construct $H$, we can compute a shortest directed
path between $s_{H}$ and $t_{H}$ in $O(|E_{H}|+|V_{H}|\log |V_{H}|)$ time
using Dijkstra's algorithm with Fibonacci heaps~\cite{FT87}. That
is, we can compute a shortest directed path $Q$ in $H$ between $s_{H}$ and $%
t_{H}$ in $O(m\log m+\sum_{v\in V}\delta_{v}^{2})$ time. Once we have computed
the path $Q$, we can easily construct the shortest undirected journey $J$ in $\lambda(G)$
between $s$ and $t$ in $O(m+n)$ time. This completes the proof of the
theorem.
\qquad
\end{proof}

\section{A Menger's Analogue for Temporal Graphs}
\label{sec:menger}

In \cite{KKK00}, Kempe \emph{et al.} proved that Menger's theorem, at least in its original formulation, does not hold for single-label temporal networks in which journeys must have non-decreasing labels (and not necessarily strictly increasing as in our case). For a counterexample, it is not hard to see in Figure \ref{fig:ber} that there are no two disjoint time-respecting paths from $v_1$ to $v_4$ but after deleting any one node (other than $v_1$ or $v_4$) there still remains a time-respecting $v_1$-$v_4$ path. Moreover, they proved that the violation of Menger's theorem in such temporal networks renders the computation of the number of disjoint $s$-$t$ paths NP-complete.

\begin{figure}[!hbtp]
   \centering{
        \includegraphics[width=0.5\textwidth]{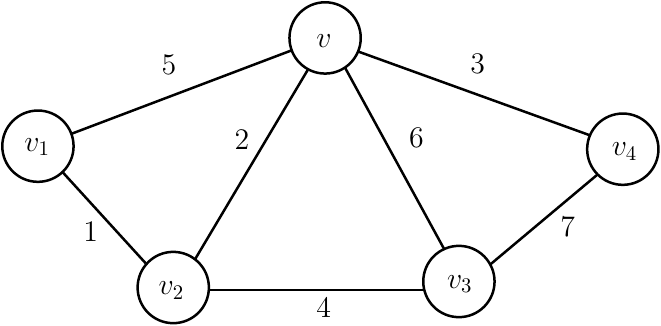}
        }
   \caption{A counterexample of Menger's theorem for temporal networks (adopted from \cite{KKK00}). Each edge has a single time-label indicating its availability time.} \label{fig:ber}
\end{figure}

We prove in this section that, in contrast to the above important negative result, there is a natural analogue of Menger's theorem that is valid for all temporal networks. In Theorem \ref{the:dmeng}, we define this analogue and prove its validity. Then as an illustration (\S \ref{subsec:application}), we show how using our theorem can simplify the proof of a recent token dissemination result.

When we say that we remove \emph{node departure time} $(u,t)$ we mean that we remove \emph{all time-edges leaving $u$ at time $t$}, i.e.~we remove label $t$ from all $(u,v)$ edges (for all $v\in V$). In case of an undirected graph, we replace each edge by two antiparallel edges and remove label $t$ only from the outgoing edges of $u$. So, when we ask how many node departure times are needed to separate two nodes $s$ and $v$ we mean how many node departure times must be selected so that after the removal of all the corresponding time-edges the resulting temporal graph has no $(s,v)$-journey. \footnote{Note that this is a different question from how many time-edges must be removed and, as we shall see, the latter question does not result in a Menger's analogue. Of course, removing a node departure time again results in the removal of some time-edges, but a Menger's analogue based on the number of those edges would not work. Instead, what turns out to work is an analogue based on counting the number of node departure times.}   

\begin{theorem} [Menger's Temporal Analogue] \label{the:dmeng}
Take any temporal graph $\lambda(G)$, where $G=(V,E)$, with two distinguished nodes $s$ and $v$. The maximum number of out-disjoint journeys from $s$ to $v$ is equal to the minimum number of node departure times needed to separate $s$ from $v$.  
\end{theorem}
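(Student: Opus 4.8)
The natural approach is to reduce the temporal statement to the classical (static) Menger/max-flow–min-cut theorem applied to the static expansion $H$ of $\lambda(G)$. First I would recall that journeys from $s$ in $\lambda(G)$ correspond exactly to directed paths in $H$ starting from the source-copy of $s$ at level $\lambda_{\min}-1$; and that a set of out-disjoint journeys — journeys that never leave the same node at the same time — translates into a set of paths in $H$ that are disjoint on the ``outgoing'' side of each time-node, i.e. no two of them use the same outgoing bundle of arcs at $u_{ij}$ corresponding to edges present at time $i+1$. To make vertex-disjointness the right notion, I would perform the standard vertex-splitting on $H$: split each time-node $u_{ij}$ into $u_{ij}^{\text{in}}$ and $u_{ij}^{\text{out}}$ joined by a unit-capacity internal arc, route all incoming arcs into $u_{ij}^{\text{in}}$ and all outgoing arcs out of $u_{ij}^{\text{out}}$. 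The subtlety is that removing a node departure time $(u,t)$ should correspond to deleting a single vertex in the split graph; this works precisely because all edges leaving $u$ at time $t$ emanate from the single time-node $u_{(t-1)j_u}$, so the cut vertex to pick is $u_{(t-1)j_u}^{\text{out}}$ (or equivalently the internal arc). Setting capacity $\infty$ on all other internal arcs (the copy-arcs $u_{(i-1)j}^{\cdot}\to u_{ij}^{\cdot}$, and the in/out arcs of nodes whose departure times we are not allowed to cut — but actually every departure time is allowed, so only the terminal copies of $s$ and $v$ get capacity $\infty$) forces every finite cut to consist only of ``departure-time'' vertices.

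Concretely, the plan is: build $H'$ from the static expansion by (i) adding a super-source $\sigma$ with an $\infty$-capacity arc to $s$'s first-level copy and a super-sink $\tau$ with $\infty$-capacity arcs from every copy $v_{ij'}$ of $v$ (collapsing ``reach $v$ at any time''), (ii) splitting each time-node $u_{ij}$ as above with the internal arc $u_{ij}^{\text{in}}\to u_{ij}^{\text{out}}$ given capacity $1$ when the departure time $(u, i{+}1)$ is meaningful and capacity $\infty$ for copies of $v$ (and for $s$'s copies on the incoming side, to avoid spurious cuts), and (iii) giving every original arc of $H$ capacity $\infty$. Then a max-flow of value $k$ in $H'$ decomposes (integrality of max flow with integer capacities, plus flow decomposition) into $k$ arc-disjoint $\sigma$–$\tau$ paths which, because the only unit-capacity arcs are the split-arcs, are in fact disjoint on those split-arcs; translating back, these give $k$ journeys from $s$ to $v$ that pairwise never share a node departure time, i.e. $k$ out-disjoint journeys. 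Conversely $k$ out-disjoint journeys yield $k$ unit flows that are compatible on the split-arcs, hence a flow of value $k$. So the max number of out-disjoint $s$-$v$ journeys equals the max $\sigma$-$\tau$ flow in $H'$. By max-flow–min-cut this equals the min cut; since every finite-capacity cut uses only unit-capacity split-arcs, and each such arc corresponds bijectively to a node departure time, a min cut of size $k$ is exactly a set of $k$ node departure times whose removal destroys all $(s,v)$-journeys — and minimality of the cut gives minimality of the separating set. That establishes both inequalities and hence equality.

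The main obstacle — and the part that needs genuine care rather than invocation of a textbook theorem — is the precise bookkeeping of the correspondence between ``node departure time $(u,t)$'' and a single cuttable object in the expanded graph, in both directions: (a) that removing $(u,t)$ really deletes exactly the arcs out of the time-node $u_{(t-1)}$, nothing more and nothing less, so a separating set of departure times maps to a cut of the same size; and (b) that a minimum cut can be taken to consist only of split-arcs (equivalently: one never gains by cutting an original arc, because they all have capacity $\infty$, and one can always reroute a cut of an $\infty$-arc to the split-arc feeding it) — here one must make sure the construction does not introduce ``illegal'' cuts, e.g. separating $s$ from itself or cutting something that does not correspond to any departure time; assigning $\infty$ capacity to the internal arcs of the $s$- and $v$-copies handles this. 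A secondary, more mundane point is that journeys in our (strictly-increasing) model correspond to paths in $H$ that genuinely advance one level per hop, but may also ``wait'' using copy-arcs; since copy-arcs have capacity $\infty$ and carry no departure-time label, waiting is free and does not interfere with out-disjointness, so the correspondence goes through cleanly. Once these points are nailed down, the theorem follows from the classical max-flow–min-cut theorem with integer capacities applied to $H'$.
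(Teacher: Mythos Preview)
Your high-level strategy---reduce to max-flow/min-cut on (a modification of) the static expansion---is exactly the paper's strategy, and your identification of the crux (getting the correspondence between a single ``cuttable object'' and a node departure time right) is on target. However, the concrete gadget you propose does \emph{not} realise that correspondence, and the construction as written is incorrect.

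The problem is that standard vertex-splitting of a time-node $u_{ij}$ with a unit-capacity internal arc throttles \emph{all} traffic through $u_{ij}$, including traffic that merely \emph{waits} there via the copy arc $u_{ij}\to u_{(i+1)j}$. But two out-disjoint journeys are allowed to occupy the same node at the same time; they are only forbidden from \emph{departing} (using a diagonal arc) from the same node at the same time. Concretely: take $V=\{s,u,v\}$, $(s,u)$ with labels $\{1,2\}$, $(u,v)$ with labels $\{3,4\}$. The journeys $J_1=(s,u,1),(u,v,3)$ and $J_2=(s,u,2),(u,v,4)$ are out-disjoint, and no single departure time separates $s$ from $v$, so both sides of the theorem equal~$2$. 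In your $H'$, however, both paths must traverse the time-node $u_2$ (one arriving there and then departing diagonally, the other arriving there and then waiting), hence both need the unit-capacity internal arc $u_2^{\text{in}}\to u_2^{\text{out}}$; the max flow is~$1$, and cutting that single internal arc is a cut of capacity~$1$. So in your construction max-flow = min-cut = $1 \neq 2$, and neither the flow-to-journeys direction nor the cut-to-departure-times direction goes through. (Your construction is in fact the right one for the \emph{time-node disjoint} version stated as a corollary, not for the out-disjoint version.)

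The paper fixes exactly this: instead of splitting $u_{ij}$ wholesale, it inserts an auxiliary node $w_{ij}$ \emph{only in front of the diagonal outgoing edges} of $u_{ij}$, with a unit-capacity arc $u_{ij}\to w_{ij}$, while the copy arc $u_{ij}\to u_{(i+1)j}$ keeps large capacity. Then the unit-capacity edge $u_{ij}\to w_{ij}$ is in one-to-one correspondence with the node departure time at $u_j$ for that step, and waiting remains free. If you adjust your splitting so that the copy arc bypasses the unit-capacity bottleneck (equivalently: route the copy arc out of $u_{ij}^{\text{in}}$ directly, and only the diagonal arcs out of $u_{ij}^{\text{out}}$), the rest of your argument goes through and coincides with the paper's.
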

\begin{proof}
Assume, in order to simplify notation, that $\lambda_{\min}=1$. Take the static expansion $H=(S,A)$ of $\lambda(G)$. Let $\{u_{i1}\}$ and $\{u_{in}\}$ represent $s$ and $v$ over time, respectively (first and last columns, respectively), where $0\leq i\leq \lambda_{\max}$. We extend $H$ as follows. For each $u_{ij}$, $0\leq i\leq \lambda_{\max}-1$, with at least 2 outgoing edges to nodes different than $u_{(i+1)j}$, e.g.~to nodes $u_{(i+1)j_1},u_{(i+1)j_2},\ldots,u_{(i+1)j_k}$, we add a new node $w_{ij}$ and the edges $(u_{ij},w_{ij})$ and $(w_{ij},u_{(i+1)j_1}),(w_{ij},u_{(i+1)j_2}),\ldots,(w_{ij},u_{(i+1)j_k})$. We also define an edge capacity function $c:A\rightarrow \{1,\lambda_{\max}\}$ as follows. All edges of the form $(u_{ij},u_{(i+1)j})$ take capacity $\lambda_{\max}$ and all other edges take capacity $1$. 
We are interested in the maximum flow from $u_{01}$ to $u_{\lambda_{\max}n}$. 
As this is simply a usual static flow network, the max-flow min-cut theorem applies stating that the maximum flow from $u_{01}$ to $u_{\lambda_{\max}n}$ is equal to the minimum of the capacity of a cut separating $u_{01}$ from $u_{\lambda_{\max}n}$. So it suffices to show that (i) the maximum number of out-disjoint journeys from $s$ to $v$ is equal to the maximum flow from $u_{01}$ to $u_{\lambda_{\max}n}$ and (ii) the minimum number of node departure times needed to separate $s$ from $v$ is equal to the minimum of the capacity of a cut separating $u_{01}$ from $u_{\lambda_{\max}n}$.

For (i) observe that any set of $h$ out-disjoint journeys from $s$ to $v$ corresponds to a set of $h$ disjoint paths from $u_{01}$ to $u_{\lambda_{\max}n}$ w.r.t. diagonal edges (edges in $E\bs\{(u_{ij},u_{(i+1)j})\}$) and inversely, so their maximums are equal. Next observe that any set of $h$ disjoint paths from $u_{01}$ to $u_{\lambda_{\max}n}$ w.r.t. diagonal edges corresponds to an integral $u_{01}$-$u_{\lambda_{\max}n}$ flow on $H$ of value $h$ and inversely. As the maximum integral $u_{01}$-$u_{\lambda_{\max}n}$ flow is equal to the maximum $u_{01}$-$u_{\lambda_{\max}n}$ flow (the capacities are integral and thus the integrality theorem of maximum flows applies) we conclude that the maximum $u_{01}$-$u_{\lambda_{\max}n}$ flow is equal to the maximum number of out-disjoint journeys from $s$ to $v$.

For (ii) observe that any set of $r$ node departure times that separate $s$ from $v$ corresponds to a set of $r$ diagonal edges leaving $u_{ij}$ nodes (ending either in $w_{ij}$ or in $u_{(i+1)j^\prime}$ nodes) that separate $u_{01}$ from $u_{\lambda_{\max}n}$ and inversely. Finally, observe that there is a minimum $u_{01}$-$u_{\lambda_{\max}n}$ cut on $H$ that only uses such edges: for if a minimum cut uses vertical edges we can replace them by diagonal edges and we can replace all edges leaving a $w_{ij}$ node by the edge $(u_{ij},w_{ij})$ without increasing the total capacity. 
\qquad
\end{proof}

\begin{corollary}
By symmetry we have that the maximum number of in-disjoint journeys from $s$ to $v$ is equal to the minimum number of node arrival times needed to separate $s$ from $v$.
\end{corollary}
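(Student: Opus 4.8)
The plan is to make precise the word ``symmetry'' by an explicit time-reversal that converts the in-disjoint/arrival-time question for $\lambda(G)$ into the out-disjoint/departure-time question for a reversed temporal graph, and then to quote Theorem~\ref{the:dmeng}. Assume, as in the proof of Theorem~\ref{the:dmeng}, that $\lambda_{\min}=1$, so the time window is $\{1,\dots,\lambda_{\max}\}$, and let $\sigma(t)=\lambda_{\max}+1-t$ be the reflection of this window onto itself. Define the reversed temporal graph $\overline{\lambda}(\overline{G})$ by: $\overline{G}=(V,\overline{E})$ has an edge $(y,x)$ for each $(x,y)\in E$, and $\overline{\lambda}(y,x)=\{\sigma(t):t\in\lambda(x,y)\}$. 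The first thing I would check, directly from the definition of a journey, is that a sequence of time-edges $(e_1,l_1),\dots,(e_k,l_k)$ is an $(s,v)$-journey in $\lambda(G)$ if and only if the sequence obtained by reversing the order of the edges, reversing the orientation of each edge, and replacing each label $l_i$ by $\sigma(l_i)$ is a $(v,s)$-journey in $\overline{\lambda}(\overline{G})$: the condition $l_1<\dots<l_k$ is equivalent to $\sigma(l_k)<\dots<\sigma(l_1)$, which is the strict-increase condition along the reversed path. This correspondence is an involution and is a bijection on journeys.

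Next I would translate the two quantities in the statement across this bijection. A journey \emph{arrives at} node $w$ at time $t$ in $\lambda(G)$ exactly when its reverse \emph{leaves from} $w$ at time $\sigma(t)$ in $\overline{\lambda}(\overline{G})$ (both correspond to the time-edge $((x,w),t)$ of the original being sent to the time-edge $((w,x),\sigma(t))$ of the reverse). Hence two $(s,v)$-journeys are in-disjoint in $\lambda(G)$ iff their reverses are out-disjoint $(v,s)$-journeys in $\overline{\lambda}(\overline{G})$, so the maximum number of in-disjoint $(s,v)$-journeys in $\lambda(G)$ equals the maximum number of out-disjoint $(v,s)$-journeys in $\overline{\lambda}(\overline{G})$. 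Likewise, removing the node arrival time $(w,t)$ in $\lambda(G)$, i.e.\ deleting label $t$ from every edge into $w$, corresponds under the reversal to removing the node departure time $(w,\sigma(t))$ in $\overline{\lambda}(\overline{G})$; combined with the journey bijection this gives that a set of node arrival times separates $s$ from $v$ in $\lambda(G)$ iff the corresponding (equinumerous) set of node departure times separates $v$ from $s$ in $\overline{\lambda}(\overline{G})$. Therefore the minimum number of node arrival times separating $s$ from $v$ in $\lambda(G)$ equals the minimum number of node departure times separating $v$ from $s$ in $\overline{\lambda}(\overline{G})$.

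Finally I would apply Theorem~\ref{the:dmeng} to $\overline{\lambda}(\overline{G})$ with source $v$ and sink $s$: the maximum number of out-disjoint $(v,s)$-journeys equals the minimum number of node departure times needed to separate $v$ from $s$. Chaining this with the two equalities obtained from the reversal yields precisely the asserted identity for in-disjoint $(s,v)$-journeys and node arrival times in $\lambda(G)$. I do not anticipate a real obstacle; the only delicate points are routine bookkeeping: confirming that $\sigma$ maps the time window to itself so that $\overline{\lambda}$ is a bona fide labeling, and checking carefully that ``in-disjoint'' and ``out-disjoint'' genuinely swap (rather than, say, both surviving) under the reversal, which follows once the image of a single time-edge is written out. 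As an alternative I would note, but not carry out, that one can instead re-run the static-expansion and max-flow/min-cut argument of Theorem~\ref{the:dmeng} verbatim, with the auxiliary splitting nodes $w_{ij}$ and the capacity-$\lambda_{\max}$ ``vertical'' edges placed on the incoming side of each time-node rather than the outgoing side.
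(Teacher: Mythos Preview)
Your proposal is correct and is precisely the explicit unpacking of the paper's one-word justification ``by symmetry'': the paper gives no further proof for this corollary, and the time-reversal bijection you construct (reversing edge orientations and reflecting labels via $\sigma(t)=\lambda_{\max}+1-t$) is exactly the symmetry intended. Your alternative of redoing the static-expansion argument with the auxiliary $w_{ij}$ nodes placed on the incoming side is also a legitimate route, and is in fact closer in spirit to how one would prove it from scratch rather than by reduction to Theorem~\ref{the:dmeng}.
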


\begin{corollary}
The following alternative statements are both valid:
\begin{itemize}
 \item The maximum number of time-node disjoint journeys from $s$ to $v$ is equal to the minimum number of time-nodes needed to separate $s$ from $v$.
 \item The maximum number of time-edge disjoint journeys from $s$ to $v$ is equal to the minimum number of time-edges needed to separate $s$ from $v$. \footnote{By time-node disjointness we mean that they do not meet on the same node at the same time (in terms of the expansion graph the corresponding paths should be disjoint in the classical sense) and by time-edge disjointness that they do not use the same time-edge (which again translates to using the same diagonal edge on the expansion graph).}
\end{itemize}
\end{corollary}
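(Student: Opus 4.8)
The plan is to mimic the proof of Theorem~\ref{the:dmeng}, but to work directly with the plain static expansion $H=(S,A)$ of $\lambda(G)$, \emph{without} the $w_{ij}$-gadget: that gadget was needed there only because ``out-disjoint'' is a disjointness requirement on the out-going diagonal edges of a time-node rather than on whole time-nodes, whereas time-node disjointness and time-edge disjointness translate verbatim into vertex-disjointness and arc-disjointness in $H$. I would first record the dictionary between $\lambda(G)$ and $H$ (assuming, as there, $\lambda_{\min}=1$): a $(s,v)$-journey corresponds to a directed $u_{01}$--$u_{\lambda_{\max}n}$ path in $H$ and conversely, where the diagonal arcs of the path are exactly the time-edges traversed and the vertical arcs $(u_{(i-1)j},u_{ij})$ are ``waiting'' steps; a time-node $(u,t)$ is the vertex $u_{tj}$ with $u=u_j$; and a time-edge $(u,u',t)$ is the diagonal arc $(u_{(t-1)j},u_{tj'})$ with $u=u_j$, $u'=u_{j'}$. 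Both parts of the corollary then reduce to classical Menger's theorem on $H$ (its vertex version and its edge version, respectively).

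For the time-node statement I would invoke the vertex version of Menger's theorem for directed graphs on $H$ with terminals $u_{01}$ and $u_{\lambda_{\max}n}$: a family of time-node disjoint journeys is precisely a family of internally vertex-disjoint $u_{01}$--$u_{\lambda_{\max}n}$ paths, and a set of time-nodes whose removal kills every $(s,v)$-journey is precisely a vertex set separating $u_{01}$ from $u_{\lambda_{\max}n}$, so the two extremal quantities coincide. The only care needed is at the terminals, since $s$ and $v$ are each represented by a whole \emph{column} $\{u_{i1}\}$, $\{u_{in}\}$ of $H$ rather than by a single vertex; this is handled by the standard device of attaching a super-source $\sigma$ with arcs into every $u_{i1}$ and a super-sink $\tau$ with arcs out of every $u_{in}$, and (depending on whether separators are allowed to delete time-nodes of $s$ or $v$ themselves) either collapsing the terminal columns or applying the usual vertex-splitting reduction to them. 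I expect this endpoint bookkeeping to be the only genuinely fiddly point of the argument.

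For the time-edge statement I would use the arc version: give every diagonal arc of $H$ capacity $1$ and every vertical arc capacity $+\infty$ (equivalently, any value exceeding the trivial upper bound $n$ on the number of journeys), and consider a maximum integral $u_{01}$--$u_{\lambda_{\max}n}$ flow. By flow decomposition and the integrality theorem, such a flow is a maximum family of $(s,v)$-journeys no two of which share a diagonal arc, i.e.\ a maximum family of time-edge disjoint journeys (journeys may share ``waiting'' arcs freely, which is consistent with time-edge disjointness since waiting is not a time-edge). Dually, the max-flow min-cut theorem yields a minimum cut which, because vertical arcs cost more than any achievable flow value, uses only diagonal arcs and hence is exactly a minimum set of time-edges whose removal leaves no $(s,v)$-journey; thus the two quantities are equal. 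As noted in the footnote, the two translations (time-node disjointness $=$ vertex-disjointness in $H$, time-edge disjointness $=$ diagonal-arc-disjointness in $H$) are routine once the dictionary above is in place, so the whole proof amounts to two applications of classical Menger to $H$.
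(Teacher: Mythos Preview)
Your approach is correct and is exactly the argument the paper leaves implicit: the corollary is stated without proof, and its footnote already tells you that both statements are simply the vertex and arc forms of classical Menger applied to the static expansion $H$, so dropping the $w_{ij}$ gadget (needed in Theorem~\ref{the:dmeng} only because out-disjointness bundles several diagonal arcs into one departure event) and flagging the terminal-column bookkeeping as the sole care point is precisely right. One small slip: $n$ is not in general an upper bound on the number of time-edge disjoint $(s,v)$-journeys (a single $sv$-edge carrying $n{+}1$ labels already gives $n{+}1$ such journeys), so take the vertical-arc capacity to exceed $|\lambda|$ rather than $n$; with that adjustment your min-cut argument goes through verbatim.
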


The following version is though violated: ``the maximum number of out-disjoint (or in-disjoint) journeys from $s$ to $v$ is equal to the minimum number of time-edges needed to separate $s$ from $v$'' (see Figure \ref{fig:mv-ex}). The same holds for the original statement of Menger's theorem as discussed in the beginning of this section (see \cite{KKK00}). 

\begin{figure}[!hbtp]
\centering{
\includegraphics[width=0.6\textwidth]{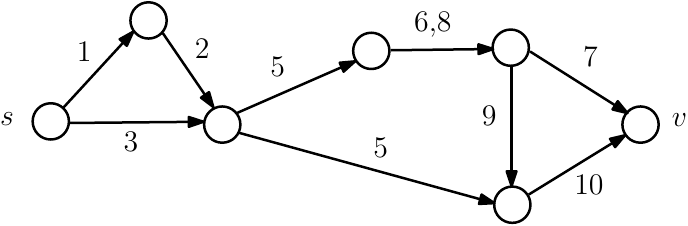}
}
\caption{A violation of an invalid Menger's analogue. Both edges labeled 5 must be removed to separate $s$ from $v$ however there are no two out-disjoint journeys from $s$ to $v$ (all $(s,v)$-journeys must use some edge labeled 5).} \label{fig:mv-ex}
\end{figure}

\subsection{An Application: Foremost Dissemination (Journey Packing)}
\label{subsec:application}

Consider the following problem. We are given a temporal graph $\lambda(G)$, where $G=(V,E)$, a source node $s$, a sink node $v$ and an integer $q$. We are asked to find the minimum arrival time of a set of $q$ out-disjoint $(s,v)$-journeys or even the minimizing set itself.

By exploiting the Menger's analogue proved in Theorem \ref{the:dmeng} (and in order to provide an example application of it), we give an alternative (and probably simpler to appreciate) proof of the following Lemma from \cite{DPRS13} (stated as Lemma \ref{lem:gathering} below) holding for a special case of temporal networks, namely those that have \emph{connected instances}. Formally, a temporal network $\lambda(G)$ is said to have connected instances if $\lambda(G,t)$ is connected at all times $t\in\bbbn$. The problem under consideration is distributed $k$-token dissemination: there are $k$ tokens assigned to some given source nodes. In each round (i.e.~discrete moment in the temporal network), each node selects a single token to be sent to all of its current neighbors (i.e.~broadcast). The current neighbors at round $i$ are those defined by $E(i)$. The goal of a distributed protocol (or of a centralized strategy for the same problem) is to deliver all tokens to a given sink node $v$ as fast as possible. We assume that the algorithms know the temporal network in advance.

\begin{lemma} \label{lem:gathering}
Let there be $k\leq n$ tokens at given source nodes and let $v$ be an arbitrary node. Then, all the tokens can be sent to $v$ using broadcasts in $O(n)$ rounds.
\end{lemma}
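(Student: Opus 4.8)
The plan is to reduce the token-dissemination statement to a statement about out-disjoint journeys and then invoke the Menger's temporal analogue (Theorem~\ref{the:dmeng}). The key observation is that a broadcast schedule delivering all $k$ tokens to $v$ is essentially a collection of $k$ out-disjoint $(s_i,v)$-journeys, one per token: since in each round each node forwards only a single token, two tokens cannot leave the same node at the same time, which is exactly the out-disjointness condition. Conversely, given a set of out-disjoint journeys carrying the tokens toward $v$, each node can in each round broadcast the (unique) token that its journeys require it to forward at that time, so the existence of such a journey packing yields a valid protocol with arrival time equal to the arrival time of the packing. Hence it suffices to show that, in a temporal graph with connected instances, there is always a set of $k$ out-disjoint $(s_i,v)$-journeys all arriving within $O(n)$ rounds (when sources are collapsed to a single super-source, or handled token-by-token with an appropriate accounting).

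The core step is the lower bound on ``node departure time'' cuts. By Theorem~\ref{the:dmeng}, the maximum number of out-disjoint $(s,v)$-journeys equals the minimum number of node departure times $(u,t)$ whose removal separates $s$ from $v$. So I want to argue: in any temporal graph whose instances are all connected, for every window of $c\cdot n$ consecutive rounds (for a suitable constant $c$), no set of fewer than $k$ node departure times drawn from that window can separate $s$ from $v$. The intuition is that connectivity at each instant guarantees that information (a token) can always make progress toward $v$ unless genuinely many departure times are blocked: in a connected graph on $n$ nodes, a single round of broadcast from a set $R$ of already-informed nodes reaches at least one new node unless $R=V$, and over $n-1$ rounds one can push a token all the way to $v$; blocking this for one token costs at least one departure time per round it is delayed, and the out-disjointness forces distinct departure times for distinct tokens. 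Making this into a clean cut-size bound — ``any separating set within an $O(n)$-length window has size $\geq k$'' — and then concluding via the max-flow/min-cut equivalence that $k$ out-disjoint journeys arriving within $O(n)$ rounds exist, is the substance of the argument.

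The main obstacle I expect is the bookkeeping around multiple distinct source nodes and around pinning down the constant in $O(n)$. With $k$ different sources one cannot directly speak of ``$(s,v)$-journeys''; the standard fix is to add a virtual super-source $s^\ast$ connected to each real source at round $0$ (or to run the Menger argument on a slightly modified expansion graph), but one must check that the out-disjointness at $s^\ast$ does not artificially constrain the real schedule, and that the $k\leq n$ hypothesis is what makes the virtual edges fit in a single round. The second delicate point is quantitative: one needs that within $O(n)$ rounds the minimum node-departure-time cut is at least $k$, which requires the ``progress per round'' argument to be made uniformly, i.e.\ that delaying \emph{any} token by one round genuinely consumes a fresh departure time not reusable by another token — this is exactly where out-disjointness (rather than mere edge-disjointness) is essential, and where the earlier remark that the time-edge version of Menger fails must be kept in mind. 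Once the cut bound is in hand, the conclusion is immediate from Theorem~\ref{the:dmeng} and the flow integrality already invoked in its proof.
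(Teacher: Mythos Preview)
Your plan is exactly the paper's: add a super-source with edges labeled $1,\ldots,k$ to the $k$ tokens, shift the original labels by $k$, and invoke Theorem~\ref{the:dmeng} to reduce the existence of $k$ out-disjoint $(s',v)$-journeys to a lower bound on the minimum node-departure-time cut. The one place you are overcomplicating things is the cut bound itself. You do \emph{not} need any per-token accounting or ``one departure time per round of delay per token'' reasoning (which, as stated, is circular: it argues about journeys to bound cuts). The paper's argument is a one-line pigeonhole on the cut side alone: the augmented temporal graph has $n+2k$ rounds, so removing $y<k$ departure times leaves more than $n$ rounds entirely untouched; in each untouched round the instance is connected, so the set of vertices already reached from $s'$ strictly grows, and after more than $n$ such rounds $v$ is reached --- hence no set of fewer than $k$ departure times separates $s'$ from $v$. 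With that observation your worries about bookkeeping and about $k\leq n$ (which is what makes the $k$ super-source labels fit before the shifted window without interfering) disappear.
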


Let $S=\{s_1,s_2,\ldots,s_h\}$ be the set of source nodes and let $k(s_i)$ be the number of tokens of source node $s_i$, so that $\sum_{1\leq i\leq h} k(s_i)=k$. Clearly, it suffices to prove the following lemma.

\begin{lemma} \label{lem:tok}
We are given a temporal graph $\lambda(G)$ with connected instances and age $\alpha(\lambda)=n+k$. 
We are also given a set of source nodes $S\subseteq V$, a mapping $k:S\rightarrow \bbbn_{\geq 1}$ so that $\sum_{s\in S} k(s)=k$, and a sink node $v$. Then there are at least $k$ out-disjoint journeys from $S$ to $v$ such that $k(s_i)$ journeys leave from each source node $s_i$.
\end{lemma}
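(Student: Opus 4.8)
The plan is to reduce the statement to the single‑source Menger analogue of Theorem~\ref{the:dmeng}. I would first build an auxiliary temporal graph $\lambda'(G')$ from $\lambda(G)$ by adding a super‑source $\sigma$ and one edge $(\sigma,s_i)$ for each $s_i\in S$. Assuming without loss of generality that $\lambda_{\min}=1$, the original labels occupy rounds $1,\ldots,n+k$; I would place all the new edges on fresh \emph{earlier} labels by partitioning $\{-k,-k+1,\ldots,-1\}$ into consecutive blocks $B_1,\ldots,B_h$ with $|B_i|=k(s_i)$ and setting $\lambda(\sigma s_i)=B_i$. Any journey of $\lambda'(G')$ out of $\sigma$ begins with a single time‑edge $(\sigma,s_i,\tau)$, $\tau<0$, and then stays inside $\lambda(G)$ forever (it cannot use a $\sigma$‑edge again, since all its labels are negative and a journey is strictly increasing). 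Hence $\sigma$‑$v$ journeys of $\lambda'(G')$ correspond to journeys of $\lambda(G)$ from $S$ to $v$ carrying one such prefix, and deleting these (pairwise distinct‑timed) prefixes preserves out‑disjointness.

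With this in place it suffices to exhibit $k$ out‑disjoint $\sigma$‑$v$ journeys in $\lambda'(G')$: $\sigma$ has exactly the $k$ departure times $-k,\ldots,-1$, so any $k$ out‑disjoint such journeys leave $\sigma$ at \emph{all} of them, and since the $B_i$ are disjoint, exactly $|B_i|=k(s_i)$ of the journeys use edge $(\sigma,s_i)$ and therefore start at $s_i$ once the prefix is stripped, which is precisely the conclusion of the lemma.

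By Theorem~\ref{the:dmeng} applied to $\lambda'(G')$ with source $\sigma$ and sink $v$, the maximum number of out‑disjoint $\sigma$‑$v$ journeys equals the minimum number of node departure times needed to separate $\sigma$ from $v$, so what remains --- and what I expect to be the heart of the argument --- is to show that no set $D$ of node departure times with $|D|\le k-1$ separates $\sigma$ from $v$. Write $D=D^-\cup D^+$, where $D^-$ consists of the node departure times at negative rounds (these remove labels of the new $(\sigma,s_i)$‑edges) and $D^+$ of those at rounds $\ge 1$ (labels of original time‑edges). Each element of $D^-$ destroys at most one of the $k$ departure times of $\sigma$, so at least $k-|D^-|\ge |D^+|+1\ge 1$ of them survive; fix a surviving one, $\sigma\xrightarrow{\tau^*}s_{i^*}$. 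It then remains to find a journey from $s_{i^*}$ to $v$ inside $\lambda(G)$ avoiding $D^+$, and here I would invoke the connected‑instances hypothesis together with the large age: run the standard ``reachable set'' process from $s_{i^*}$ over rounds $1,\ldots,n+k$; in every round $t$ untouched by $D^+$ the instance $\lambda(G,t)$ is connected, hence contains an edge across the current reached set and (since that set was populated at rounds $<t$) the set grows by at least one. As $D^+$ touches at most $|D^+|\le k-1$ rounds, at least $(n+k)-(k-1)=n+1$ rounds are ``good'', more than enough for the reached set to sweep all of $V$ --- in particular to capture $v$. Prepending $(\sigma,s_{i^*},\tau^*)$ to the resulting journey yields a $\sigma$‑$v$ journey of $\lambda'(G')$ avoiding $D$, contradicting the assumption that $D$ is a separator and finishing the proof.
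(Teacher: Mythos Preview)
Your proposal is correct and follows essentially the same route as the paper: introduce a super-source with exactly $k$ time-edges to the sources (one per token), invoke Theorem~\ref{the:dmeng}, and argue that any set of at most $k-1$ node departure times leaves at least one super-source edge intact and enough fully-connected rounds for the reachable set to absorb all of $V$. The only cosmetic difference is that the paper shifts the original labels forward by $k$ and places the new edges at times $1,\ldots,k$ (so that all labels remain in $\bbbn$ as required by the model), whereas you place them at negative times; the two constructions are equivalent up to a global time-shift.
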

\begin{proof}
We conceive $k(s)$ as the number of tokens of source $s$. Number the tokens arbitrarily. Create a supersource node $s^\prime$ and connect it to the source node with token $i$ by an edge labeled $i$. Increase all other edge labels by $k$. Clearly the new temporal graph $D=\lambda^\prime(G^\prime)$ has asymptotically the same age as the original and all properties have been preserved (we just shifted the original temporal graph in the time dimension). Moreover, if there are $k$ out-disjoint journeys from $s^\prime$ to $v$ in $D$ then by construction of the edges leaving $s^\prime$ we have that precisely $k(s)$ of these journeys must be leaving from each source $s\in S$. So it suffices to show that there are $k$ out-disjoint journeys from $s^\prime$ to $v$. By Theorem \ref{the:dmeng} it is equivalent to show that the minimum number of departure times that must be removed from $D$ to separate $s^\prime$ from $v$ is $k$. Assume that we remove $y<k$ departure times. Then for more than $n$ rounds all departure times are available (as we have $n+2k$ rounds and we just have $y<k$ removals). As every instance of $G$ is connected, we have that there is always an edge in the cut between the nodes that have been reached by $s^\prime$ already and those that have not, unless we remove some departure times. As for more than $n$ rounds all departure times are available it is immediate to observe that $s^\prime$ reaches $v$ implying that we cannot separate $s^\prime$ from $v$ with less that $k$ removals and this completes the proof.
\qquad  
\end{proof}

\part*{Part II}

\section{Minimum Cost Temporal Connectivity}
\label{min-cost-connectivity-sec}

In this section, we introduce some cost measures for maintaining different
types of temporal connectivity. According to these temporal connectivity
types, individuals are required to be capable to communicate with other
individuals over the dynamic network, possibly with further restrictions on
the timing of these connections. We initiate this study by considering the
following fundamental problem: Given a (di)graph $G$, assign labels to the
edges of $G$ so that the resulting temporal graph $\lambda (G)$ minimizes
some parameter and at the same time preserves some connectivity property of $%
G$ in the time dimension. For a simple illustration of this, consider the
case in which $\lambda (G)$ should contain a journey from $u$ to $v$ if and
only if there exists a path from $u$ to $v$ in $G$. In this example, the
reachabilities of $G$ completely define the temporal reachabilities that $%
\lambda (G)$ is required to have.

We consider two cost optimization criteria for a (di)graph $G$. The first
one, called \emph{temporality} of $G$, measures the maximum number of labels
that an edge of $G$ has been assigned. The second one, called \emph{temporal
cost} of $G$, measures the total number of labels that have been assigned to
all edges of $G$. That is, if we interpret the number of assigned labels as
a measure of \emph{cost}, the temporality (resp.~the temporal cost)\ of $G$
is a measure of the decentralized (resp.~centralized) cost of the network,
where only the cost of individual edges (resp.~the total cost over all
edges) is considered. We introduce these cost parameters in Definition~\ref%
{temporality-cost-def}. Each of these two cost measures can be minimized
subject to some particular connectivity property $\mathcal{P}$ that the
labeled graph $\lambda (G)$ has to satisfy. For simplicity of notation, we
consider in Definition~\ref{temporality-cost-def} the connectivity property $%
\mathcal{P}$ as a subset of the set $\mathcal{L}_{G}$ of all possible
labelings $\lambda $ on the (di)graph $G$. Furthermore, the minimization of
each of these two cost measures can be affected by some problem-specific
constraints on the labels that we are allowed to use. We consider here one
of the most natural constraints, namely an upper bound on the \emph{age} of
the constructed labeling $\lambda $.

\begin{definition}
\label{temporality-cost-def}Let $G=(V,E)$ be a (di)graph, $\alpha _{\max}\in \mathbb{N}$, and $\mathcal{P}$ be a connectivity property. 
Then the \emph{temporality} of $(G,\mathcal{P},\alpha _{\max })$ is%
\begin{equation*}
\tau (G,\mathcal{P},\alpha _{\max })=\min_{\lambda \in \mathcal{P\cap L}%
_{G,\alpha _{\max }}}\max_{e\in E}|\lambda (e)|
\end{equation*}%
and the \emph{temporal cost} of $(G,\mathcal{P},\alpha _{\max })$ is%
\begin{equation*}
\kappa (G,\mathcal{P},\alpha _{\max })=\min_{\lambda \in \mathcal{P\cap L}%
_{G,\alpha _{\max }}}\sum_{e\in E}|\lambda (e)|
\end{equation*}%
Furthermore $\tau (G,\mathcal{P})=\tau (G,\mathcal{P},\infty )$ and $\kappa
(G,\mathcal{P})=\kappa (G,\mathcal{P},\infty )$.
\end{definition}

Note that Definition~\ref{temporality-cost-def} can be stated for an
arbitrary property $\mathcal{P}$ of the labeled graph $\lambda (G)$ (e.g.
some proper coloring-preserving property). Nevertheless, we only consider
here $\mathcal{P}$ to be a connectivity property of $\lambda (G)$. In
particular, we investigate the following two connectivity properties $%
\mathcal{P}$:

\begin{itemize}
\item \emph{all-paths}$(G)=\{\lambda \in \mathcal{L}_{G}:$ for all simple
paths $P$ of $G$, $\lambda $ preserves $P\}$,

\item \emph{reach}$(G)=\{\lambda \in \mathcal{L}_{G}:$ for all $u,v\in V$
where $v$ is reachable from $u$ in $G$, $\lambda$ preserves at least one
simple path from $u$ to $v\}$.
\end{itemize}

\subsection{Basic Properties of Temporality Parameters}
\label{basic-properties-cost-subsec}

\subsubsection{Preserving All Paths}

We begin with some simple observations on $\tau(G,\ap)$. Recall that given a (di)graph $G$ our goal is to label $G$ so that all simple paths of $G$ are preserved by using as few labels per edge as possible. From now on, when we say ``graph'' we will mean a directed one and we will state it explicitly when our focus is on undirected graphs.

Another interesting observation is that if $p(G)$ is the length of the longest path in $G$ then we can trivially preserve all paths of $G$ by using $p(G)$ labels per edge. Give to every edge the labels $\{1,2,\ldots,p(G)\}$ and observe that for every path $e_1,e_2,\ldots,e_k$ of $G$ we can use the increasing sequence of labels $1,2,\ldots,k$ due to the fact that $k\leq p(G)$. Thus, we conclude that the upper bound $\tau(G,\ap)\leq p(G)$ holds for all graphs $G$. Of course, note that equality is easily violated. For example, a directed line has $p(G)=n$ but $\tau(G,\ap)=1$.

\begin{observation}
$\tau(G,\text{all paths})\leq p(G)$ for all graphs $G$.
\end{observation}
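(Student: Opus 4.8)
The plan is to exhibit an explicit labeling of $G$ that uses at most $p(G)$ labels on each edge and preserves every simple path of $G$. First I would set $p = p(G)$, the number of edges in a longest (directed) path of $G$, and assign to \emph{every} edge $e \in E$ the label set $\lambda(e) = \{1, 2, \ldots, p\}$. This clearly gives $\max_{e \in E} |\lambda(e)| = p$, so it remains only to check that $\lambda \in \text{\emph{all paths}}(G)$, i.e. that $\lambda$ preserves every simple path of $G$.

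For the correctness check, take an arbitrary simple path $P = (e_1, e_2, \ldots, e_k)$ of $G$. Since $P$ is a path with $k$ edges, its length is at most the length of a longest path, so $k \leq p$. Now consider the strictly increasing sequence of labels $1 < 2 < \cdots < k$. For each $i$ with $1 \leq i \leq k$ we have $i \leq k \leq p$, hence $i \in \{1, \ldots, p\} = \lambda(e_i)$. Thus $(e_1, 1), (e_2, 2), \ldots, (e_k, k)$ is a valid journey along $P$, which by definition means $\lambda$ preserves $P$. Since $P$ was arbitrary, $\lambda \in \text{\emph{all paths}}(G)$, and therefore $\tau(G, \text{\emph{all paths}}) \leq \max_{e \in E} |\lambda(e)| = p(G)$.

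There is essentially no obstacle here: the only subtlety worth stating explicitly is that ``longest path'' refers to the number of edges (equivalently, hops), so that a $k$-edge simple path satisfies $k \leq p(G)$, which is exactly what makes the canonical increasing labels $1, \ldots, k$ available on the edges of $P$. I would also remark, as the excerpt already does for the directed line, that this bound is far from tight in general, so the inequality is the right statement rather than an equality.
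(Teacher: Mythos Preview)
Your proposal is correct and follows exactly the same argument as the paper: assign the label set $\{1,2,\ldots,p(G)\}$ to every edge and observe that any simple path $e_1,\ldots,e_k$ has $k\leq p(G)$, so the labels $1,\ldots,k$ yield a journey along it. The paper gives precisely this reasoning in the paragraph immediately preceding the Observation.
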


\noindent \textbf{Directed Rings.} The following proposition states that if $G$ is a directed ring then the temporality of preserving all paths is 2. This means that the minimum number of labels per edge that preserve all simple paths of a ring is 2. As the proof was already sketched in Section \ref{sec:intro}, we don't provide a proof here.

\begin{proposition} \label{pro:ring}
$\tau(G,\text{all paths})=2$ when $G$ is a ring and $\tau(G,\text{all paths})\geq 2$ when $G$ contains a ring.
\end{proposition}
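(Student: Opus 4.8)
The plan is to prove the two parts separately: first establish $\tau(G,\ap)=2$ when $G$ is a ring, and then observe that the lower bound "$\geq 2$" for graphs containing a ring follows essentially for free from the ring argument. For the ring case I would split into the matching upper and lower bounds, exactly as sketched in the introduction's ring example.

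For the \textbf{upper bound} $\tau(G,\ap)\leq 2$ when $G$ is a directed ring $u_1,u_2,\ldots,u_n$ with edges $(u_i,u_{i+1})$ (indices mod $n$, so $u_{n+1}=u_1$): I would exhibit the explicit labeling $\lambda((u_i,u_{i+1}))=\{i,\,n+i\}$ for $1\leq i\leq n$. Then I would verify that every simple path of the ring is preserved. A simple path in a directed ring is just an arc of consecutive edges $(u_j,u_{j+1}),(u_{j+1},u_{j+2}),\ldots$ of length at most $n$; starting from edge $(u_j,u_{j+1})$ I use label $j$, and for each subsequent edge $(u_{j+\ell},u_{j+\ell+1})$ I use label $j+\ell$ if $j+\ell\leq n$, and label $n+(j+\ell-n)=j+\ell$ wait — more carefully, I assign along the path the consecutive integers $j, j+1, \ldots, j+k-1$ where $k\leq n$ is the path length, using the "first copy" label $i$ on edge $(u_i,u_{i+1})$ while $i\leq n$ and switching to the "second copy" label $n+i'$ (which equals $j+\ell$ in value) once the index wraps past $n$. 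This yields a strictly increasing sequence of labels of length $k$, so the path is a journey. Since $\max_e|\lambda(e)|=2$, this gives $\tau(G,\ap)\leq 2$.

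For the \textbf{lower bound} $\tau(G,\ap)\geq 2$ when $G$ contains a ring, I would argue that one label per edge on a ring is not enough, which simultaneously handles both the ring case and the containment case. Suppose, for contradiction, that $\lambda$ assigns exactly one label to each edge of a directed ring $u_1,\ldots,u_n$ contained in $G$, and preserves all simple paths. Consider the simple path $P_1$ that traverses $(u_1,u_2),(u_2,u_3),\ldots,(u_{n-1},u_n)$: preserving it forces $\lambda((u_1,u_2))<\lambda((u_2,u_3))<\cdots<\lambda((u_{n-1},u_n))$, so in particular $\lambda((u_{n-1},u_n))>\lambda((u_1,u_2))$. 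Now consider the simple path $P_2$ that traverses $(u_{n-1},u_n)$ first and then wraps around: $(u_{n-1},u_n),(u_n,u_1),(u_1,u_2)$ (or any simple path using $(u_{n-1},u_n)$ strictly before $(u_1,u_2)$). Preserving $P_2$ forces $\lambda((u_{n-1},u_n))<\lambda((u_1,u_2))$, contradicting the previous inequality. Hence at least one edge of the ring must carry $\geq 2$ labels, i.e.\ $\tau(G,\ap)\geq 2$. (If $G$ merely contains a ring, note that any labeling of $G$ restricts to a labeling of the ring's edges preserving the ring's simple paths, so the same contradiction applies.)

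The main obstacle I expect is purely bookkeeping in the upper bound: making the modular index arithmetic in the labeling $\{i,n+i\}$ fully rigorous when a path wraps around $u_n\to u_1$, so that one always extracts a genuinely strictly increasing length-$k$ subsequence from the two available labels per edge. This is routine but needs care to state cleanly; everything else (the two-path contradiction for the lower bound, and the reduction from "contains a ring" to "is a ring") is immediate.
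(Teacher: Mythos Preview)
Your proposal is correct and follows essentially the same approach as the paper: the same two-path contradiction ($P_1=u_1,\ldots,u_n$ versus $P_2=u_{n-1},u_n,u_1,u_2$) for the lower bound, and the same explicit labeling $\lambda((u_i,u_{i+1}))=\{i,n+i\}$ for the upper bound. The only addition is that you make the ``contains a ring'' reduction explicit (restricting $\lambda$ to the ring edges), which the paper leaves implicit.
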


\noindent \textbf{Directed Acyclic Graphs.} A topological sort of a digraph $G$ is a linear ordering of its nodes such that if $G$ contains an edge $(u,v)$ then $u$ appears before $v$ in the ordering. It is well known that a digraph $G$ can be topologically sorted iff it has no directed cycles that is iff it is a DAG. A topological sort of a graph can be seen as placing the nodes on a horizontal line in such a way that all edges go from left to right; see e.g.~\cite[page 549]{CLRS01}.

\begin{proposition} \label{pro:dag-all-paths}
If $G$ is a DAG then $\tau(G,\text{all paths})=1$.
\end{proposition}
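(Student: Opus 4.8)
The plan is to exploit the defining property of a DAG, namely that it admits a topological sort. Since $G$ is a DAG, fix a topological ordering of its vertices, say $v_1, v_2, \ldots, v_n$, so that every edge $(v_i, v_j)$ of $G$ satisfies $i < j$. The idea is to let each edge carry a single label equal to its position in this left-to-right layout. Concretely, I would assign to each edge $(v_i, v_j) \in E$ the single label $\lambda((v_i, v_j)) = i$; equivalently one could use the label $j$, or any fixed monotone function of the tail/head index. This gives a labeling with exactly one label per edge, so $\max_{e\in E}|\lambda(e)| = 1$, which forces $\tau(G,\ap) \le 1$; and since every nonempty edge set gives temporality at least $1$, we will get equality once we verify that this labeling lies in $\emph{all-paths}(G)$.

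The key step is therefore to check that every simple path of $G$ is preserved by this labeling. Take any simple path $P = (e_1, e_2, \ldots, e_k)$ in $G$ with vertex sequence $v_{i_0}, v_{i_1}, \ldots, v_{i_k}$, so that $e_t = (v_{i_{t-1}}, v_{i_t})$. Because each $e_t$ is an edge of $G$ and $G$ respects the topological order, we have $i_{t-1} < i_t$ for every $t = 1, \ldots, k$. Hence $i_0 < i_1 < \cdots < i_{k-1}$, and the labels assigned along $P$ are $\lambda(e_1) = i_0 < \lambda(e_2) = i_1 < \cdots < \lambda(e_k) = i_{k-1}$, a strictly increasing sequence. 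By the definition of a journey, $P$ is time-respecting, i.e.\ $\lambda$ preserves $P$. Since $P$ was an arbitrary simple path, $\lambda \in \emph{all-paths}(G)$, and combined with the trivial lower bound $\tau(G,\ap) \ge 1$ (any path of length $\ge 1$ needs at least one label on each of its edges, and a DAG with an edge has such a path; if $G$ is edgeless the claim is vacuous or one adopts the convention $\tau = 0$) we conclude $\tau(G,\ap) = 1$.

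There is essentially no hard obstacle here: the entire content is the observation that a topological sort turns ``follows a directed path'' into ``indices strictly increase,'' which is exactly the strictly-increasing-labels condition demanded by journeys. The only things to be slightly careful about are (i) stating the lower bound cleanly — handling (or excluding by convention) the degenerate edgeless case, and noting that as long as $G$ has at least one edge we have $\tau(G,\ap)\ge 1$ trivially; and (ii) making sure the argument does not secretly need the path to be simple — it does not, the same proof works for arbitrary directed walks, but restricting to simple paths as in the definition of $\emph{all-paths}(G)$ costs nothing. I would present the construction first, then the one-line verification via the topological order, then the matching lower bound, and remark that this also re-derives the earlier ``directed line'' example as a special case.
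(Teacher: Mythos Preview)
Your proposal is correct and follows essentially the same approach as the paper: fix a topological sort, label each edge by the index of its tail, and observe that along any simple path the tail indices strictly increase, so the labels do too. The only difference is cosmetic---you discuss the trivial lower bound $\tau(G,\ap)\ge 1$ and the edgeless corner case more explicitly than the paper does.
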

\begin{proof}
Take a topological sort $u_1,u_2,\ldots,u_n$ of $G$. Clearly, every edge is of the form $(u_i,u_j)$ where $i<j$. Give to every edge $(u_i,u_j)$ label $i$, that is $\lambda(u_i,u_j)=i$ for all $(u_i,u_j)\in E$. Now take any node $u_l$. Each of its incoming edges has some label $l^\prime<l$ and all its outgoing edges have label $l$. Now take any simple path $p=v_1,v_2,\ldots,v_k$ of $G$. Clearly, $v_i$ appears before $v_{i+1}$ in the topological sort for all $1\leq i\leq k-1$, which implies that $\lambda(v_i,v_{i+1})<\lambda(v_{i+1},v_{i+2})$, for all $1\leq i\leq k-2$. This proves that $p$ is preserved. As we have preserved all simple paths with a single label on every edge, we conclude that $\tau(G,\ap)=1$ as required.
\qquad
\end{proof}

\subsubsection{Preserving All Reachabilities}

Now, instead of preserving all paths, we impose the apparently simpler requirement of preserving just a single path between every reachability pair $u,v\in V$. We claim that it is sufficient to understand how $\tau(G,\reach)$, behaves on strongly connected digraphs. Let $\cc(G)$ be the set of all strongly connected components of a digraph $G$. The following lemma proves that, w.r.t. the $\reach$ property, the temporality of any digraph $G$ is equal to the maximum temporality of its components.

\begin{lemma} \label{lem:components}
$\tau(G,\reach) = \max\{1, \max_{C\in\cc(G)} \tau(C,\reach)\}$ for every digraph $G$ with at least one edge. In the case of no edge, $\tau(G,\reach) = 0$ trivially.
\end{lemma}
\begin{proof}
Take any digraph $G$. Now take the DAG $D$ of the strongly connected components of $G$. The nodes of $D$ are the components of $G$ and there is an edge from component $C$ to component $C^\prime$ if there is an edge in $G$ from some node of $C$ to some node of $C^\prime$. As $D$ is a DAG, we can obtain a topological sort of it which is a labeling $C_1,C_2,\ldots ,C_t$ of the $t$ components so that all edges between components go only from left to right.

In the case where at least one component has at least 2 nodes (in which case $\max_{C\in\cc(G)} \tau(C,\reach)\geq 1$), we have to prove that we can label $G$ by using at most $\max_{1\leq i\leq t} \tau(C_i,\reach)$ labels per edge and that we cannot do better than this. Consider the following labeling process. 
For each component $C_i$ define $d_i= \min_{\lambda\in\cc_i}(\lambda_{\max}(\lambda) - \lambda_{\min}(\lambda))$, where $\cc_i$ is the set of all labelings of $C_i$ that preserve all of its reachabilities using at most $\tau(C_i,\reach)$ labels per edge. Note that any $C_i$ can be labeled beginning from any desirable $\lambda_{\min}$ with at most $\tau(C_i,\reach)$ labels per edge and with $\lambda_{max}$ equal to $\lambda_{\min}+d_i$. Now, label component $C_1$ with $\lambda_{\min}=1$ and $\lambda_{\max}=1+d_1$. Label all edges leaving $C_1$ with label $d_1+2$. Label component $C_2$ with $\lambda_{\min}=d_1+3$ and $\lambda_{\max}=(d_1+3)+d_2$ and all its outgoing edges with label $(d_1+3)+d_2+1$. In general, label component $C_i$ with $\lambda_{\min}=1+\sum_{1\leq j\leq i-1} (d_j+2)$ and $\lambda_{\max}=\lambda_{\min}+d_i$ and label all edges leaving $C_i$ with label $\lambda_{\max}+1$. It is not hard to see that this labeling scheme preserves all reachabilities of $G$ using just one label on each edge of $G$ corresponding to an edge of $D$ and at most $\tau(C_i,\reach)$ labels per edge inside each component $C_i$. Thus, it uses at most $\max_{1\leq i\leq t} \tau(C_i,\reach)$ labels on every edge. By observing that for each strongly connected component $C_i$, $\tau(C_i,\reach)$ must be paid by any labeling of $G$ that preserves all reachabilities in that component, the equality $\tau(G,\reach) = \max_{C\in\cc(G)} \tau(C,\reach)$ follows. 

In the extreme case where all components are just single nodes (in which case $\max_{C\in\cc(G)} \tau(C,\reach)=0$), it holds that $D=G$, therefore $G$ itself is a DAG and we only need 1 label per edge (as in Proposition \ref{pro:dag-all-paths}) and, thus, $\tau(G,\reach) = 1$. 
\qquad
\end{proof}

Lemma \ref{lem:components} implies that any upper bound on the temporality of preserving the reachabilities of strongly connected digraphs can be used as an upper bound on the temporality of preserving the reachabilities of general digraphs. In view of this, we focus on strongly connected digraphs $G$. 

We begin with a few simple but helpful observations. Obviously, $\tau(G,\reach)\leq\tau(G,\ap)$ as any labeling that preserves all paths trivially preserves all reachabilities as well. If $G$ is a clique then $\tau(G,\reach) = 1$ as giving to each edge a single arbitrary label (e.g.~label 1 to all) preserves all direct connections (one-step reachabilities) which are all present. 
If $G$ is a directed ring (which is again strongly connected) then it is easy to see that $\tau(G,\reach) = 2$.  
An interesting question is whether there is some bound on $\tau(G,\reach)$ either for all digraphs or for specific families of digraphs. 
The following lemma proves that indeed there is a very satisfactory generic upper bound.

\begin{lemma} \label{lem:strongly-connected}
$\tau(G,\reach)\leq 2$ for all strongly connected digraphs $G$.
\end{lemma}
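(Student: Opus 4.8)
The plan is to route every required $(u,v)$-journey through one fixed vertex: first \emph{into} that vertex using only ``small'' labels, then \emph{out} of it using only ``large'' labels. Concretely, fix any $r\in V$. Since $G$ is strongly connected, running a BFS from $r$ in $G$ yields a spanning out-arborescence $T_{\mathrm{out}}$ rooted at $r$ (every vertex is reachable from $r$), and running a BFS from $r$ in the digraph obtained by reversing all edges of $G$ yields a spanning in-arborescence $T_{\mathrm{in}}$ rooted at $r$. The first things I would verify are the two structural facts about arborescences that make the construction go through: each non-root vertex is incident to exactly one tree edge in the ``root direction'' (so the labelling below is well defined), and crossing such an edge changes the distance to $r$ by exactly one.

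Next comes the labelling. Write $\mathrm{dep}_{\mathrm{in}}(u)\in\{1,\dots,n-1\}$ for the distance from $u$ to $r$ inside $T_{\mathrm{in}}$ and assign to the unique $T_{\mathrm{in}}$-edge leaving $u$ (towards $u$'s parent) the label $n-\mathrm{dep}_{\mathrm{in}}(u)$; dually, write $\mathrm{dep}_{\mathrm{out}}(u)\in\{1,\dots,n-1\}$ for the distance from $r$ to $u$ inside $T_{\mathrm{out}}$ and assign to the unique $T_{\mathrm{out}}$-edge entering $u$ the label $n-1+\mathrm{dep}_{\mathrm{out}}(u)$. All $T_{\mathrm{in}}$-labels then lie in $\{1,\dots,n-1\}$ and all $T_{\mathrm{out}}$-labels in $\{n,\dots,2n-2\}$. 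Since an edge $(u,u')$ of $T_{\mathrm{in}}$ forces $u'$ to be the parent of $u$, hence $\mathrm{dep}_{\mathrm{in}}(u')=\mathrm{dep}_{\mathrm{in}}(u)-1$, the assigned labels strictly increase along every directed path of $T_{\mathrm{in}}$ heading to $r$, and symmetrically they strictly increase along every directed path of $T_{\mathrm{out}}$ heading away from $r$. Each edge of $G$ receives at most one label from $T_{\mathrm{in}}$ and at most one from $T_{\mathrm{out}}$ (and $\emptyset$ if it lies in neither), so $\max_{e\in E}|\lambda(e)|\le 2$.

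Finally I would check that $\lambda$ preserves all reachabilities, which for a strongly connected $G$ just means: for every ordered pair $(u,v)$ with $u\ne v$ there is a journey from $u$ to $v$. Take the unique $u$-to-$r$ directed path of $T_{\mathrm{in}}$ followed by the unique $r$-to-$v$ directed path of $T_{\mathrm{out}}$; the first part uses only labels $\le n-1$ and the second only labels $\ge n$, so the concatenated walk carries a strictly increasing label sequence (if $u=r$ or $v=r$ one part is empty, which only makes it easier). This walk need not be simple, but repeatedly deleting a closed sub-walk between two occurrences of a repeated vertex leaves a strictly increasing subsequence on a genuine walk and terminates in a simple time-respecting $(u,v)$-path; hence $\lambda\in\reach(G)$ and $\tau(G,\reach)\le 2$. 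I expect the only genuinely delicate points to be the two arborescence facts above, plus the harmless numeric gap $n-1<n$ that lets the ``in'' part and the ``out'' part splice into a single increasing sequence; notably no age bound is used, which matches $\tau(G,\reach)=\tau(G,\reach,\infty)$.
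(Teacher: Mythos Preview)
Your proof is correct and follows essentially the same approach as the paper: fix a root, build an in-arborescence and an out-arborescence, label the former with small increasing-toward-root labels and the latter with large increasing-from-root labels, and route every reachability through the root. Your treatment is in fact a bit more careful than the paper's, which handles the ``the concatenated walk may repeat vertices'' issue only informally (via a ``second copies'' thought device), whereas you explicitly shortcut repeated vertices and note that a subsequence of a strictly increasing label sequence remains strictly increasing.
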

\begin{proof}
As $G$ is strongly connected, if we pick any node $u$ then for all $v$ there is a $(v,u)$ and a $(u,v)$-path. As for any $v$ there is a $(v,u)$-path, then we may form an in-tree $T_{in}$ rooted at $u$ (that is a tree with all directions going upwards to $u$). Now beginning from the leaves give any direction preserving labeling (just begin from labels 1 at the leaves and increase them as you move upwards). Say that the depth is $k$ which means that you have increased up to label $k$. Now consider an out-tree $T_{out}$ rooted at $u$ that has all edge directions going from $u$ to the leaves. To make things simpler create second copies of all nodes but $u$ so that the two trees are disjoint (w.r.t. to all nodes but $u$). In fact, one tree passes through all the first copies and arrives at $u$ and the other tree begins from $u$ and goes to all the second copies. Now we can begin the labeling of $T_{out}$ from $k+1$ increasing labels as we move away from $u$ on $T_{out}$. This completes the construction.

Now take any two nodes $w$ and $v$. Clearly, there is a time-respecting path from $w$ to $u$ and then a time-respecting path from $u$ to $v$ using greater labels so there is a time-respecting path from $w$ to $v$. Finally, notice that for any edge on $T_{in}$ there is at most one copy of that edge on $T_{out}$ thus clearly we use at most 2 labels per edge. 
\qquad
\end{proof}

Combining Lemma \ref{lem:components} and Lemma \ref{lem:strongly-connected} gives the following theorem:
\begin{theorem}
$\tau(G,\reach)\leq 2$ for all digraphs $G$.
\end{theorem}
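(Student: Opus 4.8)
The plan is to combine the two preceding lemmas directly. Lemma~\ref{lem:strongly-connected} establishes that $\tau(C,\reach)\leq 2$ for every strongly connected digraph $C$, and Lemma~\ref{lem:components} establishes that $\tau(G,\reach)\leq\max_{C\in\cc(G)}\tau(C,\reach)$ for every digraph $G$. Since every strongly connected component $C$ of $G$ is itself a strongly connected digraph, applying Lemma~\ref{lem:strongly-connected} to each $C\in\cc(G)$ yields $\tau(C,\reach)\leq 2$, hence $\max_{C\in\cc(G)}\tau(C,\reach)\leq 2$, and Lemma~\ref{lem:components} then gives $\tau(G,\reach)\leq 2$.

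I would write the proof essentially as one sentence: ``By Lemma~\ref{lem:strongly-connected} we have $\tau(C,\reach)\leq 2$ for every strongly connected component $C$ of $G$, and therefore by Lemma~\ref{lem:components} we get $\tau(G,\reach)\leq\max_{C\in\cc(G)}\tau(C,\reach)\leq 2$.'' The only point worth a word of care is that a single edge (a strongly connected component consisting of one vertex, or a trivial component with no internal edges) trivially satisfies the bound, so the maximum is well-defined and at most $2$; but this is subsumed by the statement of Lemma~\ref{lem:components} as written, so no extra argument is needed.

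There is no real obstacle here: the theorem is a pure corollary, and all the work has already been done in the two lemmas. If anything, the ``hard part'' was proving Lemma~\ref{lem:strongly-connected} (the in-tree/out-tree construction routed through a common hub node $u$, using edge-copies so that each underlying edge receives at most one label from $T_{in}$ and at most one from $T_{out}$) and Lemma~\ref{lem:components} (the time-staggering of component labelings along a topological order of the condensation DAG). Given those, the final statement requires nothing beyond quoting them.
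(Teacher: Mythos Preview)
Your proposal is correct and matches the paper's approach exactly: the paper also presents this theorem as an immediate combination of Lemma~\ref{lem:components} and Lemma~\ref{lem:strongly-connected}, with no additional argument.
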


\subsubsection{Restricting the Age}

Now notice that for all $G$ we have $\tau(G, \reach,$ $d(G))\leq d(G)$; recall that $d(G)$ denotes the diameter of (di)graph $G$. Indeed it suffices to label each edge by $\{1,2,\ldots,d(G)\}$. Since every shortest path between two nodes has length at most $d(G)$, in this manner we preserve all shortest paths and thus all reachabilitities arriving always at most by time $d(G)$, thus we also preserve the diameter. Thus, a clique $G$ has trivially $\tau(G, \reach, d(G)) = 1$ as $d(G)=1$ and we can only have large $\tau(G, \reach, d(G))$ in graphs with large diameter. For example, a directed ring $G$ of size $n$ has $\tau(G, \reach, d(G))=n-1$ (note that on a ring it always holds that $\tau(G, \reach, k)=\tau(G, \ap, k)$, as on a ring it happens that satisfying all reachabilities also satisfies all paths while the inverse is true for all graphs). Indeed, assume that from some edge $e$, label $1\leq i \leq n-1$ is missing. It is easy to see that there is some shortest path between two nodes of the ring that in order to arrive by time $n-1$ must use edge $e$ at time $i$. As this label is missing, it uses label $i+1$, thus it arrives by time $n$ which is greater than the diameter. In this particular example we can preserve the diameter only if all edges have the labels $\{1,2, \ldots,n-1\}$.

On the other hand, there are graphs with large diameter in which $\tau(G,\reach,d(G))$ is small. This may also be the case even if G is strongly connected. For example, consider the graph with nodes $u_1,u_2,\ldots,u_n$ and edges $(u_i, u_{i+1})$ and $(u_{i+1}, u_i)$ for all $1\leq i \leq n-1$. In words, we have a directed line from $u_1$ to $u_n$ and an inverse one from $u_n$ to $u_1$. The diameter here is $n-1$ (e.g.~the shortest path from $u_1$ to $u_n$). On the other hand, we have $\tau(G,\reach,d(G)) = 1$: simply label one path $1,2,..., n-1$ and label the inverse one $1,2,...,n-1$ again, i.e.~give to edges $(u_i, u_{i+1})$ and $(u_{n-i+1}, u_{n-i+2})$ label $i$. The reason here is that there are only two pairs of nodes that must necessarily use the long paths $(u_1,u_n)$ and $(u_n, u_1)$ and preserve the diameter $n-1$. All other smaller shortest paths between other pairs of nodes have now a big gap of $n-1$ to exploit.

We will now demonstrate what makes $\tau(G,\reach,d(G))$ grow. It happens when many maximum shortest paths (those that determine the diameter of $G$) between different pairs of nodes that are additionally unique (the paths), in the sense that we must necessarily take them in order to preserve the reachabilities (it may hold even if
they are not unique but this simplifies the argument), all pass through the same edge $e$ but use $e$ at many different times. It will be helpful to look at Figure \ref{fig:diam}. Each $(u_i,v_i)$-path is a unique shortest path between $u_i$ and $v_i$ and has additionally length equal to the diameter (i.e.~it is also a maximum one), so we must necessarily preserve all 5 $(u_i,v_i)$-paths. Note now that each $(u_i,v_i)$-path passes through $e=(u_1,v_5)$ via its $i$-th edge. Each of these paths can only be preserved without violating $d(G)$ by assigning the labels $1,2,\ldots, d(G)$, however note that then edge $e$ must necessarily have all labels $1,2,\ldots,d(G)$. To see this, notice simply that if any label $i$ is missing from $e$ then there is some maximum shortest path that goes through $e$ at step $i$. As $i$ is missing it cannot arrive sooner than time $d(G)+1$ which violates the preservation of the diameter.

\begin{figure}[!hbtp]
\centering{
\includegraphics[width=0.8\textwidth]{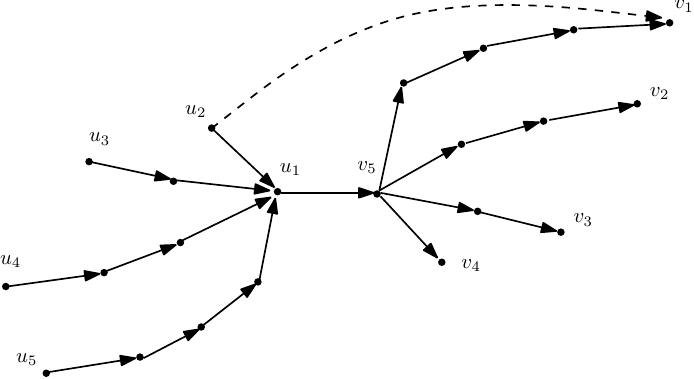}
}
\caption{An example graph in which $\tau(G, \reach, d(G)) = d(G)$. All paths longer than length 5 that are formed are not shortest paths, e.g.~there is a path (the dashed one) of length at most 5 from $u_2$ to $v_1$ and the same for all other such pairs.} \label{fig:diam}
\end{figure} 

\noindent\textbf{Undirected Tree.} Now consider an undirected tree $T$. 

\begin{corollary}
If $T$ is an undirected tree then $\tau(T,\text{all paths},d(T))\leq 2$.
\end{corollary}
\begin{proof}
This follows as a simple corollary of Lemma \ref{lem:strongly-connected}. If we replace each undirected edge by two antiparallel edges, then $T$ is a strongly connected digraph and, additionally, for every ordered pair of nodes $(u,v)$ there is precisely one simple path from $u$ to $v$. The latter implies that preserving all paths of $T$ is equivalent to preserving all reachabilities of $T$. So, all assumptions of Lemma \ref{lem:strongly-connected} are satisfied and therefore $\tau(T,\text{all paths})\leq 2$. Finally, recall that the labeling of the construction in the proof of Lemma \ref{lem:strongly-connected} starts increasing labels level-by-level from the leaves to the root and then from the root to the leaves, therefore the number of increments (i.e., the maximum label used) is upper bounded by the diameter of $T$, thus, $\tau(T,\text{all paths},d(T))\leq 2$ as required.  
\qquad
\end{proof}
\newline

\noindent\textbf{Trade-off on a Ring.} We shall now prove that there is a trade-off between the temporality and the age. In particular, we consider a directed ring $G=(e_1,e_2,\ldots,e_n)$, where the $e_i$ are edges oriented clockwise. As we have already discussed, if $\alpha= n-1$ then $\tau(G,\ap,\alpha)=n-1$ (which is the worst possible) and if $\alpha= 2(n-1)$ then $\tau(G,\ap,\alpha)=2$ (which is the best possible). We now formalize the behavior of $\tau$ as $\alpha$ moves from $n-1$ to $2(n-1)$.

\begin{theorem} \label{the:tradeoff}
If $G$ is a directed ring and $\alpha=(n-1)+k$, where $1\leq k \leq n-1$, then $\tau(G,\text{all paths},\alpha)=\Theta(n/k)$ and in particular $\lfloor\frac{n-1}{k+1}\rfloor\leq\tau(G,\text{all paths},\alpha)\leq\lceil\frac{n}{k+1}\rceil+1$. Moreover, $\tau(G,\text{all paths},n-1)=n-1$ (i.e.~when $k=0$).
\end{theorem}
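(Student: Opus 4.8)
The plan is to treat the ring as a cyclic sequence of $n$ edges $e_1,\ldots,e_n$ and to reason about a single labeling $\lambda$ that preserves every simple (maximal) path. A maximal simple path of the ring has length $n-1$ and is determined by its starting edge: for each $i$, the path $P_i=(e_i,e_{i+1},\ldots,e_{i+n-1})$ (indices mod $n$) must carry a strictly increasing sequence of $n-1$ labels, one from each edge it uses. I will extract from each $P_i$ the constraint that, along the $n-1$ consecutive edges it traverses, some chosen representative labels strictly increase; in particular, reading around the whole cycle, every edge $e_j$ must eventually ``reset'' — if we start at $e_1$ and follow increasing labels, after using edge $e_n$ the path wraps to $e_1$ and needs a label on $e_1$ larger than the one just used on $e_n$, forcing a second ``regime'' of labels. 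Quantitatively, if the age is $\alpha=(n-1)+k$, then the available labels lie in a window of size $\alpha$, but a path of $n-1$ edges with strictly increasing labels consumes at least $n-1$ of these, leaving only $k$ units of ``slack'' to be distributed; I will argue that this limited slack forces many edges to receive several labels.

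For the \textbf{lower bound} $\tau\ge\lfloor\frac{n-1}{k+1}\rfloor+1$, I would argue as follows. Fix an optimal $\lambda$ and WLOG $\lambda_{\min}=1$, $\lambda_{\max}\le\alpha$. Consider the maximal path $P$ starting right after an edge $e$ that attains the smallest label $1$; $P$ ends at $e$. Reading $P$'s increasing label sequence $l_1<l_2<\cdots<l_{n-1}$ with $l_t\in\lambda(e_{\text{(start)}+t-1})$, we have $l_{n-1}\le\alpha=(n-1)+k$, so at most $k$ of the $n-1$ consecutive ``steps'' can be jumps of size $\ge2$; the remaining $\ge n-1-k$ steps are jumps of exactly $1$. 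Now I chain this around the cycle: each maximal path starting at a different edge gives a similar constraint, and because the cycle closes up, the labels used on a fixed edge $e_j$ by the $n$ different maximal paths must take at least $\lceil n/(k+1)\rceil$-ish distinct values — the precise counting is that an edge lying on the ``slow'' (unit-step) portion of too many paths would force a monotone chain of length exceeding $\alpha$ around the ring. I would make this rigorous by a potential/counting argument: assign to each maximal path its set of ``big-jump positions'' ($\le k$ of them), observe that consecutive maximal paths share $n-2$ edges, and deduce that going once around the ring the total ``progress'' is at least $n$ but each unit of progress between two fixed edges can be reused for at most $k+1$ consecutive starting points before a fresh label on some edge is needed; dividing $n-1$ by $k+1$ and adding one for the unavoidable ``wrap'' label yields the bound.

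For the \textbf{upper bound} $\tau\le\lceil\frac{n}{k+1}\rceil+1$, I will give an explicit labeling. Partition the $n$ edges into $\lceil n/(k+1)\rceil$ consecutive blocks $B_1,\ldots,B_r$ each of size $\le k+1$. Within each block assign labels so that traversing the block advances the label by exactly $1$ per edge but allows the block to be ``entered'' at any of several staggered starting levels — concretely, give the $s$-th edge of block $B_q$ the set of labels $\{\,\text{(base of }B_q) + s + (k+1)\cdot\ell : 0\le\ell\le r\,\}$ suitably truncated, so that at most $r+1\le\lceil n/(k+1)\rceil+1$ labels land on each edge and the maximum label is $\le (n-1)+k$. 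Then I check that an arbitrary maximal path, which crosses a prefix/suffix pattern of these blocks, can always pick an increasing selection: within a block it increases by $1$'s; at a block boundary it jumps by at most $k+1-($offset$)$, which is covered by the staggered label copies, and the total arrival time never exceeds $(n-1)+k$ because the path has $n-1$ edges and uses at most $k$ ``block-boundary slacks'' totaling $\le k$. Finally, the endpoint case $k=0$ (age $n-1$): a maximal path of length $n-1$ with strictly increasing labels inside $\{1,\ldots,n-1\}$ must use labels $1,2,\ldots,n-1$ exactly, and since every edge is the $t$-th edge of some maximal path for every $t\in\{1,\ldots,n-1\}$, every edge needs all $n-1$ labels, giving $\tau=n-1$; this also recovers the trade-off statement continuously as $k\to 0$.

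The step I expect to be the main obstacle is the lower bound counting argument: turning the intuitive ``$k$ units of slack shared among $n$ cyclically overlapping maximal paths'' into a clean inequality requires choosing the right invariant (I anticipate tracking, for each starting edge, the sorted label-sequence and a discrete derivative, then summing a carefully chosen quantity around the ring and using that the sequence must return to a small value). Matching the floor in $\lfloor\frac{n-1}{k+1}\rfloor+1$ — rather than an off-by-one weaker bound — will need care about where the single ``extra'' wrap-around label is forced and how block boundaries interact with the age window; the upper-bound construction is comparatively routine once the block sizes are chosen as $k+1$.
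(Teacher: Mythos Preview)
Your lower-bound plan has a genuine gap, and the fix is not a minor bookkeeping issue but a different organizing idea. You propose a global potential/counting argument over all $n$ cyclically overlapping maximal paths, tracking ``big-jump positions'' and summing a discrete derivative around the ring. You yourself flag this as the main obstacle, and indeed it is not clear how to make that sum force many labels on a \emph{single} edge (which is what temporality measures) rather than many labels spread across the ring. The paper sidesteps this entirely: it fixes one edge, say $e_1$, and argues locally about which labels $e_1$ must carry. The key notion is the \emph{delay} of a journey $J$ from $e_j$ as it passes through $e_1$: if $J$ uses label $l$ on $e_1$ and the tail-of-$e_j$-to-head-of-$e_1$ distance is $d_j$, the delay is $l-d_j$. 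Since $J$ has $n-1$ edges and must finish by time $(n-1)+k$, its delay never exceeds $k$; hence the label used on $e_1$ lies in the interval $[d_j,\,d_j+k]$ of length $k+1$. Now pick starting edges spaced $k+1$ apart going counterclockwise from $e_1$: for $i=0,1,\ldots,\lfloor(n-1)/(k+1)\rfloor$, satisfying $e_{n-i(k+1)+1}$ forces a label on $e_1$ from $\{i(k+1)+1,\ldots,(i+1)(k+1)\}$, and these intervals are pairwise disjoint. That immediately yields $\lfloor(n-1)/(k+1)\rfloor+1$ distinct labels on $e_1$, with no global summation needed.

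Your upper-bound construction is also different from the paper's and, as written, is underspecified. Giving $e_i$ an arithmetic progression $\{i+(k+1)\ell\}$ is a natural first guess, but you need to be careful that (i) the labels stay within $[1,(n-1)+k]$, (ii) after the cyclic wraparound the path can still continue increasing, and (iii) the count on each edge really is $\le\lceil n/(k+1)\rceil+1$ rather than something one or two larger. The paper instead runs $\lceil n/(k+1)\rceil$ explicit rounds: in round $i$ it places label $1$ on $e_{i(k+1)+1}$ and then labels $2,3,\ldots,(n-1)+k$ consecutively clockwise, so each round ``satisfies'' a fresh block of $k+1$ starting edges; edge $e_1$ picks up two labels in the first round and one per subsequent round, giving the stated bound. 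Your block scheme may be salvageable, but the paper's iterative construction makes both the label count and the age bound transparent without truncation caveats. Your $k=0$ argument is correct and matches the paper.
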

\begin{proof}
The proof of the upper bound is constructive. In particular, we present a labeling that preserves all paths of the ring $G$ using at most $\lceil\frac{n}{k+1}\rceil+1$ labels on every edge and maximum label $(n-1)+k$. Let the ring be $e_1,e_2,\ldots,e_n$ and clockwise. We say that an edge $e_i$ is \emph{satisfied} if there is a journey of length $n-1$ beginning from $e_i$ (clearly, considering only those journeys that do not use a label greater than $\alpha=(n-1)+k$). Consider the following labeling procedure.
\begin{itemize}
 \item For all $i=0,1,2,\ldots,\lceil\frac{n}{k+1}\rceil-2$
  \begin{itemize}
   \item Assign label 1 to edge $e_{j=i(k+1)+1}$.
   \item Beginning from edge $e_{j+1}$, assign labels $2,3,\ldots,(n-1)+k$ clockwise.
  \end{itemize}
 \item For $i=\lceil\frac{n}{k+1}\rceil-1$, assign label 1 to edge $e_{j=i(k+1)+1}$ and beginning from edge $e_{j+1}$ assign labels $2,3,\ldots,(n-1)+(n-j)$ clockwise.
\end{itemize}
Note that in each iteration $i$ we satisfy edges $e_{i(k+1)+1},e_{i(k+1)+2},\ldots,e_{(i+1)(k+1)}$, i.e.~$k+1$ new edges, without leaving gaps. It follows that in $\lceil\frac{n}{k+1}\rceil$ iterations all edges have been  satisfied. The first iteration assigns at most two labels on edge $e_1$ and every other iteration, apart from the last one, assigns one label on $e_1$ (and clearly at most one on every other edge), thus $e_1$ gets a total of at most $\lceil\frac{n}{k+1}\rceil+1$ labels (and all other edges get at most this).

Now, for the lower bound, take an arbitrary edge, e.g.~$e_1$. Given an edge $e_i$ and a journey $J$ from $e_i$ to $e_1$ that uses label $l_1$ on $e_1$, define the delay of $J$ as $l_1-l(J)$, where $l(J)$ is the length of journey $J$ i.e.~$n-i+2$. In words, the delay of a $(e_i,e_1)$-journey is the difference between the time at which the journey visits $e_1$ minus the fastest time that it could have visited $e_1$. Now, beginning from $e_n$ count $k+1$ times counterclockwise, i.e.~consider edge $e_{n-k}$. We show that in order to satisfy $e_{n-k}$ we must necessarily use one of the labels $\{k+2,k+3,\ldots,2k+2\}$ on $e_1$. To this end, notice that the delay of any journey that satisfies some edge can be at most $k$, the reason being that a delay of $k+1$ or greater implies that the journey cannot visit $n-1$ edges in less than $(n-1)+(k+1)$ time, thus it will have to use some label greater than $\alpha=(n-1)+k$, which is the maximum allowed. Thus, the maximum label by which a journey that satisfies $e_{n-k}$ can go through $e_1$ is $l(e_{n-k})+k=2k+2$, where $l(e_{i})$ denotes the length of the path beginning from the tail of $e_i$ and ending at the head of $e_1$. Moreover, the minimum label by which any journey from $e_{n-k}$ can go through $e_1$ is $l(e_{n-k})=k+2$. Thus, we conclude that any journey that satisfies $e_{n-k}$ has to use one of the labels $\{k+2,k+3,\ldots,2k+2\}$ on $e_1$.

It is not hard to see that the above idea generalizes as follows. For all $i=0,1,\ldots,\lfloor\frac{n-1}{k+1}\rfloor-1$, in order to satisfy edge $e_{n-i(k+1)+1}$ (note that $e_{n+1}=e_1$) we must necessarily use one of the labels $\{i(k+1)+1, i(k+1)+2,\ldots,(i+1)(k+1)\}$ on $e_1$. For example, for $i=0$ we get $\{1,2,\ldots,k+1\}$, for $i=1$ we get $\{k+2,\ldots,2k+2\}$, for $i=2$ we get $\{2k+3,\ldots,3k+3\}$, and so on. In summary, as the above sets are disjoint, if we begin from $e_1$ and move counterclockwise then for every $k+1$ edges we encounter we must pay for another (new) label on $e_1$ thus we pay at least $\lfloor\frac{n-1}{k+1}\rfloor$.
\qquad
\end{proof}

\subsection{A Generic Method for Computing Lower Bounds for Temporality}
\label{generic-method-subsec}

Proposition \ref{pro:ring} showed that graphs with directed cycles need at least 2 labels on some edge(s) in order for all paths to be preserved. Now a natural question to ask is whether we can preserve all paths of any graph by using at most 2 labels (i.e.~whether $\tau(G,\ap)\leq 2$ holds for all graphs). We shall prove that there are graphs $G$ for which $\tau(G,\ap)=\Omega(p(G))$ (recall that $p(G)$ denotes the length of the longest path in $G$), that is graphs in which the optimum labeling, w.r.t. temporality, is very close to the trivial labeling $\lambda(e)=\{1,2,\ldots,p(G)\}$, for all $e\in E$, that always preserves all paths. 

\begin{definition} \label{def:kernel}
Call a set $K=\{e_1,e_2,\ldots,e_k\}\subseteq E(G)$ of 
edges of a digraph $G$ an \emph{edge-kernel} if for every permutation $\pi=(e_{i_1},e_{i_2},\ldots,e_{i_k})$ of the elements of $K$ there is a simple path $P$ of $G$ that visits all edges of $K$ in the ordering defined by the permutation $\pi$.
\end{definition}

We will now prove that an edge-kernel of size $k$ needs at least $k$ labels on some edges. 
Our proof is constructive. In particular, given any labeling using $k-1$ labels on an edge-kernel of size $k$, we present a specific path that forces a $k$th label to appear. 

\begin{theorem} [Edge-kernel Lower Bound] \label{the:kernel}
If a digraph $G$ contains an edge-kernel of size $k$ then $\tau(G,\text{all paths})\geq k$.
\end{theorem}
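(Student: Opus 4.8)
The plan is to strip the statement down to a purely combinatorial lemma about the label sets and then prove that lemma by induction. Fix any labeling $\lambda$ that preserves all simple paths of $G$; it suffices to show that $\max_{i}|\lambda(e_i)|\ge k$, since then \emph{every} feasible labeling puts at least $k$ labels on some edge of $K$, i.e.\ $\tau(G,\text{all paths})\ge k$. Write $S_i=\lambda(e_i)$, and assume each $S_i$ is finite (otherwise $\max_i|S_i|=\infty\ge k$ and we are done). For any permutation $\pi=(e_{i_1},\dots,e_{i_k})$ of $K$, the edge-kernel property supplies a simple path $P$ of $G$ visiting the edges of $K$ in exactly the order $e_{i_1},\dots,e_{i_k}$; since $\lambda$ preserves $P$, the labels it uses along $P$ are strictly increasing, so in particular the labels it uses on $e_{i_1},\dots,e_{i_k}$ form a strictly increasing sequence $x_{i_1}<\dots<x_{i_k}$ with $x_{i_j}\in S_{i_j}$. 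Hence the sets $S_1,\dots,S_k$ satisfy property $(\star)$: \emph{for every linear order $\prec$ on $\{1,\dots,k\}$ there are representatives $x_i\in S_i$ with $i\prec j\Rightarrow x_i<x_j$}. It remains to prove the combinatorial lemma: any finite sets $S_1,\dots,S_k$ satisfying $(\star)$ have $\max_i|S_i|\ge k$.

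I would prove this lemma by induction on $k$. For $k=1$ it is immediate, since $(\star)$ forces $S_1\neq\emptyset$. For the step, note first that $(\star)$ forces every $S_i$ to be non-empty; let $t$ be an index maximizing $\min S_i$, and set $S'_j=S_j\setminus\{\min S_j\}$ for every $j\neq t$. The crucial claim is that the $k-1$ sets $\{S'_j\}_{j\neq t}$ again satisfy $(\star)$. To see this, take any linear order $\prec$ on $\{1,\dots,k\}\setminus\{t\}$ and extend it to an order $\prec'$ on $\{1,\dots,k\}$ by declaring $t$ to be the minimum; let $x_1,\dots,x_k$ be representatives realizing $\prec'$. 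Then $x_t<x_j$ for all $j\neq t$, and since $x_t\ge\min S_t\ge\min S_j$ by the choice of $t$, we get $x_j>\min S_j$, so $x_j\in S'_j$; moreover the family $\{x_j\}_{j\neq t}$ respects $\prec$. This proves the claim. By the induction hypothesis $\max_{j\neq t}|S'_j|\ge k-1$, and since $|S'_j|=|S_j|-1$ we conclude $\max_i|S_i|\ge k$, finishing the induction and hence the theorem.

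The only genuinely non-routine step is the reduction inside the lemma: one must simultaneously delete the kernel edge $e_t$ whose smallest label is largest \emph{and} discard the minimum label of every surviving set, and then check that the resulting smaller instance still realizes all orderings. The rest — translating the edge-kernel and all-paths hypotheses into $(\star)$, and the bookkeeping with minima — is routine. Finally, this argument is constructive in the sense advertised before the theorem: unwinding the induction, the ``bad'' permutation is produced greedily by repeatedly placing first the edge of the remaining kernel whose smallest still-available label is largest, and the simple path of $G$ realizing that permutation is precisely one that no labeling using at most $k-1$ labels per kernel edge can preserve.
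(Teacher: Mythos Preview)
Your proof is correct and, as you yourself note in the final paragraph, unwinds to exactly the greedy construction the paper gives: at each step place first the remaining kernel edge whose smallest still-available label is maximal, and argue that the $i$th edge in the resulting permutation is forced to use at least its $i$th smallest label (the paper pads every label set to size $k-1$ for notational convenience, but this is immaterial). Your inductive packaging via the abstract property~$(\star)$ is a clean way to isolate the combinatorics, but the underlying argument and the ``bad'' permutation produced are identical to the paper's.
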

\begin{proof}
Let $K=\{e_1,e_2,\ldots,e_k\}$ be such an edge-kernel of size $k$. Assume for contradiction that there is a path-preserving labeling using on every edge at most $k-1$ labels. Then there is a path-preserving labeling that uses precisely $k-1$ labels on every edge (just extend the previous labeling by arbitrary labels). On every edge $e_i$, $1\leq i\leq k$, sort the labels in an ascending order and denote by $\lambda_l(e)$ the $l$th smallest label of edge $e$; e.g.~if an edge $e$ has labels $\{1,3,7\}$, then $\lambda_1(e)=1$, $\lambda_2(e)=3$, and $\lambda_3(e)=7$. Note that, by definition of an edge-kernel, all possible permutations of the edges in $K$ appear in paths of $G$ that should be preserved. We construct a permutation $\pi=(e_{j_1},e_{j_2},\ldots,e_{j_k})$ of the edges in $K$ which cannot be time-respecting without using a $k$th label on some edge. As $e_{j_1}$ use the edge with the maximum $\lambda_1$, that is $\arg\max_{e\in K} \lambda_1(e)$. Then as $e_{j_2}$ use the edge with the maximum $\lambda_2$ between the remaining edges, that is $\arg\max_{e\in K\bs\{e_{j_1}\}} \lambda_2(e)$, and define $e_{j_3}, e_{j_4}, \ldots$ analogously. It is not hard to see that $\pi$ satisfies $\lambda_i(e_{j_i})\geq \lambda_i(e_{j_{i+1}})$ for all $1\leq i\leq k-1$. This, in turn, implies that for $\pi$ to be time-respecting it cannot use the labels $\lambda_1,\ldots,\lambda_{i-1}$ at edge $e_{j_i}$, for all $i\geq 2$, which shows that at edge $e_{j_k}$ it can use none of the $k-1$ available labels, thus a $k$th label is necessarily needed and the theorem follows.
\qquad
\end{proof}

\begin{lemma}
If $G$ is a complete digraph of order $n$ then it has an edge-kernel of size $\lfloor n/2\rfloor$.
\end{lemma}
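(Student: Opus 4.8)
The plan is to exhibit an explicit edge-kernel of size $\lfloor n/2\rfloor$ consisting of pairwise vertex-disjoint arcs, and then to argue that any prescribed ordering of these arcs can be realized along a simple path by splicing in one connecting arc between consecutive pairs, which is available because $G$ is complete (so every ordered pair of distinct vertices spans an arc).

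Concretely, set $m=\lfloor n/2\rfloor$ and choose $2m$ distinct vertices of $G$, which we label $x_1,y_1,x_2,y_2,\ldots,x_m,y_m$ (when $n$ is odd one vertex is simply left unused). Put $e_i=(x_i,y_i)$; each such arc exists since $G$ is complete, and by the choice of labels the arcs $e_1,\ldots,e_m$ are pairwise vertex-disjoint. Let $K=\{e_1,\ldots,e_m\}$. The claim is that $K$ is an edge-kernel in the sense of Definition~\ref{def:kernel}, which immediately gives $|K|=\lfloor n/2\rfloor$ as required.

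To verify the edge-kernel property, fix an arbitrary permutation $\pi=(e_{i_1},e_{i_2},\ldots,e_{i_m})$ of $K$ and consider the walk
\[
P \;=\; x_{i_1}\to y_{i_1}\to x_{i_2}\to y_{i_2}\to\cdots\to x_{i_m}\to y_{i_m},
\]
where each connector arc $(y_{i_j},x_{i_{j+1}})$ is present because $G$ is complete and $y_{i_j}\neq x_{i_{j+1}}$. Every vertex appearing in $P$ lies in $\{x_1,y_1,\ldots,x_m,y_m\}$ and appears exactly once, so $P$ is a \emph{simple} path, and it traverses $e_{i_1},e_{i_2},\ldots,e_{i_m}$ in precisely the order dictated by $\pi$. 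Since $\pi$ was arbitrary, $K$ is an edge-kernel, proving the lemma. (Together with Theorem~\ref{the:kernel} this yields $\tau(G,\ap)\geq\lfloor n/2\rfloor$ for the complete digraph $G$, which since $p(G)=n$ gives the promised $\tau(G,\ap)=\Omega(p(G))$.)

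I do not anticipate any real obstacle: the only two points needing care are (i) that the $2m$ named vertices are genuinely distinct, so that $P$ is simple, and (ii) that the connector arcs $(y_{i_j},x_{i_{j+1}})$ are available, which is immediate from completeness. One may also note for sanity that $P$ has length $2m-1\leq n-1$, consistent with it being a simple path on $n$ vertices.
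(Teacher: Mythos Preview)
Your proof is correct and is essentially the same argument as the paper's: you take a maximum matching (pairwise vertex-disjoint arcs) and use completeness to splice the matching edges together in any prescribed order via connector arcs, yielding a simple path. The paper states this in a single sentence, whereas you spell out the construction of $P$ explicitly, but the underlying idea is identical.
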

\begin{proof}
Note that $\lfloor n/2\rfloor$ is the size of a maximum matching $M$ of $G$. As all possible edges that connect the endpoints of the edges in $M$ are available, $M$ is an edge-kernel of size $\lfloor n/2\rfloor$.
\qquad
\end{proof}

Now, Theorem \ref{the:kernel} implies that a complete digraph of order $n$ requires at least $\lfloor n/2\rfloor$ labels on some edge in order for all paths to be preserved, that is $\lfloor n/2\rfloor\leq \tau(G,\ap)$. At the same time we have the trivial upper bound $\tau(G,\ap)\leq n-1$ which follows from the fact that the longest path of a clique is hamiltonian, thus has $n-1$ edges, and for any graph $G$ the length of its longest path is an upper bound on $\tau(G,\ap)$.

The above, clearly remain true for the following (close to complete) bipartite digraph. There are two partitions $A=\{u_i:1\leq i\leq k\}$ and $B=\{v_i:1\leq i\leq k\}$ both of size $k$. The edge set consists of $(u_i,v_i)$ for all $i$ and $(v_i,u_j)$ for all $i,j$. In words, from $A$ to $B$ we have only horizontal connections while from $B$ to $A$ we have all possible connections.

\begin{lemma}
\label{planar-lower-bound-edge-kernel-lem}There exist planar graphs $G$ with 
$n$ vertices having edge-kernels of size $\Omega (n^{\frac{1}{3}})$.
\end{lemma}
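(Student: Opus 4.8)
The plan is to construct such graphs explicitly. It suffices to build, for every integer $k\geq 1$, a planar graph $G_k$ on $N=\Theta(k^{3})$ vertices that contains an edge-kernel of size $k$ in the sense of Definition~\ref{def:kernel}; a member of this family on $n$ vertices then has an edge-kernel of size $\Theta(n^{1/3})$, and, via the Edge-kernel Lower Bound (Theorem~\ref{the:kernel}), it also satisfies $\tau(G_k,\ap)=\Omega(n^{1/3})$. I will work with undirected graphs and undirected paths, exactly as in the paper's treatment of undirected trees; Definition~\ref{def:kernel} and Theorem~\ref{the:kernel} read over verbatim, and the same edge set still works after a suitable orientation of the grid edges if a digraph is wanted.

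\textbf{Construction.} The graph $G_k$ is a grid-like ``switchyard''. It consists of $k$ horizontal \emph{tracks} $T_{1},\dots,T_{k}$ stacked vertically, where each $T_{i}$ is itself a $\Theta(k)\times\Theta(k)$ sub-grid (a ``thick track'' with $\Theta(k)$ horizontal lanes), together with $\Theta(k)$ vertical \emph{rungs} joining $T_{i}$ to $T_{i+1}$ for each $i$. Thus $G_k$ is a subgraph of a large rectangular grid, hence planar, and it has $\Theta(k^{3})$ vertices. Inside track $T_{i}$ we designate one fixed edge $e_{i}$, placed in the interior of $T_{i}$ with routing room on either side, and put $K=\{e_{1},\dots,e_{k}\}$. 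The rough intuition is that $T_{i}$ should supply one private lane for each of the up to $\Theta(k)$ times a single simple path may enter and leave track~$T_i$.

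\textbf{Edge-kernel property.} Fix a permutation $\pi=(e_{i_{1}},\dots,e_{i_{k}})$. The aim is to exhibit a simple path $P$ crossing the kernel edges in this order: $P$ crosses $e_{i_{t}}$, then moves inside $T_{i_{t}}$ to a block of columns reserved for the $t$-th hop, travels vertically through the intermediate tracks within that block (entering and leaving each intermediate track $T_{m}$ on a dedicated lane of $T_{m}$ through the appropriate rungs), reaches $T_{i_{t+1}}$, and proceeds to $e_{i_{t+1}}$. Simplicity of $P$ reduces to two disjointness requirements: (i) the $k-1$ hops occupy pairwise disjoint blocks of columns, so distinct hops never meet; and (ii) inside any fixed track $T_{m}$, the portion consumed by the unique crossing of $e_{m}$ (together with the horizontal travel linking $e_m$ to the relevant hop's block and rungs) and the portions consumed by the various hops that merely pass through $T_{m}$ are pairwise disjoint. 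Handling (ii) is exactly what forces each track to be a $\Theta(k)\times\Theta(k)$ region rather than a single path -- a single simple path may visit $T_m$ $\Theta(k)$ times and the travel between $e_m$'s fixed column and a hop's block can have length $\Theta(k)$ -- and hence forces $\Theta(k^{3})$ vertices overall and a bound of order $n^{1/3}$ rather than $n^{1/2}$.

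\textbf{Main obstacle.} The real work lies in establishing (i) and (ii) \emph{uniformly over all $k!$ permutations}: one must pin down the reservation scheme -- which column block each hop gets, which lane of each track each visit uses, and how $P$ connects $e_m$ to that block and to the incoming and outgoing rungs without crossing the lanes used by the other visits of $T_m$ -- and then prove that no conflict can arise within the declared $\Theta(k^{3})$-vertex budget. Once this routing lemma is established, planarity is immediate from the grid structure and the vertex count yields an edge-kernel of size $k=\Theta(N^{1/3})=\Omega(n^{1/3})$, which is the assertion of the lemma.
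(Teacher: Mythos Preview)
Your layout differs from the paper's, and what you present is a plan rather than a proof: you explicitly flag the routing lemma as the ``main obstacle'' and do not resolve it. The paper takes a $2k^{2}\times 2k$ grid and places all $k$ kernel edges on the \emph{single} middle row, spaced $2k$ columns apart. For an arbitrary permutation, the sub-path $P_t$ from $e_{i_t}$ to $e_{i_{t+1}}$ is routed greedily in the rows above that middle row: go up from $q_{i_t}$ as far as earlier sub-paths permit, then sideways, detouring around any earlier $P_s$ that is encountered, then down to $p_{i_{t+1}}$. Because every kernel edge sits on the same row and the routing happens above (or briefly one row below) it, no sub-path ever has to pass through the vicinity of a kernel edge it is not currently visiting; the argument then reduces to checking that roughly $k$ rows of headroom and $2k$ columns between consecutive kernel edges suffice to accommodate at most $k-2$ earlier paths, which the paper does explicitly.

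In your stacked-track layout, by contrast, every hop between non-adjacent tracks must traverse all intermediate tracks vertically, in particular the track housing some other kernel edge. You correctly identify that the horizontal travel to $e_m$ inside $T_m$ must be kept disjoint from these vertical transits, but you supply no reservation scheme and no argument that it fits in a $\Theta(k)\times\Theta(k)$ track. The conflict is genuine: a passing hop occupies an entire column of $T_m$ from bottom rung to top rung, so any horizontal segment on a single row of $T_m$ meets it, and an explicit detour mechanism together with a count showing it fits in the stated budget is required before the claim stands. The paper's single-row layout sidesteps this difficulty entirely, which is why its construction is both simpler and complete.
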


\begin{proof}
The proof is done by construction. Consider the grid graph $G=G_{2n^{2},2n}$%
, i.e.~$G$ is formed as a part of the infinite grid having width of $2n^{2}$
vertices and height of $2n$ vertices. Note that $G$ is a planar graph. For
simplicity of the presentation, we consider the grid graph $G$ on the
Euclidean plane, where the vertices have integer coordinates and the lower
left vertex has coordinates $(1,1)$. Furthermore denote by $v_{i,j}$ the
vertex of $G$ that is placed on the point $(i,j)$, where $1\leq i\leq 2n^{2}$
and $1\leq j\leq 2n$. 
For every $i\in \{1,2,\ldots ,n\}$ denote 
$p_{i}=v_{(2i-1)n,n}$ and $q_{i}=v_{(2i-1)n+1,n}$. 
We define the edge subset $S=\{e_{i}=p_{i}q_{i}:1\leq i\leq n\}$.

We now prove that $S$ is an edge-kernel of $G$. Let $\pi
=(e_{i_{1}},e_{i_{2}},\ldots ,e_{i_{n}})$ be an arbitrary permutation of the
edges of $S=\{e_{1},e_{2},\ldots ,e_{n}\}$. We construct a simple path $P$
in $G$ that visits all the edges of $S$ in the order of the permutation $\pi 
$. That is, we construct a path $%
P=(p_{i_{1}},q_{i_{1}},P_{1},p_{i_{2}},q_{i_{2}},P_{2},\ldots
p_{i_{n-1}},q_{i_{n-1}},P_{n-1},p_{i_{n}},q_{i_{n}})$. In order to do so, it
suffices to define iteratively the simple paths $P_{1},P_{2},\ldots ,P_{n-1}$
such that no two of these paths share a common vertex. The path $P_{1}$
starts at $q_{i_{1}}$ and continues upwards on the column of $q_{i_{1}}$ in
the grid, until it reaches the top $2n$th row of the grid. Then, if $%
i_{2}>i_{1}$ (resp.~if $i_{2}<i_{1}$), the path $P_{1}$ continues on this
top row to the right (resp.~to the left), until it reaches the column of
vertex $p_{i_{2}}$ of the grid. Finally it continues downwards on this
column until it reaches $p_{i_{2}}$, where $P_{1}$ ends.

Consider now an index $t\in \{2,3,\ldots ,n-1\}$. In a similar manner as $%
P_{1}$, the path $P_{t}$ starts at vertex $q_{i_{t}}$. Then it continues
upwards on the column of $q_{i_{t}}$ in the grid as much as possible, such
that it does not reach any vertex of a path $P_{k}$, where $k\leq t-1$. Note
that, if no path $P_{k}$, $k\leq t-1$, passes through any vertex of the
column of $q_{i_{t}}$ in the grid, then the path $P_{t}$ reaches the top $2n$%
th row of the grid in this column. On the other hand, note that, since $%
q_{i_{t}}=v_{(2i_{t}-1)n+1,n}$ and $t\leq n-1$, at most the upper $t-1\leq
n-2$ vertices of the column of $q_{i_{t}}$ in the grid can possibly belong
to a path $P_{k}$, where $k\leq t-1$. Thus the path $P_{t}$ can always
continue upwards from $q_{i_{t}}$ by at least one edge. Let $a_{t}$ be the
uppermost vertex of $P_{t}$ on the column of $q_{i_{t}}$ of the grid (cf.
Figure~\ref{planar-lower-bound-fig} for $t=5$ and $e_{i_{5}}=e_{1}$).

Assume that $i_{t+1}>i_{t}$, i.e.~vertex $p_{i_{t+1}}$ lies to the right of
vertex $q_{i_{t}}$ on the $n$th row of the grid. Then, the path $P_{t}$
continues from vertex $a_{t}$ to the right, as follows. If $P_{t}$ can reach
the column of $p_{i_{t}}$ without passing through a vertex of a path $P_{k}$, 
$k\leq t-1$, then it does so; in this case the path $P_{t}$ continues
downwards until it reaches vertex $p_{i_{t}}$, where it ends (cf.~Figure~\ref%
{planar-lower-bound-fig} for $t=3$ and $e_{i_{3}}=e_{3}$). Suppose now that $%
P_{t}$ can not reach the column of $P_{t}$ without passing through a vertex
of a path $P_{k}$, $k\leq t-1$ (cf.~Figure~\ref{planar-lower-bound-fig} for $%
t=5$ and $e_{i_{5}}=e_{1}$). Then, $P_{t}$ continues on the row of vertex $%
a_{t}$ to the right as much as possible (say, until vertex $b_{t}$), such
that it does not reach any vertex of a path $P_{k}$, $k\leq t-1$. In this
case the path $P_{t}$ continues from vertex $b_{t}$ downwards as much as
possible until it reaches a vertex $c_{t}$ that is not neighbored to its
right to any vertex of a path $P_{k}$, $k\leq t-1$ (cf.~Figure~\ref%
{planar-lower-bound-fig} for $t=5$ and $e_{i_{5}}=e_{1}$). Furthermore $%
P_{t} $ continues from vertex $c_{t}$ to the right as much as possible until
it reaches a vertex $d_{t}$ that is not neighbored from above to any vertex
of a path $P_{k}$, $k\leq t-1$. Then, $P_{t}$ continues from $d_{t}$ in a
similar way until it reaches the column of vertex $p_{i_{t+1}}$ (cf.~Figure~\ref{planar-lower-bound-fig} for $t=5$, $e_{i_{5}}=e_{1}$, and $%
e_{i_{6}}=e_{6}$), and then it continues downwards until it reaches $%
p_{i_{t+1}}$, where $P_{t}$ ends. Note that, by definition of the edge set $%
S $, there exist at least $2n$ columns of the grid between any two edges of
the set $S$. Furthermore there exist $n-1$ rows of the grid below every edge
of $S$. Thus, since there exist at most $t-1\leq n-2$ previous paths $P_{k}$, 
$k\leq t-1$, it follows that there exists always enough space for the path 
$P_{t}$ in the grid to (a) reach vertex $d_{t}$ and (b) continue from $d_{t}$
until it reaches vertex $p_{i_{t+1}}$, where $P_{t}$ ends.

Assume now that $i_{t+1}<i_{t}$, i.e.~vertex $p_{i_{t+1}}$ lies to the left
of vertex $q_{i_{t}}$ on the $n$th row of the grid. In this case, when we
start the path $P_{t}$ at vertex $q_{i_{t}}$, we first move one edge
downwards and then two edges to the left (cf.~Figure~\ref%
{planar-lower-bound-fig} for $t=2$ and $e_{i_{2}}=e_{5}$, as well as for $%
t=4 $ and $e_{i_{4}}=e_{4}$). After that point we continue constructing the
path $P_{t}$ similarly to the case where $i_{t+1}>i_{t}$ (cf.~Figure~\ref%
{planar-lower-bound-fig}).

Therefore, we can construct in this way all the paths $P_{1},P_{2},\ldots
,P_{n-1}$, such that no two of these paths share a common vertex, and thus
the path $P=(p_{i_{1}},q_{i_{1}},P_{1},p_{i_{2}},$ $q_{i_{2}},P_{2},\ldots
p_{i_{n-1}},q_{i_{n-1}},P_{n-1},p_{i_{n}},q_{i_{n}})$ is a simple path of $G$
that visits all the edges of $S$ in the order of the permutation $\pi $. An
example of the construction of such a path $P$ is given in Figure~\ref%
{planar-lower-bound-fig}. In this example $S=(e_{1},e_{2},\ldots ,e_{6})$
and $\pi =(e_{2},e_{5},e_{3},e_{4},e_{1},e_{6})$. That is, using the above
notation, $i_{1}=2$, $i_{2}=5$, $i_{3}=3$, $i_{4}=4$, $i_{5}=1$, and $%
i_{6}=6 $. In this figure we also depict for $t=5$ the vertices $%
a_{t},b_{t},c_{t}$ that we defined in the above construction of the path $%
P_{i_{t}}$.

\begin{figure}[tbh]
\centering{\includegraphics[scale=0.65]{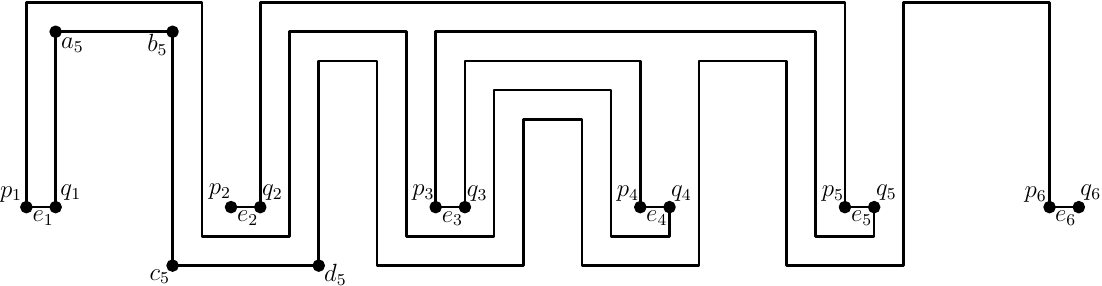}}
\caption{The edge-kernel $S=(e_{1},e_{2},\ldots ,e_{n}\}$ of the grid graph
with dimension $2n^{2} \times 2n$, where $n=6$, and a path $P$ that visits
the edges of $S$ in the order of the permutation $\protect\pi =
(e_{i_{1}},e_{i_{2}},e_{i_{3}},e_{i_{4}},e_{i_{5}},e_{i_{6}}) =
(e_{2},e_{5},e_{3},e_{4},e_{1},e_{6})$.}
\label{planar-lower-bound-fig}
\end{figure}

Since such a path $P$ exists for every permutation $\pi $ of the edges of
the set $S$, it follows by Definition~\ref{def:kernel} that $S$ is an edge-kernel
of $G$, where $G$ is a planar graph. Finally, since $G=(V,E)$ has by
construction $|V|=4n^{3}$ vertices and $|S|=n$, it follows that the size of
the edge-kernel $S$ is $\Omega (|V|^{\frac{1}{3}})$. This completes the
proof of the lemma.
\qquad
\end{proof}

\subsection{Computing the Cost}
\label{cost-computation-subsec}

\subsubsection{Hardness of Approximation}

Consider a boolean formula $\phi $ in conjunctive normal form with two
literals in every clause ($2$-CNF). Let $\tau $ be a truth assignment of the
variables of $\phi $ and $\alpha =(\ell _{1}\vee \ell _{2})$ be a clause of $%
\phi $. Then $\alpha $ is \emph{XOR-satisfied} (or \emph{NAE-satisfied}) in $%
\tau $, if one of the literals $\{\ell _{1},\ell _{2}\}$ of the clause $%
\alpha $ is true in $\tau $ and the other one is false in $\tau $. The
number of clauses of $\phi $ that are XOR-satisfied in $\tau $ is denoted by 
$|\tau(\phi )|$. The formula $\phi $ is \emph{XOR-satisfiable} (or \emph{NAE-satisfiable}) if there exists a truth assignment $\tau $ of $\phi $ such
that every clause of $\phi $ is XOR-satisfied in $\tau $. The \emph{Max-XOR}
problem (also known as the \emph{Max-NAE-2-SAT} problem) is the following
maximization problem: given a $2$-CNF formula $\phi $, compute the greatest
number of clauses of $\phi $ that can be simultaneously XOR-satisfied in a
truth assignment $\tau $, i.e.~compute the greatest value for $|\tau(\phi )|$.
The \emph{Max-XOR(}$k$\emph{)} problem is the special case of the 
Max-XOR problem, where every variable of the input formula $\phi $ appears in
at most~$k$ clauses of $\phi $. 
It is known that a special case of Max-XOR($3$), namely the \emph{monotone Max-XOR($3$)} problem, is APX-hard 
(i.e.~it does not admit a PTAS unless P=NP~\cite{KMSV99,CKS01}), as the next lemma states~\cite{AlimontiKann97}.
In this special case of the problem, the input formula $\phi$ is monotone, i.e.~every variable appears not negated in the formula. 
The monotone Max-XOR($3$) problem essentially encodes the \emph{Max-Cut} problem on 3-regular (i.e.~cubic) graphs, which is known to be APX-hard~\cite{AlimontiKann97}.
\begin{lemma}[\hspace{-0,001cm}\cite{AlimontiKann97}]
\label{Max_XOR-3-hard-lem}The (monotone) Max-XOR($3$) problem is APX-hard.
\end{lemma}

Now we provide a reduction from the Max-XOR$(3)$ problem to the problem of
computing $\kappa(G,\reach,d(G))$. Let $\phi $ be an instance formula of Max-XOR$(3)$ with $n$
variables $x_{1},x_{2},\ldots ,x_{n}$ and $m$ clauses. Since every variable $%
x_{i}$ appears in $\phi $ (either as $x_{i}$ or as $\overline{x_{i}}$) in at
most $3$ clauses, it follows that $m\leq \frac{3}{2}n$. We will construct
from $\phi $ a graph $G_{\phi }$ having length of a directed cycle at most $%
2 $. Then, as we prove in Theorem~\ref{cost-diameter-upper-lower-bound-thm}, 
$\kappa(G_{\phi },\reach,d(G_{\phi }))\leq 39n-4m-2k$ if and only if there
exists a truth assignment $\tau $ of $\phi $ with $|\tau(\phi )|\geq k$, i.e.~$%
\tau $ XOR-satisfies at least $k$ clauses of $\phi $. Since $\phi $ is an
instance of Max-XOR$(3)$, we can replace every clause $(\overline{x_{i}}\vee 
\overline{x_{j}})$ by the clause $(x_{i}\vee x_{j})$ in $\phi $, since $(%
\overline{x_{i}}\vee \overline{x_{j}})=(x_{i}\vee x_{j})$ in XOR.
Furthermore, whenever $(\overline{x_{i}}\vee x_{j})$ is a clause of $\phi $,
where $i<j$, we can replace this clause by $(x_{i}\vee \overline{x_{j}})$,
since $(\overline{x_{i}}\vee x_{j})=(x_{i}\vee \overline{x_{j}})$ in XOR.
Thus, we can assume without loss of generality that every clause of $\phi $
is either of the form $(x_{i}\vee x_{j})$ or $(x_{i}\vee \overline{x_{j}})$,
where $i<j$.

For every $i=1,2,\ldots ,n$ we construct the graph $G_{\phi ,i}$ of Figure~\ref{variable-gadget-fig}. Note that the diameter of $G_{\phi ,i}$ is $%
d(G_{\phi ,i})=9$ and the maximum length of a directed cycle in $G_{\phi ,i}$
is $2$. In this figure, we call the induced subgraph of $G_{\phi ,i}$ on the 
$13$ vertices $\{s^{x_{i}},u_{1}^{x_{i}},\ldots
,u_{6}^{x_{i}},v_{1}^{x_{i}},\ldots ,v_{6}^{x_{i}}\}$ the \emph{trunk} of $%
G_{\phi ,i}$. Furthermore, for every $p\in \{1,2,3\}$, we call the induced
subgraph of $G_{\phi ,i}$ on the $5$ vertices $%
\{u_{7,p}^{x_{i}},u_{8,p}^{x_{i}},v_{7,p}^{x_{i}},v_{8,p}^{x_{i}},t_{p}^{x_{i}},\} 
$ the $p$\emph{th branch} of $G_{\phi ,i}$. Finally, we call the edges $%
u_{6}^{x_{i}}u_{7,p}^{x_{i}}$ and $v_{6}^{x_{i}}v_{7,p}^{x_{i}}$ the \emph{transition edges} of the $p$th branch of $G_{\phi ,i}$. Furthermore, for
every $i=1,2,\ldots ,n$, let $r_{i}\leq 3$ be the number of clauses in which
variable $x_{i}$ appears in $\phi $. For every $1\leq p\leq r_{i}$, we
assign the $p$th appearance of the variable $x_{i}$ (either as $x_{i}$ or as 
$\overline{x_{i}}$) in a clause of $\phi $ to the $p$th branch of~$G_{\phi
,i}$.

Consider now a clause $\alpha =(\ell _{i}\vee \ell _{j})$ of $\phi $, where $%
i<j$. Then, by our assumptions on $\phi $, it follows that $\ell _{i}=x_{i}$
and $\ell _{j}\in \{x_{j},\overline{x_{j}}\}$. Assume that the literal $\ell
_{i}$ (resp.~$\ell _{j}$) of the clause $\alpha $ corresponds to the $p$th
(resp.~to the $q$th) appearance of the variable $x_{i}$ (resp.~$x_{j}$) in $%
\phi $. Then we identify the vertices of the $p$th branch of $G_{\phi ,i}$
with the vertices of the $q$th branch of $G_{\phi ,j}$ as follows. If $\ell
_{j}=x_{j}$ then we identify the vertices $%
u_{7,p}^{x_{i}},u_{8,p}^{x_{i}},v_{7,p}^{x_{i}},v_{8,p}^{x_{i}},t_{p}^{x_{i}} 
$ with the vertices $%
v_{7,q}^{x_{j}},v_{8,q}^{x_{j}},u_{7,q}^{x_{j}},u_{8,q}^{x_{j}},t_{q}^{x_{j}} 
$, respectively (cf.~Figure~\ref{clause-gadget-fig-1}). Otherwise, if $\ell
_{j}=\overline{x_{j}}$ then we identify the vertices $%
u_{7,p}^{x_{i}},u_{8,p}^{x_{i}},v_{7,p}^{x_{i}},v_{8,p}^{x_{i}},t_{p}^{x_{i}} 
$ with the vertices $%
u_{7,q}^{x_{j}},u_{8,q}^{x_{j}},v_{7,q}^{x_{j}},v_{8,q}^{x_{j}},t_{q}^{x_{j}} 
$, respectively (cf.~Figure~\ref{clause-gadget-fig-2}). This completes the
construction of the graph $G_{\phi }$. Note that, similarly to the graphs $%
G_{\phi ,i}$, $1\leq i\leq n$, the diameter of $G_{\phi }$ is $d(G_{\phi
})=9 $ and the maximum length of a directed cycle in $G_{\phi }$ is $2$.
Furthermore, note that for each of the $m$ clauses of $\phi $, one branch of
a gadget $G_{\phi ,i}$ coincides with one branch of a gadget $G_{\phi ,j}$,
where $1\leq i<j\leq n$, while every $G_{\phi ,i}$ has three branches.
Therefore $G_{\phi }$ has exactly $3n-2m$ branches which belong to only one
gadget $G_{\phi ,i}$, and $m$ branches that belong to two gadgets $G_{\phi
,i},G_{\phi ,j}$.

\begin{figure}[tbh]
\centering\includegraphics[scale=0.68]{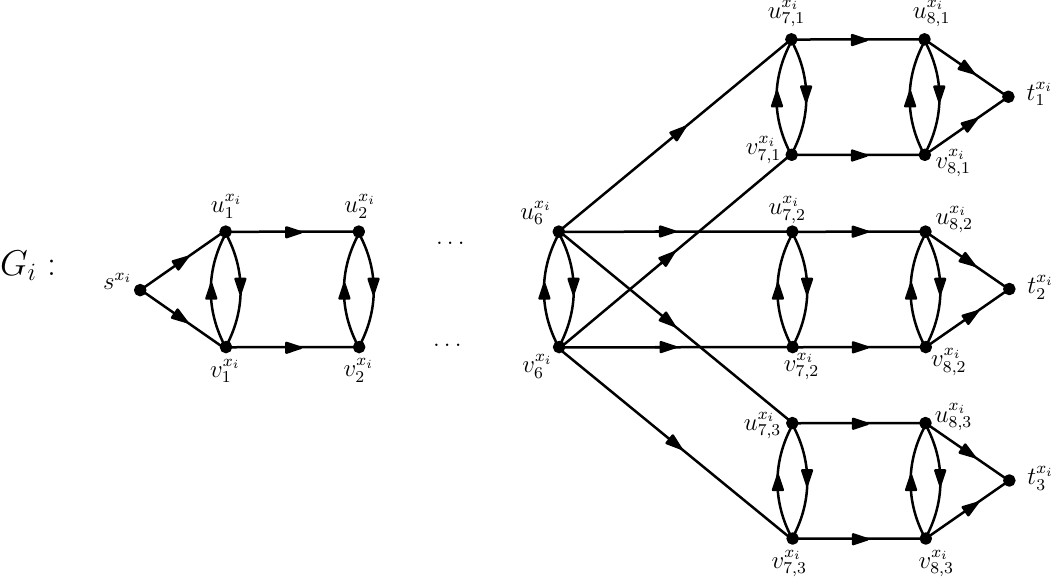}
\caption{The gadget $G_{\protect\phi ,i}$ for the variable $x_{i}$.}
\label{variable-gadget-fig}
\end{figure}

\begin{theorem}
\label{cost-diameter-upper-lower-bound-thm}There exists a truth assignment $%
\tau $ of $\phi $ with $|\tau(\phi )|\geq k$ if and only if $\kappa(G_{\phi
},reach,d(G_{\phi }))\leq 39n-4m-2k$.
\end{theorem}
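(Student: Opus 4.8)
The plan is to establish the equivalence by a careful case analysis of the possible cost-optimal labelings of each variable gadget $G_{\phi,i}$ and of the shared branches, under the constraint that the age equals the diameter $d(G_{\phi})=9$. The key structural fact I would first isolate is that, since $d(G_{\phi})=9$ and $\alpha_{\max}=9$, every reachability pair $(a,b)$ with $\dist_{G_\phi}(a,b)=9$ forces the \emph{unique} shortest $(a,b)$-path to receive all labels $1,2,\ldots,9$ in order — this is precisely the phenomenon exploited earlier in the discussion around Figure~\ref{fig:diam}. So I would begin by identifying, inside each trunk and each branch, the diametral pairs and arguing (as in that earlier discussion) that the trunk edges and the transition/branch edges are forced to carry almost all labels $1,\ldots,9$; the only freedom left is a small number of ``choice'' labels on a constant number of edges near the top of each trunk, and these choices encode the truth value of $x_i$.

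Concretely, I would show that a cost-optimal labeling of the trunk of $G_{\phi,i}$ costs a fixed amount, say $c_{\text{trunk}}$, regardless of the encoded truth value, and that propagating a truth value ``$x_i=\text{true}$'' versus ``$x_i=\text{false}$'' into the three branches costs the same per branch when the branch is private to $G_{\phi,i}$, but differs by exactly one label on a shared branch depending on whether the two literals feeding that branch disagree (clause XOR-satisfied) or agree (clause not XOR-satisfied). Summing: the $3n-2m$ private branches contribute a fixed total, the $n$ trunks contribute a fixed total, and each of the $m$ shared branches contributes either $w$ or $w+1$ labels, the smaller value occurring exactly when the corresponding clause is XOR-satisfied. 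Matching constants against the claimed bound $39n-4m-2k$ — where the base cost for a fully-forced labeling of all $n$ gadgets with all $m$ clauses \emph{not} satisfied is $39n-4m$, and each satisfied clause saves $2$ labels — gives the biconditional. (The factor $2$ rather than $1$ per saved clause comes from the fact that a shared branch, being identified between two gadgets, carries the propagation labels of both sides, so the ``agreement penalty'' shows up twice, once on each of the two identified $u/v$-strands.)

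For the forward direction I would take a truth assignment $\tau$ XOR-satisfying $\geq k$ clauses, write down the explicit labeling: label every trunk and every private branch by the forced all-labels scheme, and on each shared branch use the $\tau$-dictated labels, which cost $w$ when the clause is XOR-satisfied and $w+1$ otherwise; this yields total cost $\leq 39n-4m-2k$. For the converse, I would take any labeling $\lambda$ with $\lambda\in\mathrm{reach}(G_\phi)\cap\mathcal{L}_{G_\phi,9}$ of cost $\leq 39n-4m-2k$, invoke the forcing lemma to conclude the trunks and private branches are labeled in essentially the unique forced way (any deviation strictly increases cost, so cost-optimality pins them down), read off from the ``choice'' labels of each trunk a truth value $\tau(x_i)$, and argue that a shared branch can cost only $w$ (not $w+1$) when its clause is XOR-satisfied under $\tau$; hence the number of cost-$w$ branches is $\geq k$, so $\tau$ XOR-satisfies $\geq k$ clauses.

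The main obstacle — and where most of the work lies — is the \emph{rigidity argument}: proving that in any cost-optimal age-$9$ labeling the trunk and transition edges really are forced into the all-labels-$1..9$ pattern, with only the intended constant-size pocket of freedom, and that the truth value extracted from a trunk is globally consistent across all three of its branches (so that the gadget genuinely behaves like a variable, not like three independent copies). This requires a delicate count: one must verify that the diametral pairs within a single gadget, together with the diameter-$9$ pairs that cross a shared branch into the neighboring gadget, collectively leave exactly the degrees of freedom claimed, and that ``cheating'' by shifting labels on one strand to save on another is impossible without violating either a diametral reachability or the age bound. I would carry this out gadget-by-gadget using Figures~\ref{variable-gadget-fig}, \ref{clause-gadget-fig-1}, and~\ref{clause-gadget-fig-2}, checking the handful of local configurations exhaustively; the global consistency then follows because the trunk's internal diametral pairs force a single coherent labeling of the two $u$- and $v$-strands that all three transition edges must respect.
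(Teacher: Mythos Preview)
Your high-level architecture (construct a labeling from an assignment for $(\Rightarrow)$; extract an assignment from a labeling and count for $(\Leftarrow)$) matches the paper, but the mechanism you propose for the gadget is wrong in a way that would derail the proof.

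The central error is the claim that each diametral pair $(s^{x_i},t_p^{x_i})$ has a \emph{unique} shortest path of length $9$ whose edges are then ``forced to carry almost all labels $1,\ldots,9$''. In fact there are exactly \emph{two} shortest $(s^{x_i},t_p^{x_i})$-paths of length $9$: the path $P(i,p)$ running along the $u$-side of the trunk and the path $Q(i,p)$ running along the $v$-side. The diameter constraint therefore forces only that \emph{one of the two} receives the labels $1,\ldots,9$ in order, not both. The truth value of $x_i$ is encoded by \emph{which side} is labeled ($P_i$ versus $Q_i$), not by ``a small number of choice labels on a constant number of edges near the top of the trunk''. Consequently the rigidity argument is not an all-labels-forced count; it is (i) a lower bound showing the vertical length-$1$ edges $u_z^{x_i}\!\leftrightarrow\! v_z^{x_i}$ each need one label, plus six labels on whichever side $P_i$ or $Q_i$ is chosen, together with (ii) an exchange argument showing that if a labeling uses \emph{both} $P_i$ and $Q_i$, one can strip the labels from one side and re-route through the other at strictly lower cost. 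This is how consistency across the three branches of a single variable gadget is obtained, and your proposed ``diametral pairs internal to the trunk'' route will not deliver it because those pairs do not have unique shortest paths.

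A second, related inaccuracy: the per-clause saving is $2$ because an XOR-unsatisfied shared branch costs $8$ labels versus $6$ for a satisfied one, not $w+1$ versus $w$. The $2$ does not arise from ``both sides'' duplicating a one-label penalty; it arises because when the two incident variables select the \emph{same} half of the shared branch (clause satisfied) the two diametral paths overlap on the last two branch edges, whereas when they select opposite halves (clause unsatisfied) each path needs its own pair of final edges, costing $2$ extra labels.
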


\begin{proof}
($\Rightarrow $) Assume that there is a truth assignment $\tau $ that
XOR-satisfies $k$ clauses of $\phi $. We construct a labeling $\lambda $ of $%
G_{\phi }$ with cost $39n-4m-2k$ as follows. Let $i=1,2,\ldots ,n$. If $%
x_{i}=0$ in $\tau $, we assign labels to the edges of the trunk of $G_{\phi
,i}$ as in Figure~\ref{labeling-x-0-fig}. Otherwise, if $x_{i}=1$ in $\tau$, we assign labels to the edges of the trunk of $G_{\phi ,i}$ as in Figure~\ref{labeling-x-1-fig}. We now continue the labeling $\lambda $ as follows.
Consider an arbitrary clause $\alpha =(\ell _{i}\vee \ell _{j})$ of $\phi $,
where $i<j$. Recall that $\ell _{i}=x_{i}$ and $\ell _{j}\in \{x_{j},%
\overline{x_{j}}\}$. Assume that the literal $\ell _{i}$ (resp.~$\ell _{j}$)
of the clause $\alpha $ corresponds to the $p$th (resp.~to the $q$th)
appearance of variable $x_{i}$ (resp.~$x_{j}$) in $\phi $. Then, by the
construction of $G_{\phi }$, the $p$th branch of $G_{\phi ,i}$ coincides
with the $q$th branch of $G_{\phi ,j}$.

\begin{figure}[tbh]
\centering%
\subfigure[]{ \label{labeling-x-0-fig}
\includegraphics[scale=0.68]{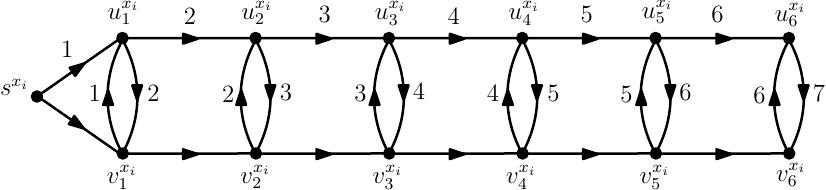}} 
\subfigure[]{ \label{labeling-x-1-fig}
\includegraphics[scale=0.68]{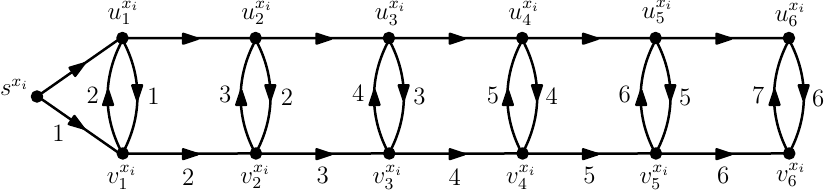}}
\caption{The labels of the edges of the trunk of $G_{\protect\phi ,i}$,
where (a)~$x=0$ and (b)~$x=1$.}
\label{labeling-x-0-1-fig}
\end{figure}

Assume that $\ell _{j}=x_{j}$ (cf.~Figure~\ref{clause-gadget-fig-1}). Then
by our construction $u_{7,p}^{x_{i}}=v_{7,q}^{x_{j}}$, $%
u_{8,p}^{x_{i}}=v_{8,q}^{x_{j}}$, $v_{7,p}^{x_{i}}=u_{7,q}^{x_{j}}$, $%
v_{8,p}^{x_{i}}=u_{8,q}^{x_{j}}$, and $t_{p}^{x_{i}}=t_{q}^{x_{j}}$ (cf.
Figure~\ref{clause-gadget-fig-1}). Let $\alpha $ be XOR-satisfied in $\tau $, i.e.~$x_{i}=\overline{x_{j}}$. If $x_{i}=\overline{x_{j}}=0$ then we label
the edges of the $p$th branch of $G_{\phi ,i}$ (equivalently, the edges of
the $q$th branch of $G_{\phi ,j}$), the transition edges of the $p$th branch
of $G_{\phi ,i}$, and the transition edges of the $q$th branch of $G_{\phi
,j}$, as illustrated in Figure~\ref{assignment-fig-1}. In the symmetric case
where $x_{i}=\overline{x_{j}}=1$ we label these edges in the same way as in
Figure~\ref{assignment-fig-1}, with the only difference that we exchange the
role of $u$'s and $v$'s. Let now $\alpha $ be XOR-unsatisfied in $\tau $,
i.e.~$x_{i}=x_{j}$. If $x_{i}=x_{j}=0$ then we label the edges of the $p$th
branch of $G_{\phi ,i}$ (equivalently, the edges of the $q$th branch of $%
G_{\phi ,j}$), the transition edges of the $p$th branch of $G_{\phi ,i}$,
and the transition edges of the $q$th branch of $G_{\phi ,j}$, as
illustrated in Figure~\ref{assignment-fig-2}. In the symmetric case where $%
x_{i}=x_{j}=1$ we label these edges in the same way as in Figure~\ref%
{assignment-fig-2}, with the only difference that we exchange the role of $u$'s and $v$'s. For the case where $\ell _{j}=\overline{x_{j}}$ we label the
edges of Figure~\ref{clause-gadget-fig-2} similarly to the case where $\ell
_{j}=x_{j}$ (cf.~Figure~\ref{assignment-fig}). 
\begin{figure}[tbh]
\centering%
\subfigure[]{ \label{clause-gadget-fig-1}
\includegraphics[scale=0.68]{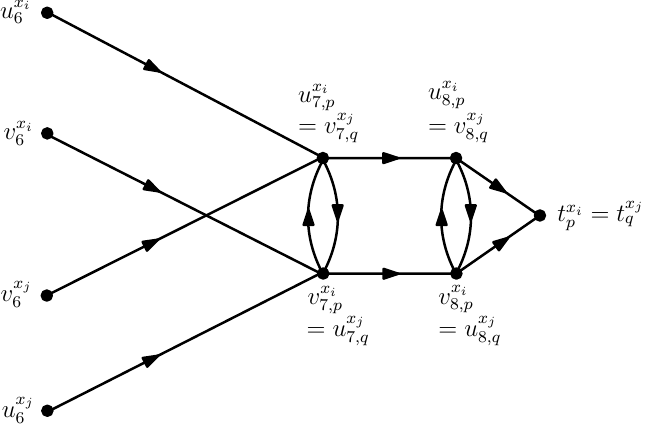}} \hspace{0.2cm} 
\subfigure[]{ \label{clause-gadget-fig-2}
\includegraphics[scale=0.68]{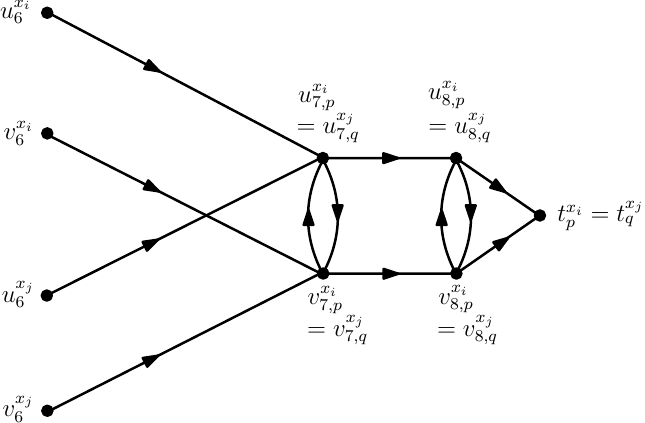}}
\caption{The gadgets for (a)~the clause $(x_{i}\vee x_{j})$ and (b)~the
clause $(x_{i}\vee \overline{x_{j}})$, where $x_{i}$ appears in the $p$th
branch of $G_{\protect\phi ,i}$ and $x_{j}$ (resp.~$\overline{x_{j}}$)
appears in the $q$th branch of $G_{\protect\phi ,j}$.}
\label{clause-gadget-fig}
\end{figure}

Finally consider any of the $3n-2m$ branches that belong to only one gadget $%
G_{\phi ,i}$, where $1\leq i\leq n$. Let this be the $p$th branch of $%
G_{\phi ,i}$. If $x_{i}=0$ then we label the edges of this branch and its
transition edges as illustrated in Figure~\ref{assignment-fig-1} (by
ignoring in this figure the vertices $u_{6}^{x_{j}},v_{6}^{x_{j}}$). In the
symmetric case where $x_{i}=1$, we label these edges in the same way, with
the only difference that we exchange the role of $u$'s and $v$'s. This
finalizes the labeling $\lambda $ of $G_{\phi }$. It is easy to check that $%
\lambda $ preserves all reachabilities of $G_{\phi }$ and its greatest label
is $d$. 
\begin{figure}[tbh]
\centering%
\subfigure[]{ \label{assignment-fig-1}
\includegraphics[scale=0.68]{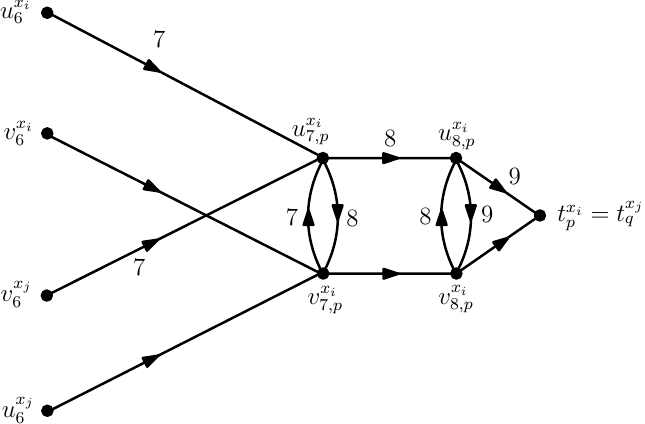}} 
\subfigure[]{ \label{assignment-fig-2}
\includegraphics[scale=0.68]{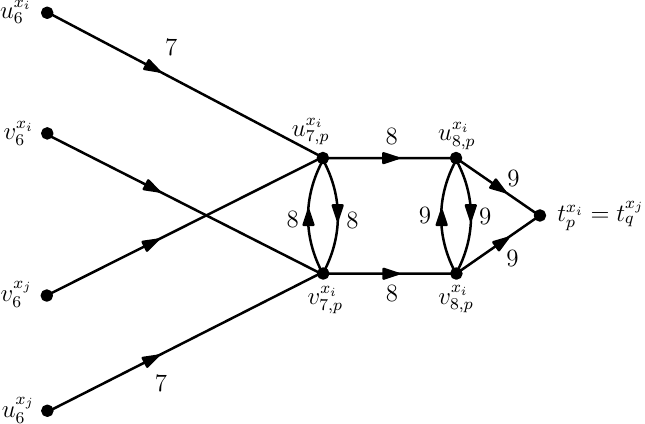}}
\caption{The labeling of the edges of Figure~\protect\ref%
{clause-gadget-fig-1} for the clause $\protect\alpha =(x_{i}\vee x_{j})$,
where (a)~$\protect\alpha $ is XOR-satisfied and $x_{i}=\overline{x_{j}}=0$
in $\protect\tau $ and (b)~$\protect\alpha $ is XOR-unsatisfied and $%
x_{i}=x_{j}=0$ in $\protect\tau $.}
\label{assignment-fig}
\end{figure}

Summarizing, for every $i\in \{1,2,\ldots ,n\}$, the edges of the trunk of $%
G_{\phi ,i}$ are labeled with $18$ labels (cf.~Figure~\ref%
{labeling-x-0-1-fig}), and thus $\lambda $ uses in total $18n$ labels for
the trunks of all $G_{\phi ,i}$, $i\in \{1,2,\ldots ,n\}$. Furthermore, for
every $i\in \{1,2,\ldots ,n\}$ and every $p\in \{1,2,3\}$, $\lambda $ uses $%
1 $ label for the two transition edges of the $p$th branch of $G_{\phi ,i}$
(cf.~Figure~\ref{assignment-fig}), and thus $\lambda $ uses in total $3n$
labels for the transition edges of all $G_{\phi ,i}$, $i\in \{1,2,\ldots
,n\} $. Moreover, for each of the $3n-2m$ branches that belong to only one
gadget $G_{\phi ,i}$, where $1\leq i\leq n$, $\lambda $ uses $6$ labels for
the edges of this branch of $G_{\phi ,i}$, and thus $\lambda $ uses in total 
$6(3n-2m)$ labels for all these $3n-2m$ branches. Finally consider any of
the remaining $m$ branches of $G_{\phi }$, each of which corresponds to a
clause $\alpha $ of $\phi $ (i.e.~this branch belongs simultaneously to a
gadget $G_{\phi ,i}$ and a gadget $G_{\phi ,j}$, where $1\leq i<j\leq n$).
If $\alpha $ is XOR-satisfied in $\tau $, then $\lambda $ uses $6$ labels
for the edges of this branch (cf.~for example Figure~\ref{assignment-fig-1}). Otherwise, if $\alpha $ is XOR-unsatisfied in $\tau $, then $\lambda $
uses $8$ labels for the edges of this branch (cf.~for example Figure~\ref%
{assignment-fig-2}). Therefore, since $\tau $ XOR-satisfies by assumption $k$
of the $m$ clauses of $\phi $, it follows $\lambda $ uses in total $%
18n+3n+6(3n-2m)+6k+8(m-k)=39n-4m-2k$ labels, and thus $\kappa(G_{\phi
},\reach,d(G_{\phi }))\leq 39n-4m-2k$.

\medskip

($\Leftarrow $) Assume that $\kappa(G_{\phi },\reach,d(G_{\phi }))\leq
39n-4m-2k$ and let $\lambda $ be a labeling of $G_{\phi }$ 
that maintains all reachabilities and has minimum cost (i.e.~has the smallest number of labels); that is, 
$|\lambda|\leq 39n-4m-2k$. 
Let $i\in \{1,2,\ldots ,n\}$. Note that for every $z\in
\{1,2,\ldots ,6\}$, the vertices $u_{z}^{x_{i}}$ and $v_{z}^{x_{i}}$ reach
each other in $G_{\phi }$ with a unique path (of length one). Therefore,
each of the directed edges $\left\langle
u_{z}^{x_{i}}v_{z}^{x_{i}}\right\rangle $ and $\left\langle
v_{z}^{x_{i}}u_{z}^{x_{i}}\right\rangle $, where $z\in \{1,2,\ldots ,6\}$,
receives at least one label in every labeling, and thus also in $\lambda $.
Similarly it follows that each of the directed edges $\left\langle
u_{z,p}^{x_{i}}v_{z,p}^{x_{i}}\right\rangle $ and $\left\langle
u_{z,p}^{x_{i}}v_{z,p}^{x_{i}}\right\rangle $, where $z\in \{7,8\}$ and $%
p\in \{1,2,3\}$, receives at least one label in every labeling, and thus
also in $\lambda $.

For every $i\in \{1,2,\ldots ,n\}$, define now the two paths $%
P_{i}=(s^{x_{i}},u_{1}^{x_{i}},u_{2}^{x_{i}},\ldots ,u_{6}^{x_{i}})$ and $%
Q_{i}=(s^{x_{i}},v_{1}^{x_{i}},v_{2}^{x_{i}},\ldots ,v_{6}^{x_{i}})$.
Furthermore, for every $p\in \{1,2,3\}$, define the paths $%
P(i,p)=(P_{i},u_{7,p}^{x_{i}},u_{8,p}^{x_{i}},t_{p}^{x_{i}})$ and $%
Q(i,p)=(Q_{i},v_{7,p}^{x_{i}},v_{8,p}^{x_{i}},t_{p}^{x_{i}})$. Note that $%
P(i,p)$ and $Q(i,p)$ are the only two paths in $G_{\phi }$ from $s^{x_{i}}$
to $t_{p}^{x_{i}}$ with distance $d(G_{\phi })=9$. Thus, since $\lambda $
preserves all reachabilities of $G_{\phi }$ with maximum label $9$, it
follows that for every $i\in \{1,2,\ldots ,n\}$ and every $p\in \{1,2,3\}$,
the edges of $P(i,p)$ or the edges of $Q(i,p)$ are labeled with the labels $%
1,2,\ldots ,9$ in $\lambda $.

Assume that there exists an ${i\in \{1,2,\ldots ,n\}}$ such that all edges
of the path $P_{i}$ and all edges of the path $Q_{i}$ are labeled in $\lambda$. 
Note that, if there exists no value $p\in\{1,2,3\}$ such that all edges of $P(i,p)$ (resp.~of $Q(i,p)$) are labeled, 
then we can remove all labels from $P(i,p)$ (resp.~from $Q(i,p)$) and construct another labeling $\lambda'$ that still maintains all reachabilities 
of $G_{\phi}$ but has fewer labels than $\lambda$, which is a contradiction to the minimality assumption of $\lambda$. 
Therefore, there must exist values $p,q\in\{1,2,3\}$ such that all edges of $P(i,p)$ and all edges of $Q(i,q)$ are labeled in~$\lambda$.
Then, in both cases where ${p=q}$ and ${p\neq q}$, we modify $%
\lambda $ into a labeling $\lambda ^{\prime }$ as follows. We remove the
labels from the seven edges of the path $(Q_{i},v_{7,q}^{x_{i}})$, and we
add labels (if they do not already have labels) to the six edges $%
\langle u_{6}^{x_{i}}u_{7,z}^{x_{i}}\rangle ,\langle
u_{7,z}^{x_{i}}u_{8,z}^{x_{i}}\rangle ,\langle
u_{8,z}^{x_{i}}t_{z}^{x_{i}}\rangle $, where $z\in \{1,2,3\}\setminus \{p\}$. Note that, in this new labeling $\lambda ^{\prime }$, we can always
preserve all reachabilities of the vertices by choosing the appropriate
labels for the edges $\left\langle u_{1}^{x_{i}}v_{1}^{x_{i}}\right\rangle
,\left\langle v_{1}^{x_{i}}u_{1}^{x_{i}}\right\rangle ,\left\langle
u_{2}^{x_{i}}v_{2}^{x_{i}}\right\rangle ,\left\langle
v_{2}^{x_{i}}u_{2}^{x_{i}}\right\rangle ,\ldots ,\left\langle
u_{6}^{x_{i}}v_{6}^{x_{i}}\right\rangle ,$ $\langle v_{6}^{x_{i}}u_{6}^{x_{i}}\rangle ,\langle
u_{7,z}^{x_{i}}v_{7,z}^{x_{i}}\rangle ,\langle
v_{7,z}^{x_{i}}u_{7,z}^{x_{i}}\rangle ,\langle
u_{8,z}^{x_{i}}v_{8,z}^{x_{i}}\rangle ,\langle
v_{8,z}^{x_{i}}u_{8,z}^{x_{i}}\rangle $, where $z\in \{1,2,3\}$, cf. for
example the labelings of Figures~\ref{labeling-x-0-1-fig} and~\ref%
{assignment-fig}. However, by construction, the new labeling $\lambda
^{\prime }$ uses a smaller number of labels than the initial labeling $%
\lambda $, which is a contradiction. Therefore,
we may assume without loss of generality that for every $i\in \{1,2,\ldots
,n\}$, it is not the case that all edges of both paths $P_{i}$ and $Q_{i}$
are labeled in $\lambda $, i.e.~either all edges of $P_{i}$ or all edges of $%
Q_{i}$ are labeled in $\lambda $.

We now construct a truth assignment $\tau $ for the formula $\phi $ as
follows. For every $i\in \{1,2,\ldots ,n\}$, if all edges of the path $P_{i}$
are labeled in $\lambda $, then we define $x_{i}=0$ in $\tau $. Otherwise,
if all edges of the path $Q_{i}$ are labeled in $\lambda $, then we define $%
x_{i}=1$ in $\tau $. We will prove that $|\tau(\phi )|\geq k$, i.e.~that $\tau $
XOR-satisfies at least $k$ clauses of the formula $\phi $.

Let $i\in \{1,2,\ldots ,n\}$. Recall that each of the directed edges $%
\left\langle u_{z}^{x_{i}}v_{z}^{x_{i}}\right\rangle $ and $\left\langle
v_{z}^{x_{i}}u_{z}^{x_{i}}\right\rangle $, where $z\in \{1,2,\ldots ,6\}$,
receives at least one label in $\lambda $. Therefore, since all six edges of 
$P_{i}$ or all six edges of $Q_{i}$ are labeled in $\lambda $, it follows
that $\lambda $ uses for the trunk of $G_{\phi ,i}$ at least $18$ labels.
Thus, $\lambda $ uses in total at least $18n$ labels for the trunks of all $%
G_{\phi ,i}$, $i\in \{1,2,\ldots ,n\}$.

Let now $p\in \{1,2,3\}$. Then, since $%
P(i,p)=(P_{i},u_{7,p}^{x_{i}},u_{8,p}^{x_{i}},t_{p}^{x_{i}})$ and $%
Q(i,p)=(Q_{i},v_{7,p}^{x_{i}},v_{8,p}^{x_{i}},t_{p}^{x_{i}})$ are the only
two paths in $G_{\phi }$ from $s^{x_{i}}$ to $t_{p}^{x_{i}}$ with distance $%
d(G_{\phi })=9$, it follows that $\lambda $ uses at least one label for the
pair of the transition edges $\{\langle u_{6}^{x_{i}}u_{7,p}^{x_{i}}\rangle
,$ $\langle v_{6}^{x_{i}}v_{7,p}^{x_{i}}\rangle \}$ of the $p$th branch of $%
G_{\phi ,i}$. Thus, $\lambda $ uses in total at least $3n$ labels for the
transition edges of all $G_{\phi ,i}$, $i\in \{1,2,\ldots ,n\}$.

Consider an arbitrary branch of $G_{\phi }$, e.g.~the $p$th branch of $%
G_{\phi ,i}$, where $i\in \{1,2,\ldots ,n\}$ and $p\in \{1,2,3\}$. Since $%
P(i,p)$ and $Q(i,p)$ are the only two paths in $G_{\phi }$ from $s^{x_{i}}$
to $t_{p}^{x_{i}}$ with distance $d(G_{\phi })=9$, it follows that $\lambda $
assigns at least one label to each of the edges $\{\langle
u_{7,p}^{x_{i}}u_{8,p}^{x_{i}}\rangle ,\langle
u_{8,p}^{x_{i}}t_{p}^{x_{i}}\rangle \}$, or at least one label to each of
the edges $\{\langle v_{7,p}^{x_{i}}v_{8,p}^{x_{i}}\rangle ,\langle
v_{8,p}^{x_{i}}t_{p}^{x_{i}}\rangle \}$. Furthermore recall that each of the
edges $\left\langle u_{z,p}^{x_{i}}v_{z,p}^{x_{i}}\right\rangle $ and $%
\left\langle v_{z,p}^{x_{i}}u_{z,p}^{x_{i}}\right\rangle $, where $z\in
\{7,8\}$, receives at least one label in $\lambda $. Therefore, $\lambda $
uses at least $6$ labels for an arbitrary branch of $G_{\phi }$.

Consider now one of the clauses $\alpha =(\ell _{i}\vee \ell _{j})$ of $\phi 
$ that are not XOR-satisfied in $\tau $ that we defined above. Note that
there exist exactly $m-|\tau(\phi )|$ such clauses in $\phi $. Let $i<j$, and
thus $\ell _{i}=x_{i}$ and $\ell _{j}\in \{x_{j},\overline{x_{j}}\}$. Assume
that the literal $\ell _{i}$ (resp.~$\ell _{j}$) of the clause $\alpha $
corresponds to the $p$th (resp.~to the $q$th) appearance of variable $x_{i}$
(resp.~$x_{j}$) in $\phi $. Then, by the construction of $G_{\phi }$, the $p$th branch of $G_{\phi ,i}$ coincides with the $q$th branch of $G_{\phi ,j}$.
Suppose first that $\ell _{j}=x_{j}$. Then $x_{i}=x_{j}$, since $\alpha $ is
not XOR-satisfied in $\tau $. By the construction of the truth assignment $%
\tau $ from the labeling $\lambda $, it follows that either all edges of $%
P(i,p)$ and all edges of $P(j,q)$ are labeled in $\lambda $ (in the case
where $x_{i}=x_{j}=0$), or all edges of $Q(i,p)$ and all edges of $Q(j,q)$
are labeled in $\lambda $ (in the case where $x_{i}=x_{j}=1$). Since $\ell
_{j}=x_{j}$, note by the construction of $G_{\phi }$ that the last two edges
of $P(i,p)$ are different from the last two edges of $P(j,q)$, while the
last two edges of $Q(i,p)$ are different from the last two edges of $Q(j,q)$. Therefore, since each of the edges $\langle
u_{z,p}^{x_{i}}v_{z,p}^{x_{i}}\rangle $ and $\langle
v_{z,p}^{x_{i}}u_{z,p}^{x_{i}}\rangle $, where $z\in \{7,8\}$, receives at
least one label in $\lambda $, it follows that $\lambda $ uses for the $p$th
branch of $G_{\phi ,i}$ (equivalently for the $q$th branch of $G_{\phi ,j}$)
at least $8$ labels, if $\ell _{j}=x_{j}$.

Suppose now that $\ell _{j}=\overline{x_{j}}$. Then $x_{i}=\overline{x_{j}}$, since $\alpha $ is not XOR-satisfied in $\tau $. Similarly to the case
where $x_{i}=x_{j}$, it follows that either all edges of $P(i,p)$ and all
edges of $Q(j,q)$ are labeled in $\lambda $ (in the case where $x_{i}=%
\overline{x_{j}}=0$), or all edges of $Q(i,p)$ and all edges of $P(j,q)$ are
labeled in $\lambda $ (in the case where $x_{i}=\overline{x_{j}}=1$). Since $%
\ell _{j}=\overline{x_{j}}$, note by the construction of $G_{\phi }$ that
the last two edges of $P(i,p)$ are different from the last two edges of $%
Q(j,q)$, while the last two edges of $Q(i,p)$ are different from the last
two edges of $P(j,q)$. Therefore, since each of the edges $\langle
u_{z,p}^{x_{i}}v_{z,p}^{x_{i}}\rangle $ and $\langle
v_{z,p}^{x_{i}}u_{z,p}^{x_{i}}\rangle $, where $z\in \{7,8\}$, receives at
least one label in $\lambda $, it follows that $\lambda $ uses for the $p$th
branch of $G_{\phi ,i}$ (equivalently for the $q$th branch of $G_{\phi ,j}$)
at least $8$ labels, if $\ell _{j}=\overline{x_{j}}$.

Summarizing, $\lambda $ uses in total at least $18n$ labels for the edges of
the trunks of all $G_{\phi ,i}$, at least $3n$ labels for the transition
edges of all $G_{\phi ,i}$, at least $6$ labels for an arbitrary branch of $%
G_{\phi }$, and at least $8$ labels for each of the branches of $G_{\phi }$
that corresponds to a clause $\alpha $ of $\phi $ that is not XOR-satisfied
in $\tau $. Therefore, since $G_{\phi }$ has in total $3n-m$ branches and $%
\phi $ has $m-|\tau(\phi )|$ XOR-unsatisfied clauses in $\tau $, it follows
that $\lambda $ uses at least $18n+3n+6(3n-m-(m-|\tau(\phi )|))+8(m-|\tau(\phi
)|)=39n-4m-2|\tau(\phi )|$ labels. However $|\lambda |\leq 39n-4m-2k$ by
assumption. Therefore $39n-4m-2|\tau(\phi )|\leq |\lambda |\leq 39n-4m-2k$, and
thus $\tau $ XOR-satisfies $|\tau(\phi )|\geq k$ clauses in $\phi $. This
completes the proof of the theorem.
\qquad
\end{proof}

\medskip

Using Theorem~\ref{cost-diameter-upper-lower-bound-thm}, we are now ready to
prove the main theorem of this section.

\begin{theorem} [Hardness of Approximating the Temporal Cost]
\label{cost-diameter-Apx-hard-thm}The problem of computing $%
\kappa(G,reach,d(G))$ is APX-hard, even when the maximum length of a
directed cycle in $G$ is $2$.
\end{theorem}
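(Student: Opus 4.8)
The plan is to upgrade the construction $\phi\mapsto G_\phi$ preceding Theorem~\ref{cost-diameter-upper-lower-bound-thm} into a PTAS reduction (in fact an L-reduction) from Max-XOR$(3)$ to the problem of computing $\kappa(G,\reach,d(G))$; since Max-XOR$(3)$ is $\rem{APX}$-hard by Lemma~\ref{Max_XOR-3-hard-lem} and PTAS reductions compose, this yields the theorem. The instances $G_\phi$ output by the reduction have, by construction, their longest directed cycle of length $2$, so the claimed restriction comes for free. Throughout, write $\OPT(\phi):=\OPT_{\text{Max-XOR}(3)}(\phi)$ for the maximum number of simultaneously XOR-satisfiable clauses of $\phi$, and $\kappa^\ast:=\kappa(G_\phi,\reach,d(G_\phi))$.

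First I would normalise the input: delete every variable that does not occur in $\phi$ (this changes neither $\OPT(\phi)$ nor $G_\phi$), so that the multiset of literal occurrences has size $2m$ and every variable occurs at least once, giving $n\leq 2m$. Combining this with the bound $\OPT(\phi)\geq m/2$ (a uniformly random assignment XOR-satisfies each clause with probability $\tfrac12$, as already used in the proof of Lemma~\ref{Max_XOR-3-hard-lem}) we obtain $m\leq 2\,\OPT(\phi)$ and hence $39n-4m\leq 39n\leq 78m\leq 156\,\OPT(\phi)$. Applying the ($\Rightarrow$) direction of Theorem~\ref{cost-diameter-upper-lower-bound-thm} to an optimal assignment of $\phi$ produces a feasible $(\reach,d(G_\phi))$-labeling of cost $39n-4m-2\,\OPT(\phi)$, so
\[
0\ \leq\ \kappa^\ast\ \leq\ 39n-4m-2\,\OPT(\phi)\ \leq\ 156\,\OPT(\phi).
\]
In particular $\kappa^\ast$ and $\OPT(\phi)$ are within a constant factor of each other, which is the ``cost'' half of an L-reduction.

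For the ``solution'' half, given any labeling $\lambda$ of $G_\phi$ with $|\lambda|=L$, the ($\Leftarrow$) direction of Theorem~\ref{cost-diameter-upper-lower-bound-thm} extracts, in polynomial time, a truth assignment $\tau$ of $\phi$ with $39n-4m-2|t(\phi,\tau)|\leq L$, i.e.\ $|t(\phi,\tau)|\geq\tfrac12(39n-4m-L)$. Now suppose $L\leq(1+\delta)\kappa^\ast$ for some $\delta>0$. Using the bound $\kappa^\ast\leq 39n-4m-2\,\OPT(\phi)$ for the coefficient $1$ of $\kappa^\ast$, and $\kappa^\ast\leq 156\,\OPT(\phi)$ for the coefficient $\delta$, we get
\[
|t(\phi,\tau)|\ \geq\ \tfrac12\bigl(39n-4m-\kappa^\ast-\delta\kappa^\ast\bigr)\ \geq\ \tfrac12\bigl(2\,\OPT(\phi)-156\,\delta\,\OPT(\phi)\bigr)\ =\ (1-78\delta)\,\OPT(\phi).
\]
Hence a PTAS for computing $\kappa(G,\reach,d(G))$, invoked with $\delta=\varepsilon/78$, would (after the polynomial-time extraction above) yield an assignment XOR-satisfying at least $(1-\varepsilon)\,\OPT(\phi)$ clauses of $\phi$, i.e.\ a PTAS for Max-XOR$(3)$ --- contradicting Lemma~\ref{Max_XOR-3-hard-lem} unless $\rem{P}=\rem{NP}$.

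The step I expect to need the most care is exactly the inequality chain in the previous paragraph: because the reduction is ``indirect'' (a cheaper labeling of $G_\phi$ corresponds to a \emph{better} assignment of $\phi$), the large additive term $39n-4m$ must be controlled to be $O(\OPT(\phi))$ and the slack $\delta\kappa^\ast$ absorbed without blow-up --- which is precisely what the normalisation $n\leq 2m$ together with $\OPT(\phi)\geq m/2$ buys. The remaining ingredients --- that the labeling produced by the ($\Rightarrow$) direction is genuinely feasible for $(\reach,d(G_\phi))$, and that the $\tau$ produced by the ($\Leftarrow$) direction is computed in polynomial time --- are already established inside the proof of Theorem~\ref{cost-diameter-upper-lower-bound-thm}, so no further work is needed there.
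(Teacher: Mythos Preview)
Your proposal is correct and follows essentially the same approach as the paper's own proof: both carry out a PTAS reduction from Max-XOR$(3)$ via the construction $\phi\mapsto G_\phi$ and the two directions of Theorem~\ref{cost-diameter-upper-lower-bound-thm}, then chain the inequalities $\kappa^\ast\leq 39n-4m-2\,\OPT(\phi)$, $\OPT(\phi)\geq m/2$, and a linear bound on $n$ in terms of $m$ to absorb the additive term and the $(1+\delta)$ slack. The only cosmetic differences are that you use the (safer) bound $n\leq 2m$ where the paper writes $n\leq m$, yielding a final constant $78$ in place of the paper's $34$, and that you phrase the argument explicitly as an L-reduction.
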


\begin{proof}
Denote now by OPT$_{\text{Max-XOR}(3)}(\phi )$ the greatest number of
clauses that can be simultaneously XOR-satisfied by a truth assignment of $%
\phi $. Then Theorem~\ref{cost-diameter-upper-lower-bound-thm} implies that 
\begin{equation}
\kappa(G_{\phi },reach,d(G_{\phi }))\leq 39n-4m-2\cdot \text{OPT}_{\text{Max-XOR}(3)}(\phi )  \label{c-opt-upper-bound-eq}
\end{equation}%
Note that a random assignment XOR-satisfies each clause of $\phi $ with
probability $\frac{1}{2}$, and thus we can easily compute (even
deterministically) an assignment $\tau $ that XOR-satisfies $\frac{m}{2}$
clauses of $\phi $. Therefore OPT$_{\text{Max-XOR}(3)}(\phi )\geq \frac{m}{2}
$, and thus, since every variable $x_{i}$ appears in at least one clause of $%
\phi $, it follows that%
\begin{equation}
\frac{n}{2}\leq ~m\leq 2\cdot\text{OPT}_{\text{Max-XOR}(3)}(\phi )  \label{m-upper-bound-eq}
\end{equation}
Assume that there is a PTAS for computing $\kappa(G,\reach,d(G))$. Then,
for every $\varepsilon >0$ we can compute in polynomial time a labeling $%
\lambda $ for the graph $G_{\phi }$, such that 
\begin{equation}
|\lambda |\leq (1+\varepsilon )\cdot \kappa(G_{\phi },reach,d(G_{\phi }))
\label{PTAS-bound-eq}
\end{equation}%
Given such a labeling $\lambda $ we can compute by the sufficiency part\ ($%
\Leftarrow $) of the proof of Theorem~\ref{cost-diameter-upper-lower-bound-thm} a truth assignment $\tau $ of $\phi $
such that $39n-4m-2|\tau(\phi )|\leq |\lambda |$, i.e.%
\begin{equation}
2|\tau(\phi )|\geq 39n-4m-|\lambda |  \label{tau-lower-bound-eq}
\end{equation}%
Therefore it follows by (\ref{c-opt-upper-bound-eq}), (\ref{m-upper-bound-eq}), (\ref{PTAS-bound-eq}), and (\ref{tau-lower-bound-eq}) that%
\begin{eqnarray*}
2|\tau(\phi )| &\geq &39n-4m-(1+\varepsilon )\cdot \kappa(G_{\phi},reach,d(G_{\phi })) \\
&\geq &39n-4m-(1+\varepsilon )\cdot \left( 39n-4m-2\cdot \text{OPT}_{\text{Max-XOR}(3)}(\phi )\right) \\
&=&\varepsilon \left( 4m-39n\right) +2(1+\varepsilon )\cdot \text{OPT}_{\text{Max-XOR}(3)}(\phi ) \\
&\geq &\varepsilon \left( 4m-78m\right) +2(1+\varepsilon )\cdot \text{OPT}_{\text{Max-XOR}(3)}(\phi ) \\
&\geq &-74\varepsilon m+2(1+\varepsilon )\cdot \text{OPT}_{\text{Max-XOR}(3)}(\phi ) \\
&\geq &-74\varepsilon \cdot 2\text{OPT}_{\text{Max-XOR}(3)}(\phi)+2(1+\varepsilon )\cdot \text{OPT}_{\text{Max-XOR}(3)}(\phi ) \\
&=&2(1-73\varepsilon )\cdot \text{OPT}_{\text{Max-XOR}(3)}(\phi )
\end{eqnarray*}%
and thus%
\begin{equation}
|\tau(\phi )|\geq (1-73\varepsilon )\cdot \text{OPT}_{\text{Max-XOR}(3)}(\phi )
\label{PTAS-contradiction-eq}
\end{equation}%

That is, assuming a PTAS for computing $\kappa(G,\reach,d(G))$, we obtain a
PTAS for the Max-XOR$(3)$ problem, which is a contradiction by Lemma~\ref%
{Max_XOR-3-hard-lem}. Therefore computing $\kappa(G,\reach,d(G))$ is
APX-hard. Finally, since the graph $G_{\phi }$ that we constructed from the
formula $\phi $ has maximum length of a directed cycle at most $2$, it
follows that computing $\kappa(G,\reach,d(G))$ is APX-hard even if the
given graph $G$ has maximum length of a directed cycle at most $2$.
\qquad
\end{proof}

\subsubsection{Approximating the Cost}

In this section, we provide an approximation algorithm for computing $\kappa(G,reach,d(G))$, which complements the hardness result of Theorem \ref{cost-diameter-Apx-hard-thm}. Given a digraph $G$ define, for every $u\in V$, $u$'s reachability number $r(u)= |\{v\in V: v\text{ is reachable from } u\}|$ and $r(G)=\sum_{u\in V} r(u)$, that is $r(G)$ is the total number of reachabilities in $G$.

\begin{theorem}
There is an $\frac{r(G)}{n-1}$-factor approximation algorithm for computing $\kappa(G,reach,d(G))$ on any weakly connected digraph $G$.
\end{theorem}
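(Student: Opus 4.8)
The plan is to squeeze the optimum between a trivial lower bound and the cost of a simple greedy labeling, and then to check that the two bounds differ by exactly the claimed factor. Assume $n\ge 2$ (for $n=1$ there is nothing to prove). I will establish (i) that $\kappa(G,\reach,d(G))\ge n-1$ for every weakly connected digraph $G$, and (ii) a polynomial-time algorithm that outputs a feasible labeling $\lambda$ with $|\lambda|\le r(G)$. Granting (i) and (ii), the labeling returned by the algorithm satisfies $|\lambda|\le r(G)=\frac{r(G)}{n-1}\,(n-1)\le\frac{r(G)}{n-1}\cdot\kappa(G,\reach,d(G))$, which is precisely an $\frac{r(G)}{n-1}$-factor approximation.

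For (ii), the algorithm is the following. For every vertex $u\in V$, run a breadth-first search from $u$ along the directions of the edges; this yields the distances $\mathrm{dist}(u,v)$ for all $v$ reachable from $u$, together with a shortest-path out-tree $T_u$ rooted at $u$ that spans exactly the set of vertices reachable from $u$. For each non-root vertex $v$ of $T_u$ with parent edge $e_v$, add the label $\mathrm{dist}(u,v)$ to $\lambda(e_v)$; on each edge, take the union of all labels thereby assigned. Feasibility holds because along the unique $T_u$-path from $u$ to a reachable vertex $v$ the edges receive, in order, the strictly increasing labels $1,2,\ldots,\mathrm{dist}(u,v)$ (taking the union of label sets only adds labels, so these remain available), so $\lambda$ preserves a simple $u$--$v$ path and hence $\lambda\in\reach(G)$; moreover every $\mathrm{dist}(u,v)$ is at most $d(G)$, so all labels lie in $\{1,\dots,d(G)\}$ and $\alpha(\lambda)\le d(G)$, i.e. $\lambda\in\mathcal{L}_{G,d(G)}$. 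For the cost, $T_u$ has at most $r(u)$ edges and contributes at most one label to each, so $|\lambda|\le\sum_{u\in V}r(u)=r(G)$. The algorithm runs $n$ breadth-first searches, hence takes time polynomial in $n$ and $m$.

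For (i), let $\lambda$ be any feasible labeling and set $E_\ell=\{e\in E:\lambda(e)\neq\emptyset\}$, so that $|\lambda|=\sum_{e\in E}|\lambda(e)|\ge|E_\ell|$. I claim that the undirected graph $(V,E_\ell)$ is connected; this gives $|E_\ell|\ge n-1$ and hence $\kappa(G,\reach,d(G))\ge n-1$. Suppose it were disconnected, and take a partition of $V$ into two nonempty parts $A$ and $B$ with no edge of $E_\ell$ joining them. Since $G$ is weakly connected, $G$ contains an edge joining $A$ and $B$; say it is directed from $a\in A$ to $b\in B$ (the other orientation is symmetric). This edge is unlabeled, so although $b$ is reachable from $a$ in $G$, every journey out of $a$ uses only edges of $E_\ell$ and therefore never crosses from $A$ to $B$; thus no journey reaches $b$, contradicting $\lambda\in\reach(G)$.

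The construction and the arithmetic are routine; the one step that takes a little care is (i), namely the observation that feasibility forces the labeled edges to form a connected spanning subgraph of the underlying undirected graph (and the bookkeeping that the two one-line bounds combine, via $\kappa(G,\reach,d(G))\ge n-1$, into the stated ratio). This approximation result complements the $\rem{APX}$-hardness of Theorem~\ref{cost-diameter-Apx-hard-thm}.
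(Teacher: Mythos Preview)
Your proof is correct and follows essentially the same approach as the paper: the algorithm labels edges of per-vertex BFS out-trees by depth to get a feasible labeling of cost at most $r(G)$, and the lower bound $\kappa(G,\reach,d(G))\ge n-1$ comes from the fact that the labeled edges must form a weakly connected spanning subgraph. Your write-up is, if anything, slightly more careful than the paper's (you write $|\lambda|\le r(G)$ rather than equality, correctly allowing for label collisions across trees).
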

\begin{proof}
First of all, note that $\OPT\geq n-1$, where $\OPT$ is the cost of the optimal solution. The reason is that if a labeling labels less than $n-1$ edges then the subgraph of $G$ induced by the labeled edges is disconnected (not even weakly connected) thus clearly fails to preserve some reachabilities. To see this, take any two components $C_1$ and $C_2$. $G$ either has an edge from $C_1$ to $C_2$ or from $C_2$ to $C_1$ (or both). The two cases are symmetric so just consider the first one. Clearly some node from $C_1$ can reach some node from $C_2$ but this reachability has not been preserved by the labeling. 

Now consider the following labeling algorithm.
\begin{enumerate}
 \item For all $u\in V$, compute a BFS out-tree $T_u$ rooted at $u$.
 \item For all $T_u$, give to each edge at distance $i$ from the root label $i$.
 \item Output this labeling $\lambda$.
\end{enumerate}
Clearly, the maximum label used by $\lambda$ is $d(G)$: indeed if an edge $e$ was assigned some label $l>d(G)$ then this would imply that on some BFS out-tree $e$ appeared at distance $>d(G)$ which is a contradiction. Moreover, $\lambda$ preserves all reachabilities as for every $u$ the corresponding tree rooted at $u$ reaches all nodes that are reachable from $u$ and the described labeling clearly preserves the corresponding paths. Finally, we have that the cost paid by our algorithm is $\ALG=|\lambda|=r(G)$. To see this, notice that for all $u$ we use (i.e.~we label) precisely $r(u)$ edges in $T_u$, thus, in total, we use $\sum_{u\in V} r(u)=r(G)$ edges by definition of $r(G)$.

We conclude that
\begin{equation*}
\frac{\ALG}{\OPT} \leq \frac{r(G)}{n-1}\Rightarrow \ALG\leq \frac{r(G)}{n-1}\OPT
\end{equation*}
\qquad
\end{proof}

\section{Conclusions and Further Research}
\label{sec:conc}

There are many open problems related to the findings of the present work. We have considered several graph families in which the temporality of preserving all paths is very small (e.g.~2 for rings) and others in which it is very close to the worst possible (i.e.~$\Omega(n)$ for cliques and $\Omega(n^{1/3})$ for planar graphs). There are still many interesting graph families to be investigated like regular or bounded-degree graphs. Moreover, though it turned out to be a generic lower-bounding technique related to the existence of a large edge-kernel in the underlying graph $G$, we still do not know whether there are other structural properties of the underlying graph that could cause a growth of the temporality (i.e.~the absence of a large edge-kernel does not necessarily imply small temporality). Similar things hold also for the $\reach$ property. There are also many other natural connectivity properties subject to which optimization is to be performed that we haven't yet considered, like preserving a shortest path from $u$ to $v$ whenever $v$ is reachable from $u$ in $G$, or even depart from paths and require the preservation of more complex subgraphs (for some appropriate definition of ``preservation''). Another interesting direction which we didn't consider in this work is to set the optimization criterion to be the age of $\lambda$ e.g.~w.r.t. the $\ap$ or the $\reach$ connectivity properties. In this case, computing $\alpha(G,\text{all paths})$ is NP-hard, which can be proved by reduction from HAMPATH. On the positive side, it is easy to come up with a 2-factor approximation algorithm for $\alpha(G, reach, 2)$, where we have restricted the maximum number of labels of an edge (i.e.~the temporality) to be at most 2. 
Additionally, there seems to be great room for approximation algorithms (or even randomized algorithms) for all combinations of optimization parameters and connectivity constraints that we have defined so far, or even polynomial-time algorithms for specific graph families. Finally, it would be valuable to consider other models of temporal graphs and in particular models with succinct representations, that is models in which the labels of every edge are provided by some short function associated to that edge (in contrast to a complete description of all labels). Such examples are several probabilistic models and several periodic models which are worth considering.\\

\noindent \textbf{Acknowledgments.} We would like to thank the anonymous reviewers of this article and its preliminary versions. Their thorough reading and comments have helped us to improve our work substantially.

{\small

}


\begin{thebibliography}{10}

\bibitem{AlimontiKann97}
Paola Alimonti and Viggo Kann.
\newblock Hardness of approximating problems on cubic graphs.
\newblock In {\em Proceedings of the Third Italian Conference on Algorithms and
  Complexity {(CIAC)}}, pages 288--298, 1997.

\bibitem{AADFP06}
Dana Angluin, James Aspnes, Zo{\"e} Diamadi, Michael~J. Fischer, and Ren{\'e}
  Peralta.
\newblock Computation in networks of passively mobile finite-state sensors.
\newblock {\em Distributed Computing}, pages 235--253, March 2006.

\bibitem{AKL08}
Chen Avin, Michal Kouck\'{y}, and Zvi Lotker.
\newblock How to explore a fast-changing world (cover time of a simple random
  walk on evolving graphs).
\newblock In {\em Proceedings of the 35th international colloquium on Automata,
  Languages and Programming (ICALP), Part I}, pages 121--132. Springer-Verlag,
  2008.

\bibitem{Be96}
Kenneth~A. Berman.
\newblock Vulnerability of scheduled networks and a generalization of
  {M}enger's theorem.
\newblock {\em Networks}, 28(3):125--134, 1996.

\bibitem{Bo98}
B\'ela Bollob\'as.
\newblock {\em Modern Graph Theory}.
\newblock Springer, corrected edition, 1998.

\bibitem{CFQS12}
Arnaud Casteigts, Paola Flocchini, Walter Quattrociocchi, and Nicola Santoro.
\newblock Time-varying graphs and dynamic networks.
\newblock {\em International Journal of Parallel, Emergent and Distributed
  Systems (IJPEDS)}, 27(5):387--408, 2012.

\bibitem{CFTE08}
Andrea~E.F. Clementi, Claudio Macci, Angelo Monti, Francesco Pasquale, and
  Riccardo Silvestri.
\newblock Flooding time in edge-markovian dynamic graphs.
\newblock In {\em Proceedings of the 27th ACM Symposium on Principles of
  Distributed Computing (PODC)}, pages 213--222, 2008.

\bibitem{CLRS01}
Thomas~H. Cormen, Charles~E. Leiserson, Ronald~L. Rivest, and Clifford Stein.
\newblock {\em Introduction to Algorithms, Second Edition}.
\newblock The MIT Press and McGraw-Hill Book Company, 2001.

\bibitem{CKS01}
Nadia Creignou, Sanjeev Khanna, and Madhu Sudan.
\newblock {\em Complexity classifications of boolean constraint satisfaction
  problems}.
\newblock {SIAM Monographs on Discrete Mathematics and Applications}, 2001.

\bibitem{DPRS13}
C.~Dutta, G.~Pandurangan, R.~Rajaraman, Z.~Sun, and E.~Viola.
\newblock On the complexity of information spreading in dynamic networks.
\newblock In {\em Proceedings of the 24th Annual ACM-SIAM Symposium on Discrete
  Algorithms (SODA)}, 2013.

\bibitem{FT98}
Lisa Fleischer and {\'E}va Tardos.
\newblock Efficient continuous-time dynamic network flow algorithms.
\newblock {\em Operations Research Letters}, 23(3):71--80, 1998.

\bibitem{FT87}
Michael~L. Fredman and Robert~Endre Tarjan.
\newblock Fibonacci heaps and their uses in improved network optimization
  algorithms.
\newblock {\em Journal of the ACM}, 34:596--615, July 1987.

\bibitem{GPPR01}
Cyril Gavoille, David Peleg, St{\'e}phane P{\'e}rennes, and Ran Raz.
\newblock Distance labeling in graphs.
\newblock In {\em Proceedings of the 12th annual ACM-SIAM Symposium on Discrete
  algorithms (SODA)}, pages 210--219, Philadelphia, PA, USA, 2001.

\bibitem{KKKP04}
Michal Katz, Nir~A Katz, Amos Korman, and David Peleg.
\newblock Labeling schemes for flow and connectivity.
\newblock {\em SIAM Journal on Computing}, 34(1):23--40, 2004.

\bibitem{KKK00}
David Kempe, Jon Kleinberg, and Amit Kumar.
\newblock Connectivity and inference problems for temporal networks.
\newblock In {\em Proceedings of the 32nd annual ACM symposium on Theory of
  computing (STOC)}, pages 504--513, 2000.

\bibitem{KMSV99}
Sanjeev Khanna, Rajeev Motwani, Madhu Sudan, and Umesh Vazirani.
\newblock On syntactic versus computational views of approximability.
\newblock {\em SIAM Journal on Computing}, 28(1):164--191, 1999.

\bibitem{KLO10}
Fabian Kuhn, Nancy Lynch, and Rotem Oshman.
\newblock Distributed computation in dynamic networks.
\newblock In {\em Proceedings of the 42nd ACM symposium on Theory of computing
  (STOC)}, pages 513--522, New York, NY, USA, 2010. ACM.

\bibitem{MMCS13}
George~B. Mertzios, Othon Michail, Ioannis Chatzigiannakis, and Paul~G.
  Spirakis.
\newblock Temporal network optimization subject to connectivity constraints.
\newblock In {\em Proceedings of the 40th International Colloquium on Automata,
  Languages, and Programming ({ICALP}), Part {II}}, pages 657--668, 2013.

\bibitem{Mi16}
Othon Michail.
\newblock An introduction to temporal graphs: An algorithmic perspective.
\newblock {\em Internet Mathematics}, 12(4):239--280, 2016.

\bibitem{MCS11-2}
Othon Michail, Ioannis Chatzigiannakis, and Paul~G. Spirakis.
\newblock Mediated population protocols.
\newblock {\em Theoretical Computer Science}, 412(22):2434--2450, May 2011.

\bibitem{MCS11}
Othon Michail, Ioannis Chatzigiannakis, and Paul~G. Spirakis.
\newblock {\em New Models for Population Protocols}.
\newblock Morgan \& Claypool, 2011.

\bibitem{MCS12b-journal}
Othon Michail, Ioannis Chatzigiannakis, and Paul~G. Spirakis.
\newblock Causality, influence, and computation in possibly disconnected
  synchronous dynamic networks.
\newblock {\em Journal of Parallel and Distributed Computing},
  74(1):2016--2026, 2014.

\bibitem{MR02}
Michael Molloy and Bruce Reed.
\newblock {\em Graph colouring and the probabilistic method}, volume~23.
\newblock Springer, 2002.

\bibitem{MSWZ07}
Matthias M{\"u}ller-Hannemann, Frank Schulz, Dorothea Wagner, and Christos~D
  Zaroliagis.
\newblock Timetable information: Models and algorithms.
\newblock {\em ATMOS}, 4359:67--90, 2007.

\bibitem{OW05}
Regina O'Dell and Roger Wattenhofer.
\newblock Information dissemination in highly dynamic graphs.
\newblock In {\em Proceedings of the 2005 joint workshop on Foundations of
  mobile computing (DIALM-POMC)}, pages 104--110, 2005.

\bibitem{Sc02}
Christian Scheideler.
\newblock Models and techniques for communication in dynamic networks.
\newblock In {\em Proceedings of the 19th Annual Symposium on Theoretical
  Aspects of Computer Science (STACS)}, pages 27--49, 2002.

\bibitem{SWW00}
Frank Schulz, Dorothea Wagner, and Karsten Weihe.
\newblock Dijkstra's algorithm on-line: an empirical case study from public
  railroad transport.
\newblock {\em Journal of Experimental Algorithmics (JEA)}, 5:12, 2000.

\bibitem{XFJ03}
B.B. Xuan, A.~Ferreira, and A.~Jarry.
\newblock Computing shortest, fastest, and foremost journeys in dynamic
  networks.
\newblock {\em International Journal of Foundations of Computer Science},
  14(02):267--285, 2003.

\end{thebibliography}
\end{document}